\documentclass{sn-jnl}

\usepackage{graphicx}
\usepackage{amsmath} 
\usepackage{amssymb}

\usepackage{lineno}

\usepackage{url}
\usepackage{hyperref}

\newcommand{\true}{\mathsf {true}}
\newcommand{\false}{\mathsf {false}}

\newcommand{\C}{{\cal C}}

\newcommand{\fold}{\mathsf{fold}}

\newcommand{\suff}{\mathsf{suff}}
\newcommand{\pref}{\mathsf{pref}}

\newcommand{\vars}{\mathsf{vars}}

\newcommand{\keyword}[1]{\mathsf{#1}}
\newcommand{\SKIP}{\keyword{skip}}

\newcommand{\ASG}[2]{#1 := #2}
\newcommand{\SEQ}[2]{#1\,{;}\,#2}
\newcommand{\IFNZ}[3]{\keyword{if} \,(#1)\, #2\,\keyword{else}\,#3}

\newcommand{\WHILE}[2]{\keyword{while} \,(#1)\, #2}

\newcommand{\APP}[2]{\mathrm{app} (#1,#2)}

\newcommand{\CLO}[3]{\mathrm{clo} (#1,#2,#3)}

\newcommand{\aof}{\mathsf{atom\_or\_foldable}}

\usepackage{listings} 
\usepackage{courier}
\def\ll{[\![}
\def\rr{]\!]}

\def\anno#1{{\ooalign{\hfil\raise.07ex\hbox{\small{\rm #1}}\hfil \crcr\mathhexbox20D}}}
        
\theoremstyle{thmstyleone}\newtheorem{theorem}{Theorem}

\newtheorem{lemma}[theorem]{Lemma}

\theoremstyle{thmstyletwo}\newtheorem{example}{Example}\newtheorem{remark}{Remark}

\theoremstyle{thmstylethree}\newtheorem{definition}{Definition}

\title[Big-step and small-step Horn clause derivations applied to operational semantics]{Big-step and small-step Horn clause derivations applied to operational semantics$^\dagger$}

\author*[1,2]{\fnm{John P.} \sur{Gallagher}}\email{jpg@ruc.dk}

\author[2,3]{\fnm{Manuel} \sur{Hermenegildo}}

\author[2,3]{\fnm{Jos\'e} \sur{Morales}}

\author[2,4]{\fnm{Pedro} \sur{Lopez-Garcia}}

\author[3,2]{\fnm{Louis} \sur{Rustenholz}}

\affil*[1]{\orgname{Roskilde University}, \country{Denmark}}

\affil[2]{\orgname{IMDEA Software Institute},  \city{Madrid},  \country{Spain}}

\affil[3]{\orgname{Universidad Polit\'ecnica de Madrid (UPM)},  \city{Madrid},  \country{Spain}}

\affil[4]{\orgname{Spanish Council for Scientific Research (CSIC)},  \city{Madrid},  \country{Spain}}

\begin{document}
\abstract{
The concepts of big-step and small-step derivations are familiar from the operational
semantics of programming languages.  These concepts are applicable in the
more general setting of Horn clause derivations.  We prove equivalence between
big-step derivations and two versions of small-step 
derivations for Horn clauses.
By specialising interpreters for these derivation strategies, 
any set of Horn clauses can be transformed into a provably equivalent set of clauses 
that inherits the behaviour of a given (big- or small-step) Horn clause interpreter.
As a special case of this transformation, big-step semantics for any programming language,
expressed directly as Horn clauses, can be transformed into
equivalent small-step semantics.
Experiments with a variety of programming languages are reported.
}
\maketitle

\keywords{Interpreter specialisation \and Operational semantics}

\pagestyle{myheadings}

{ \renewcommand{\thefootnote}{$^\dagger$}
  \footnotetext{Partially funded by MICIU projects CEX2024-001471-M
    María de Maeztu and TED2021-132464B-I00 PRODIGY.
}
}

\section{Introduction}\label{intro}
In this paper we explore the connections between Horn clause derivations
and the formal operational semantics of programming languages.
Horn clauses have rich connections to a range of formal computational topics. 
First identified by the logician Alfred Horn in 1951 \cite{Horn_1951}, they
have been studied as a computational formalism from which
 the constraint logic programming (CLP) 
family of languages emerged \cite{Green69,Colmerauer,Kowalski74,Tarnlund77,KornerLBCDHMWDA22}.
Other areas studied through the medium of Horn clauses
include databases (Datalog), context-free languages (definite clause grammars)
and automated verification  (constrained Horn clauses) \cite{BjornerGMR15}.

A close relationship between Horn clauses and operational semantics
was  identified many years ago \cite{Kahn87}.
This paper extends that connection, shedding light on the relationship between 
different operational semantics
for the same language, especially between big-step and small-step semantics, allowing
automatic translation between the two forms.

\paragraph{Outline of the paper} 
In Section \ref{prelim} the notation and essential concepts for Horn clause derivations are presented.
Section \ref{Horn} contains operational semantics rules defining three Horn clause derivations;  
capturing both classical Horn clause derivations and also a novel ``contextual small-step" derivation
strategy.  We also show that the conjunctions (or ``stack") of atomic goals that arise in small-step 
derivations can be folded into a single atomic goal, using auxiliary ``suffix" clauses.

In Section \ref{opsem} we present a version of the contextual small-step derivations
applied to a particular class of Horn clauses, namely big-step operational semantics rules for
programming languages. The particular features of predicates and clauses in such rules
are exploited to refine the derivation, especially the auxiliary clauses allowing the
stack to be folded.
Section \ref{specsem} outlines the well-established method of program transformation using
interpreter specialisation (the first Futamura projection). The transformation is applied to big-step 
semantics rules, resulting in small-step rules.  The transformation is first applied to big-step semantics
of simple
imperative and functional languages, and then a larger example of the transformation of the
semantics of the reversible programming language Janus is outlined. 
Finally in Section \ref{concl} we review the relation to existing work, and outline future work.
Details of the experiments
and proofs of propositions are in appendices.

\section{Preliminaries}\label{prelim}

A first-order language is defined by $\langle \Sigma, \Pi,  \cal{V}, \cal{C} \rangle$, where $\Sigma$ and $\Pi$ are
non-empty sets of function symbols
and predicate symbols respectively, each having an assigned arity,
$\cal{V}$ is a countable set of variables, and $\cal{C}$ is a finite set of logic connectives and quantifiers.

A \emph{term} is a variable $v \in \cal{V}$, or an expression $f(t_1,\ldots,t_n)$ where $f \in \Sigma$,
$n \ge 0$ is the arity of $f$ 
and $t_1,\ldots,t_n$  are terms.  We use the letters $s,t,\ldots$ (with subscripts) to denote terms.

An \emph{atomic formula} or simply \emph{atom} is of the form $p(t_1,\ldots,t_n)$, where $p \in \Pi$
and $n \ge 0$ is the arity of $p$.  If $n=0$ in either terms or atoms the argument brackets are omitted.
We use the letters $A, B, C \ldots$ possibly with subscripts to
denote atoms.

We consider the logical connectives and quantifiers $\cal{C} = \{\vee, \wedge, \neg, \rightarrow, \forall, \exists\}$, 
with their usual
interpretations.

Quantifiers will mostly be implicit for the purposes of this paper, as all 
variables in formulas will be considered
to be universally quantified unless otherwise specified.

A \emph{literal} is either $A$ or $\neg A$ where $A$ is an atom.  $A$ and $\neg A$ are called \emph{positive}
and \emph{negative} literals respectively.

A \emph{clause} is a disjunction $L_1 \vee \cdots \vee L_k$ ($k \ge 0)$ where $L_i$ are literals.

A \emph{Horn clause} is a clause containing at most one positive literal.
A Horn clause $A_0 \vee \neg A_1 \vee \cdots \vee \neg A_k$ ($k \ge 0$)
is often written as the logically equivalent formula
$(A_1 \wedge \cdots \wedge A_k) \rightarrow A_0$.  Sometimes we will write this in the logic programming
fashion with the arrow reversed, $A_0 \leftarrow (A_1 \wedge \cdots \wedge A_k)$, or simply 
$A_0 \leftarrow A_1,\ldots,A_k$.  $A_0$ is called the \emph{head} of the clause and $A_1,\ldots,A_k$
is called the \emph{body} of the clause.  A Horn clause $\leftarrow A_1,\ldots,A_k$ (that is, with no head)
is called a \emph{goal}, and is equivalent to $\neg \exists (A_1 \wedge \cdots \cdots \wedge A_k)$.

A \emph{substitution} is a set of pairs $v/t$ where $v \in \cal{V}$ and $t$ is a term.
 If $\theta$ is a substitution and $S$ is any syntactic object then
 $S\theta$ denotes the \emph{application} of $\theta$ to $S$, obtained by simultaneously replacing
 every occurrence of $v$ in $S$ by $t$, where $v/t \in \theta$.  The \emph{empty substitution} is denoted $\epsilon$.
  
 Given two syntactic objects $S_1$ and $S_2$, $S_1$ is an 
 \emph{instance} of $S_2$ if there exists a 
 substitution $\phi$ such that $S_1 = S_2\phi$. 
 $S_1 \cong S_2$ holds if $S_1$ and $S_2$ are instances of each other; that is, they are identical 
 modulo renaming of variables.
  A \emph{unifier} of $S_1$ and $S_2$
 is a substitution such that $S_1\theta = S_2\theta$.
 A \emph{most general unifier} of $S_1$ and $S_2$ (or $\mathtt{mgu}(S_1,S_2)$ for short)
 is a unifier $\theta$,
 such that for any other unifier $\theta'$, $S_1\theta'$ is an instance of $S_1\theta$.
 The unification algorithm computes such a substitution, if it exists.  The $\mathtt{mgu}$ of
 two objects is
 unique, modulo renaming of variables.
 
 The composition of two substitutions $\theta$ and $\phi$ is the substitution denoted $\theta  \phi$, where
 for all syntactic objects $S$, $S(\theta  \phi) = S\phi\theta$ (that is, first apply $\phi$, then $\theta$).
 
 \subsection{Horn clause derivations}
 
Let $P$ be a set of Horn clauses. The \emph{resolution} inference rule for Horn clauses
is defined as follows.
Let $A_0 \leftarrow A_1,\ldots,A_k$ and $B_0 \leftarrow B_1,\ldots,B_m$ be two clauses in $P$.
The two clauses are assumed to have no variable in common - variables in clauses can always be 
renamed apart if necessary. 
Suppose $\mathtt{mgu}(A_0,B_j) = \theta$. Then a \emph{resolvent} of
the clauses is a clause 
\[
(B_0 \leftarrow B_1,\ldots, B_{j-1},A_1,\ldots,A_k,B_{j+1},\ldots,B_m)\theta.
\]
\noindent
In the following we consider a restricted form in which the second clause is a goal, 
that is, $B_0$ is absent, and $j$=$1$, that is, the first literal
of the goal is selected for resolution,

Let $P$ be a set of Horn clauses and $G_0$ a goal. A resolution \emph{derivation} for $P \cup \{G_0\}$
is a sequence of goals 
$G_0,G_1,\ldots$
where each $G_i$ ($i \ge 1$) is obtained by resolving $G_{i-1}$ with a clause from $P$.

A resolution \emph{refutation} of a goal $G_0$ in 
$P$ is a finite derivation in which the final goal is the empty clause. For
simplicity we will use the word ``proof" instead of refutation, but to be accurate, a refutation
of a goal $\leftarrow B_1,\ldots,B_k$
is a proof that $\exists(B_1,\ldots,B_k)$ follows from $P$.

This is historically called SLD-resolution \cite{AptE82,Lloyd,Stark89} (Selective Linear Definite clause resolution)
where the selection rule picks the leftmost literal.

\begin{theorem}[Soundness and completeness of SLD resolution]\label{SLD}
Let $P$ be a set of Horn clauses and let $A$ be an atom. Let $\theta$ and $\phi$ be 
substitutions.
Then $A\theta$ is a logical consequence of $P$ if and only if there is an SLD-resolution proof 
for $P \cup \{\leftarrow A\}$, $\phi$ is the composition of the substitutions arising from the
$\mathtt{mgu}$ operation at each step in the proof, and $A\theta$ is an instance of $A\phi$.
\end{theorem}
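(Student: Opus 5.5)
The plan is to prove the two directions separately, following the classical route of Clark, Apt and van Emden, and Lloyd. The only semantic fact needed is that $P \models A\theta$ holds exactly when every ground instance of $A\theta$ lies in the least Herbrand model $M_P$. This is the standard characterisation of logical consequence for definite clauses. The minor technicality of possibly needing extra constants, when $\Sigma$ has none or when $A\theta$ contains variables, is handled by extending the signature with fresh constants.

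\emph{Soundness.} I will argue by induction on the length $n$ of the SLD proof of $\leftarrow A$. The claim to prove is stronger and concerns arbitrary goals: if $\leftarrow B_1,\ldots,B_k$ has an SLD refutation with computed composition $\phi$, then $P \models \forall (B_1\wedge\cdots\wedge B_k)\phi$.
\begin{itemize}
\item The base case is immediate, since the goal is empty.
\item In the inductive step, the first step resolves $B_1$ with a renamed clause $H \leftarrow C_1,\ldots,C_m$ using $\theta_1 = \mathtt{mgu}(H,B_1)$. The remaining refutation of $(C_1,\ldots,C_m,B_2,\ldots,B_k)\theta_1$ has length $n-1$ and yields $\phi'$ with $\phi = \phi'\theta_1$ in the paper's composition convention.
\item By the induction hypothesis, $P \models \forall (C_1,\ldots,C_m,B_2,\ldots,B_k)\theta_1\phi'$. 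The clause instance then gives $H\theta_1\phi' = B_1\theta_1\phi'$, so the whole conjunction $(B_1,\ldots,B_k)\phi$ follows.
\end{itemize}
Taking $k=1$ gives $P\models \forall A\phi$. Hence any instance $A\theta$ of $A\phi$ is a logical consequence of $P$.

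\emph{Completeness.} I will proceed in three steps. First, I show a ground version: if $A\sigma$ is ground and in $M_P = \mathrm{lfp}(T_P) = \bigcup_n T_P\uparrow n$, then $\leftarrow A\sigma$ has an SLD refutation. This goes by induction on the least $n$ with $A\sigma \in T_P\uparrow n$. Such an atom is the head of a ground clause instance whose body atoms lie in $T_P\uparrow (n-1)$. The refutations of the body atoms, given by the induction hypothesis, are concatenated, which requires a small lemma that refutations of the single atoms $B_i$ combine into a refutation of the conjunction $B_1,\ldots,B_m$ under the leftmost selection rule. Second, I apply the \emph{lifting lemma}: if $\leftarrow G\sigma$ has a refutation with mgus $\theta_1,\ldots,\theta_n$, then $\leftarrow G$ has a refutation of the same length, using the same clauses, with mgus $\theta'_1,\ldots,\theta'_n$, such that $G\sigma\theta_n\cdots\theta_1$ is an instance of $G\theta'_n\cdots\theta'_1$. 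Its proof is by induction on the length, using that a unifier of $H$ and $B_1\sigma$ induces a unifier of $H$ and $B_1$, which factors through their mgu. Third, to handle a non-ground $A\theta$, I replace its variables with fresh Skolem constants to obtain a ground atom $A\theta\gamma$ in the least model of the extended language. The first two steps then give a refutation of $\leftarrow A$ with $A\theta\gamma$ an instance of $A\phi$. Since the constants in $\gamma$ occur neither in $P$ nor in the mgus, replacing them back by variables shows that $A\theta$ itself is an instance of $A\phi$.

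I expect the main obstacle to be the lifting lemma together with the bookkeeping of variable renaming. Clauses must be renamed apart from both $G$ and $\sigma$, and the mgu factorisation must be kept compatible with the reversed composition convention $S(\theta\phi)=S\phi\theta$. I would first prove a one-step lifting lemma and then iterate it. Since this theorem is classical, the paper could also simply cite \cite{Lloyd,AptE82,Stark89}.
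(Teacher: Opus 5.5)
The paper gives no proof of this theorem. It states it as the classical result and defers to Apt--van Emden, Lloyd and St{\"a}rk, and your sketch (soundness by induction on refutation length; completeness via $M_P = T_P\uparrow\omega$, the lifting lemma and fresh Skolem constants) is the standard Apt--van Emden/Lloyd argument those citations contain. It is correct and takes essentially the paper's route.

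There are two small points to tighten. First, in your ground step, resolving $\leftarrow A\sigma$ against the clause with an mgu $\mu$ gives the resolvent $(B_1,\ldots,B_m)\mu$, which can be strictly more general than the ground body instance you refuted by concatenation. So your lifting lemma (or Lloyd's mgu lemma for unrestricted refutations) must already be used inside that induction. Second, $P$ must tacitly consist of definite clauses: a headless clause in $P$ could make $P$ inconsistent and break the ``only if'' direction.
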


\subsection{Alternative formulation without explicit substitutions}

We write $A_0 \leftarrow A_1,\ldots,A_k \in [P]$ to mean that $A_0 \leftarrow A_1,\ldots,A_k$
is an instance of a clause in $P$.
Let $\leftarrow B_1,\ldots,B_m$ be a goal. Then $A_1,\ldots,A_k,B_2,\ldots,B_m$ is a resolvent
of the goal with (an instance of) the clause if $B_1 \leftarrow A_1,\ldots,A_k \in [P]$ (resolving on the leftmost
atom $B_1)$. 

The definition of a resolution derivation can then be expressed as follows:  
a resolution \emph{derivation} for $P \cup \{G_0\}$
is a sequence of goals 
$G_0,G_1,\ldots$
where each $G_i$ ($i \ge 1$) is a resolvent of $G_{i-1}$ with an instance of a clause from $P$.
The correctness theorem is restated as follows.

\begin{theorem}[Soundness and completeness of SLD resolution (without explicit substitutions)]\label{SLD2}
Let $P$ be a set of Horn clauses and let $A$ be an atom. Then $A\theta$ is a logical consequence of 
$P$ if and only if there is an SLD-resolution proof (without explicit substitutions)
for $P \cup \{\leftarrow A\theta\}$.
\end{theorem}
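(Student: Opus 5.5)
We derive Theorem~\ref{SLD2} from Theorem~\ref{SLD} by translating between proofs with explicit substitutions and proofs that use clause instances directly. Write $\leftarrow B_1,\ldots,B_m \Rightarrow A_1,\ldots,A_k,B_2,\ldots,B_m$ for a resolution step without explicit substitutions. Such a step uses $B_1 \leftarrow A_1,\ldots,A_k \in [P]$.

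\emph{Soundness} ($\Leftarrow$) needs no appeal to Theorem~\ref{SLD}. We argue by induction on the length of a proof $G_0=\leftarrow A\theta, G_1,\ldots,G_n=\Box$ that every atom in every $G_i$, under every ground instantiation, is a logical consequence of $P$. The last goal is empty, so the claim holds there vacuously. For the inductive step, suppose $G_{i+1}$ is obtained from $G_i=\leftarrow B_1,\ldots,B_m$ using the instance $B_1\leftarrow A_1,\ldots,A_k$. Take a ground instantiation $\sigma$ of $G_i$, and extend it to ground any variables that occur in the $A_j$ but not in $G_i$. By the induction hypothesis each $A_j\sigma$ and each $B_l\sigma$ with $l\ge 2$ is a consequence of $P$. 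Since $(B_1\leftarrow A_1,\ldots,A_k)\sigma$ is an instance of a clause of $P$, $B_1\sigma$ is a consequence too. Taking $i=0$ shows that every ground instance of $A\theta$ follows from $P$, and hence so does $A\theta$ with its variables universally quantified.

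\emph{Completeness} ($\Rightarrow$) runs through Theorem~\ref{SLD}. Suppose $A\theta$ is a logical consequence of $P$. Then it is also a consequence for the goal $\leftarrow A\theta$ itself, taking the identity substitution. So Theorem~\ref{SLD} gives an SLD proof of $P\cup\{\leftarrow A\theta\}$ with mgus $\theta_1,\ldots,\theta_n$, clause variants $C_1,\ldots,C_n$, and composed answer $\phi=\theta_1\cdots\theta_n$, such that $A\theta$ is an instance of $A\theta\phi$. Let $\psi$ be the full composition of the mgus, applied to all variables, including those of the renamed clauses. We then apply $\psi$ uniformly. Write $G_i$ for the $i$-th goal of the original proof, and put $G'_i=G_i(\theta_{i+1}\cdots\theta_n)$, so that each goal carries the remaining mgus. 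The new derivation is $G'_0,\ldots,G'_n$, with step $i$ using the clause instance $C_i\psi$. At step $i$, the selected atom $B$ satisfies $B\theta_i = \mathrm{head}(C_i)\theta_i$. After applying the later mgus, the head of $C_i\psi$ is syntactically identical to the selected atom of $G'_{i-1}$, and the body of the resolvent matches $G'_i$. So the new sequence is a legitimate derivation without explicit substitutions, ending in the empty goal.

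The main obstacle is that $G'_0=(\leftarrow A\theta)\phi$, which need not be $\leftarrow A\theta$ itself. We have only that $A\theta$ is an instance of $A\theta\phi$, and it remains to show the two are equal up to renaming. The fix is a standard property that is part of Theorem~\ref{SLD}: the computed answer $\phi$ can always be chosen idempotent and restricted to $\vars(A\theta)$. Then $A\theta\phi$ is an instance of $A\theta$, and since $A\theta$ is also an instance of $A\theta\phi$, we get $A\theta\phi\cong A\theta$. Composing $\psi$ with the renaming that carries $A\theta\phi$ back to $A\theta$ yields a derivation starting exactly from $\leftarrow A\theta$. I would additionally check that this renaming does not clash with variables introduced by the clause variants, renaming those apart first if needed.
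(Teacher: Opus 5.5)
Your soundness half has a step that fails as written. You show that every ground instance of $A\theta$ is a consequence of $P$, and then conclude that $\forall(A\theta)$ is; that inference is not valid in general. Take a signature whose only function symbol is the constant $a$, and $P=\{p(a)\leftarrow\}$. Every ground instance of $p(x)$ follows from $P$, yet $\forall x\,p(x)$ fails in the model with domain $\{1,2\}$, $a\mapsto 1$ and $p$ interpreted as $\{1\}$. Your extension of $\sigma$ to the body-only variables also presupposes that ground terms exist. The repair is to run the same backward induction semantically. Claim that for every model $I$ of $P$ and every (total) variable assignment $\nu$, $I,\nu\models C$ for each atom $C$ of $G_i$. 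The inductive step is unchanged and needs no extension step, because $B_1\leftarrow A_1,\ldots,A_k\in[P]$ holds in $I$ under every assignment. At $i=0$ the claim is exactly $P\models\forall(A\theta)$. Alternatively, ground the variables of $A\theta$ with fresh constants not occurring in $P$ and use the theorem on constants.

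Your completeness half is correct. The paper gives no proof of this theorem; it presents it as a reformulation of Theorem~\ref{SLD}, in the style of St\"ark's implication-tree argument. Your explicit reduction, which pushes the remaining mgus into the goals and uses the instances $C_i\psi$, is therefore a genuinely different and more transparent route. With the variants standardised apart, it yields a substitution-free refutation of $\leftarrow A\theta\phi$. Here $\theta_1\cdots\theta_n$ must be read as ``apply $\theta_1$ first'', the opposite of the paper's composition convention.

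Your final ``obstacle'', however, is illusory. Idempotence is not part of Theorem~\ref{SLD} as stated, and it is not needed, since $A\theta\phi$ is trivially an instance of $A\theta$. No renaming apart is needed either. Theorem~\ref{SLD} directly supplies some $\sigma$ with $A\theta\phi\sigma=A\theta$. Substitution-free derivations are closed under applying an arbitrary substitution to every goal and every clause instance: an instance of a member of $[P]$ is again in $[P]$, resolvents commute with substitution, and this formulation imposes no freshness condition. Applying $\sigma$ to your refutation therefore finishes the proof, and no clash can arise.
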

This formulation is similar to that used by St{\"a}rk \cite{Stark89} in the proof of soundness and 
completeness of SLD-resolution, except that St{\"a}rk formulated a derivation in terms of ``implication-trees"
(St{\"a}rk's term for AND-trees)
rather than derivation sequences.  It may seem odd that this formulation requires the ``answer" $\theta$
to be known before constructing a run.  In practice,
given an atom $A$, an instance $A\theta$ that is a logical consequence of a set of clauses $P$
is constructed during the derivation; $A$ is successively instantiated
at each step of the derivation.
We return in more detail to this style of proof in 
Section \ref{metaprog}.

In the following section we formulate operational semantics for Horn clause derivations
without explicit substitutions.

\section{Small-step and big-step Horn clause derivations}\label{Horn}

SLD-derivations can be structured in different ways.
We start by reformulating the notion of a derivation in the style of small-step operational semantics.

\subsection{SLD small-step Horn clause derivations}
A small-step semantics consists of three components:
\begin{enumerate}
\item
A set of \emph{configurations}, a non-empty subset of which is designated as \emph{terminal} configurations.
\item
A \emph{transition} relation $C \Rightarrow C'$, where $C,C'$ are configurations.
\item
The transitive closure of the transition relation $\Rightarrow\!\! *$. There is a \emph{run} from $C$ to $C'$ iff
$C \,{\Rightarrow}\!\! *\, C'$, and the run is \emph{successful} if $C'$ is a terminal configuration.
\end{enumerate}

For Horn clause derivations, a configuration is a goal $G$.
The goal $ \leftarrow \true$ is the terminal configuration.
For conciseness, we omit the $\leftarrow$ in goals; for example, we write $B_1,B_2$
instead of $\leftarrow B_1,B_2$.

The letter $R$ with subscripts represents a (possibly empty) conjunction of atoms;  thus $B,R$ represents the 
conjunction of which the atom $B$ is the first conjunct and $R$ is a conjunction. We can also write $R_1,R_2$ to represent the conjunction $A_1,\ldots,A_n,B_1,\ldots,B_m$, 
where $R_1 = A_1,\ldots,A_n$ and $R_2 = B_1,\ldots,B_m$, since the
conjunction connective is associative.  We also note that $(\true,R)$ is the same as $R$.

A run can be defined by the following rules 
where $R, R', R''$ are configurations and the arrow $\Rightarrow$ is any of the small-step transition arrows that
we present later ($\stackrel{1}{\Rightarrow}, \stackrel{2}{\Rightarrow}, \stackrel{3}{\Rightarrow}$).
\[
\begin{array}{ll}\label{rules-run}
\mathtt{[RUN0]}~~\dfrac{}
 {R \,{\Rightarrow}\!\! * \,R}~~~~~~&
 \mathtt{[RUN1]}~~\dfrac{R {\Rightarrow} \,R'~~~ R' {\Rightarrow}\!\! * \,R''}
 {R \,{\Rightarrow}\!\! * \, R''}
 \end{array}
\]

A transition for the SLD version of small-step derivations (given a set of Horn clauses $P$) is of the form
$G_0  \stackrel{1}{\Rightarrow} G_1$,
defined by the following small-step rule.
\[
\begin{array}{l}
\mathtt{[RES]}~~\dfrac{}
 {(B,R_1) \stackrel{1}{\Rightarrow} (R,R_1)}
 ~~~~\text{where } B \leftarrow R  \in [P]
 \end{array}
 \]

This, together with the definition of a run, is just a reformulation of an SLD-derivation 
(without explicit substitutions), so Theorem \ref{SLD2}
can be reformulated to state that $A\theta$ is a consequence of the given set of Horn clauses $P$
if and only if 
there exists a (successful) run
$A\theta  \,\stackrel{1}{\Rightarrow}\!\! * ~\true$  using the rules above.

 \subsection{Contextual small-step Horn clause derivations}

In the second version of the small-step derivation
a small step ends with a resolution step
using a clause with empty body, that is $A \leftarrow$, usually written as 
$A \leftarrow \true$.  Resolution with such a clause shortens the goal and represents finishing   
proving one conjunct and moving on to the next.

The transition relation $\stackrel{2}{\Rightarrow}$ is defined recursively using two rules for atoms, and two for conjunctions, shown in
Figure \ref{fig-ss-contextual}.

\begin{figure}
\[
\begin{array}{c}
\begin{array}{l}
\mathtt{[RES0]}~~\dfrac{}
 {B \stackrel{2}{\Rightarrow}\,\true}
 ~~~\text{where }B \leftarrow \true  \in [P]\\
\\
\mathtt{[RES1]}~~\dfrac{R  \stackrel{2}{\Rightarrow} \,R_1}
 {B \stackrel{2}{\Rightarrow} \,R_1}
 ~~~\text{where } B \leftarrow R  \in [P], R\neq \true\\
\end{array}\\
\\

\\
\begin{array}{ll}
\mathtt{[CONJ0]}~~ \dfrac{R_1 \stackrel{2}{\Rightarrow}  \true}
 {(R_1 \wedge R_2)\stackrel{2}{\Rightarrow} R_2}~~~~~~~&
 \mathtt{[CONJ1]}~~\dfrac{R_1 \stackrel{2}{\Rightarrow} R'_1}
 {(R_1 \wedge R_2) \stackrel{2}{\Rightarrow} \,(R'_1 \wedge R_2)}
  ~~~\text{where } R'_1\neq \true\\
 \end{array}\\
\end{array}
\]
\caption{Contextual small-step rules for Horn clauses}\label{fig-ss-contextual}
\end{figure}
The rules defining a run are $\mathtt{[RUN0]}$ and $\mathtt{[RUN1]}$ given above with $\Rightarrow\!\!\! *$
replaced by $\stackrel{2}{\Rightarrow\!\! *}$.
The form of $\mathtt{[CONJ0]}$ and
$\mathtt{[CONJ1]}$  is familiar as the pattern for 
``context rules" in small-step operational semantics.  Given a
compound structure (in this case a conjunction $(R_1 \wedge R_2)$), we 
make a small step on the first component of the
structure, $R_1$.  If $R_1$ is reduced to $\true$ in one small step, then move
on to $R_2$, using $\mathtt{[CONJ0]}$. Otherwise, let the result of a
small step on $R_1$ be $R'_1$, and continue with $(R'_1 \wedge R_2)$ using $\mathtt{[CONJ1]}$.

\begin{remark}\label{remark1}
For conjunctions in SLD derivations, the comma is used as the conjunction symbol
to indicate that a conjunction is regarded as a list of atoms and 
the derivation rule for $\stackrel{1}{\Rightarrow}$ assumes associativity of the comma symbol.
However, for contextual small-step derivations, and big-step derivations, conjunctions
are trees built using the $\wedge$
symbol;  and the rules for $\stackrel{2}{\Rightarrow}$ (Figure \ref{fig-ss-contextual}) and 
$\Downarrow$ (Section \ref{bigstep}) do not assume its associativity.  
\end{remark}

\begin{example}\label{ex1}
The following small example illustrates the difference between the two versions of small-step semantics.
Let $P$ be the set of (propositional) Horn clauses given below.

\[
\begin{array}{llll}
a \leftarrow b,c ~~~~~~~~~~~~~~~~	& d \leftarrow ~~~~~~~~~~~~~& e\leftarrow~~~~~~~~~~&g \leftarrow 	 \\

b \leftarrow d,e					& c \leftarrow f,g		& f \leftarrow&\\

\end{array}
\]
We construct a run starting from $a$.  
Using SLD small-step rules, the run is
\[
a \stackrel{1}{\Rightarrow} (b,c) \stackrel{1}{\Rightarrow} (d,e,c) \stackrel{1}{\Rightarrow} (e,c) \stackrel{1}{\Rightarrow} c \stackrel{1}{\Rightarrow}  (f,g) \stackrel{1}{\Rightarrow} g \stackrel{1}{\Rightarrow} \true
\]
Using the contextual small-step version, each transition ends with a resolution with a clause with empty body, so we get:
\[
a \stackrel{2}{\Rightarrow}  (e \wedge c) \stackrel{2}{\Rightarrow} c \stackrel{2}{\Rightarrow}  g \stackrel{2}{\Rightarrow} \true
\]

\end{example}

A run $R\stackrel{1}{\Rightarrow}\!\!*  ~R'$ is \emph{expanding} if no clause of the form
$B \leftarrow \true$ is used in the derivation.
We write
$R\stackrel{1}{\Rightarrow}_e\!\!*  ~R'$ for an expanding run.

\begin{lemma}\label{expanding-run}
For all expanding runs $R\stackrel{1}{\Rightarrow}_e\!\!*~R'$ where $R\neq \true$,
there exist $B,S,S'$ such that $R=(B,S)$ and $R'=(S',S)$. 
\end{lemma}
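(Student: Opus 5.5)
The proof is by induction on the length of the expanding run $R\stackrel{1}{\Rightarrow}_e\!\!*~R'$, that is, on the number of applications of $\mathtt{[RUN1]}$ in its derivation. To make the induction go through, I will prove a slightly strengthened statement. Given $R\neq\true$, write $R=(B,S)$ with $B$ the leftmost atom of $R$ and $S$ the (possibly empty) rest. I show that every expanding run from $(B,S)$ ends in a configuration of the form $(S',S)$ with $S'\neq\true$. Since $R\neq\true$, it has a first atom, so the decomposition $R=(B,S)$ is always available, and it stays fixed throughout the run.

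\emph{Base case ($\mathtt{[RUN0]}$).} Here $R'=R=(B,S)$. Taking $S'=B$, which is an atom and hence not $\true$, gives the claim.

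\emph{Inductive step ($\mathtt{[RUN1]}$).} Suppose the run is $R\stackrel{1}{\Rightarrow}R_1\stackrel{1}{\Rightarrow}\!\!*~R'$, with both parts expanding. I will apply the induction hypothesis to the shorter suffix run, which needs it to start from a configuration $\neq\true$. Rather than inducting on the first step, I induct on the last step: assume $R\stackrel{1}{\Rightarrow}_e\!\!*~R''$ and $R''\stackrel{1}{\Rightarrow}R'$ via a clause instance $C\leftarrow Q\in[P]$ with $Q\neq\true$. This is legitimate because a run is a sequence of transitions, so splitting off the final step gives an equivalent inductive characterisation to $\mathtt{[RUN1]}$. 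By the hypothesis, $R''=(S'',S)$ with $S''\neq\true$. Write $S''=(C,T)$, where $C$ is the leftmost atom of $S''$. The step $\mathtt{[RES]}$ rewrites $(C,T,S)$ to $(Q,T,S)$. Setting $S'=(Q,T)$ gives $R'=(S',S)$, and $S'\neq\true$ because $Q\neq\true$.

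The key point is that an expanding step never uses a unit clause, so the goal never loses its leftmost atom without replacing it by a nonempty body. Consequently the part derived from $B$ can never be consumed, and the suffix $S$ is never reached. The only real obstacle is this bookkeeping: the strengthening $S'\neq\true$ is essential, since without it the inductive step cannot guarantee that the leftmost atom of $R''$ lies inside $S''$ rather than $S$. Associativity of the comma, as in Remark~\ref{remark1}, justifies regrouping $(C,T,S)$ as $((C,T),S)$ and $(Q,T,S)$ as $((Q,T),S)$.
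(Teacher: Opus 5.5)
Your proof is correct, but it is organised differently from the paper's.

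The paper inducts along $\mathtt{[RUN1]}$ by peeling off the \emph{first} step. From $(B,S)$, an expanding step uses some $B\leftarrow (B',S')\in[P]$ with nonempty body, giving $(B',S',S)\neq\true$. The induction hypothesis is then applied to this new goal with the \emph{longer} tail $(S',S)$, yielding $R'=(S'',S',S)=((S'',S'),S)$. So your worry that the suffix run might start from $\true$ does not arise, since a non-unit clause never empties the goal. No strengthening is needed there either: the tail is simply carried inside the universally quantified statement.

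You instead keep the tail $S$ fixed and peel off the \emph{last} step. This forces you to add the invariant $S'\neq\true$, so that the final resolution hits an atom in the derived part rather than in $S$. It also makes you rely on re-reading runs as ``run followed by one step'', which the paper avoids by following $\mathtt{[RUN1]}$ directly.

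What your route buys is the explicitly stronger conclusion $S'\neq\true$. This is exactly what the paper later uses tacitly in the proof of Lemma~\ref{opsem1}, when it writes the end of an expanding run from $(B',S')$ as $(B'',S'',S')$. The paper's statement does not record this nonemptiness, although its proof does produce it ($S'=B$ in the base case, $(S'',S')$ in the step).
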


\begin{lemma}\label{opsem1-lemma}
For all conjunctions $R, R',$ and  $R''$, where $R\neq\true$,
$R\stackrel{1}{\Rightarrow}_e\!\!*  ~R'$  if and only if
$(R,R'') \stackrel{1}{\Rightarrow}_e \!\! * ~(R',R'')$, and both runs have the same height.

\end{lemma}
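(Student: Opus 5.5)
We proceed by induction on the height of the run, i.e.\ the number of $\stackrel{1}{\Rightarrow}$ transitions, using the rules $\mathtt{[RUN0]}$ and $\mathtt{[RUN1]}$. Each step in an expanding run is an application of $\mathtt{[RES]}$ with a clause $B \leftarrow S$ where $S \neq \true$. Such a step rewrites only the leftmost atom of the current goal and treats the rest of the goal as an untouched suffix $R_1$. The idea is to show that appending $R''$ to the right of every goal in the run yields another valid run, and conversely that $R''$ can be removed from every goal.

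For the \emph{only if} direction, we do induction on the height $n$ of $R\stackrel{1}{\Rightarrow}_e\!\!*~R'$. If $n=0$, then $R'=R$, and $(R,R'')\stackrel{1}{\Rightarrow}_e\!\!*~(R,R'')$ by $\mathtt{[RUN0]}$, also of height $0$. If $n>0$, the run begins $R \stackrel{1}{\Rightarrow} R_0$ and continues with $R_0 \stackrel{1}{\Rightarrow}_e\!\!*~R'$ of height $n-1$. Since $R\neq\true$, we can write $R=(B,R_1)$, and then $R_0=(S,R_1)$ for some $B\leftarrow S\in[P]$ with $S\neq\true$. By associativity of the comma we have $(R,R'')=(B,(R_1,R''))$. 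Applying $\mathtt{[RES]}$ with suffix $(R_1,R'')$ gives $(R,R'')\stackrel{1}{\Rightarrow}(S,R_1,R'')=(R_0,R'')$, and this step is still expanding. Because $S\neq\true$, we also have $R_0\neq\true$, so the induction hypothesis applies to the remaining run. Combining the step and the remaining run with $\mathtt{[RUN1]}$ gives a run of height $n$.

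For the \emph{if} direction, we again do induction on the height $n$ of $(R,R'')\stackrel{1}{\Rightarrow}_e\!\!*~(R',R'')$. Lemma \ref{expanding-run} is useful here to fix the shape of the goals. If $n>0$, write $R=(B,R_1)$. The first step must resolve the leftmost atom $B$ of $(R,R'')$, since $R\neq\true$. It therefore produces $(S,R_1,R'')$ with $S\neq\true$, and we can write this as $(R_0,R'')$ where $R_0=(S,R_1)$. The same $\mathtt{[RES]}$ instance gives $R\stackrel{1}{\Rightarrow}R_0$, and the induction hypothesis on the remaining run of height $n-1$ gives $R_0\stackrel{1}{\Rightarrow}_e\!\!*~R'$ of that height. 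The base case $n=0$ needs care: here $(R,R'')=(R',R'')$ as conjunctions, and since conjunctions are lists of atoms (Remark \ref{remark1}), cancelling the common suffix $R''$ gives $R=R'$.

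The main obstacle is exactly this cancellation in the \emph{if} direction. The target $R'$ is given in advance, and we need to know that when the run reaches the goal $(R',R'')$ we can split off the suffix $R''$. In the inductive step, the intermediate goals already have the form $(R_0,R'')$, so the recursion carries the suffix along correctly. At the end we use that list concatenation is right-cancellative to conclude $R_0=R'$. Expanding runs never shorten a goal, and Lemma \ref{expanding-run} guarantees the original suffix is preserved throughout. Together these ensure $R''$ is never consumed, so every goal in the run has the form $(X,R'')$ with $X\neq\true$, and the inductive hypothesis can always be applied.
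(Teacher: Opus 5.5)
Your proof is correct and follows essentially the same route as the paper: induction on the height of the run, using the fact that an expanding $\mathtt{[RES]}$ step rewrites only the leftmost atom and leaves any trailing conjunction untouched, so $R''$ can be appended or removed one step at a time. Your treatment of the converse direction is more careful than the paper's, which does not spell out why the first step must act on the leftmost atom of $R$ or the right-cancellation of $R''$ in the base case.
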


\begin{lemma}\label{opsem1}
There is a contextual small-step transition $R \stackrel{2}{\Rightarrow} R'$ if and only if  there is a run $R \stackrel{1}{\Rightarrow}_e\!\!*  ~(B,R' )\stackrel{1}{\Rightarrow} R'$, that is, an expanding run followed by a step using a clause $B\leftarrow \true$.
\end{lemma}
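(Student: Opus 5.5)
We prove both directions by induction on the height of the derivation of $R \stackrel{2}{\Rightarrow} R'$ for the ``only if'' part, and on the length of the expanding run for the ``if'' part. We use Lemma \ref{opsem1-lemma} to move between runs on a conjunct and runs on the whole conjunction. Throughout, we identify a $\wedge$-tree with the flattened comma-list it denotes when passing to $\stackrel{1}{\Rightarrow}$, since that relation assumes associativity (Remark \ref{remark1}).

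\emph{Only if.} We case-split on the last rule used.
\begin{itemize}
\item For $\mathtt{[RES0]}$, $R=B$ with $B\leftarrow\true\in[P]$ and $R'=\true$. The empty expanding run $B\stackrel{1}{\Rightarrow}_e\!\!*\,(B,\true)$ followed by $\mathtt{[RES]}$ gives the claim.
\item For $\mathtt{[RES1]}$, $R=B_0$ with $B_0\leftarrow S\in[P]$, $S\neq\true$, and $S\stackrel{2}{\Rightarrow}R'$. By induction, $S\stackrel{1}{\Rightarrow}_e\!\!*\,(B,R')\stackrel{1}{\Rightarrow}R'$. We prefix the step $B_0\stackrel{1}{\Rightarrow}S$. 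This step is expanding because $S\neq\true$.
\item For $\mathtt{[CONJ1]}$, $R=R_1\wedge R_2$ and $R_1\stackrel{2}{\Rightarrow}R_1'$. By induction, $R_1\stackrel{1}{\Rightarrow}_e\!\!*\,(B,R_1')\stackrel{1}{\Rightarrow}R_1'$. Lemma \ref{opsem1-lemma} with $R''=R_2$ lifts the expanding part to $(R_1,R_2)\stackrel{1}{\Rightarrow}_e\!\!*\,(B,R_1',R_2)$. The final $\mathtt{[RES]}$ step with $B\leftarrow\true$ then gives $(R_1',R_2)$.
\item $\mathtt{[CONJ0]}$ is the same with $R_1'=\true$.
\end{itemize}

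\emph{If.} Suppose $R\stackrel{1}{\Rightarrow}_e\!\!*\,(B,R')\stackrel{1}{\Rightarrow}R'$ with $B\leftarrow\true\in[P]$. We induct on the length of the expanding run and on the structure of $R$.
\begin{itemize}
\item If $R$ is an atom $A$ and the run is empty, then $A=B$ and $R'=\true$, so $\mathtt{[RES0]}$ applies.
\item If $R$ is an atom $A$ and the run is nonempty, its first step uses $A\leftarrow S\in[P]$ with $S\neq\true$, and the remainder is $S\stackrel{1}{\Rightarrow}_e\!\!*\,(B,R')$. Induction gives $S\stackrel{2}{\Rightarrow}R'$, and $\mathtt{[RES1]}$ gives $A\stackrel{2}{\Rightarrow}R'$.
\item If $R=R_1\wedge R_2$, we need to split the run.
\end{itemize}

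\emph{The conjunction case.} Here Lemma \ref{expanding-run} applied to $R_1$ alone is not directly usable. We must show that the expanding run from $(R_1,R_2)$ never touches $R_2$ before $R_1$ is finished. Since the run is expanding, it never removes atoms. Every step therefore rewrites an atom that originates from $R_1$ while $R_1$'s descendants are nonempty, and these are always nonempty in an expanding run. Hence the run has the form $(S_i,R_2)$ throughout, and by Lemma \ref{opsem1-lemma} it projects to an expanding run $R_1\stackrel{1}{\Rightarrow}_e\!\!*\,S$ with $(S,R_2)=(B,R')$. The length of the run is unchanged.

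We then split on what $S$ is.
\begin{itemize}
\item If $S=B$, then $R'=R_2$. Induction gives $R_1\stackrel{2}{\Rightarrow}\true$, and $\mathtt{[CONJ0]}$ applies.
\item Otherwise $S=(B,R_1')$ with $R_1'\neq\true$ and $R'=(R_1',R_2)$. Induction gives $R_1\stackrel{2}{\Rightarrow}R_1'$, and $\mathtt{[CONJ1]}$ applies.
\end{itemize}

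This decomposition is the main obstacle. The delicate point is the interplay between the flat comma-list and the $\wedge$-tree structure: $R_1'\wedge R_2$ must be read back correctly as a tree. We also need the fact that expanding runs never shrink a nonempty prefix, which we would justify via Lemma \ref{expanding-run}.
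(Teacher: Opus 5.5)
Your proof is correct. The ``only if'' direction is the same as the paper's: induction on the height of the $\stackrel{2}{\Rightarrow}$ derivation, with Lemma~\ref{opsem1-lemma} lifting the expanding part in the $\mathtt{[CONJ0]}$ and $\mathtt{[CONJ1]}$ cases.

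The ``if'' direction takes a different route. The paper inducts on run height alone and writes the goal as an atom-first conjunction $B\wedge S'''$. It uses Lemma~\ref{expanding-run} to fix the shape of the final goal, then peels off the first resolution step with a clause $B\leftarrow B',R''$. Lemma~\ref{opsem1-lemma} then strips the contexts $R''$ and $S'''$, leaving a run from the single atom $B'$. The transition is rebuilt by $\mathtt{[CONJ1]}$, $\mathtt{[RES1]}$, $\mathtt{[CONJ1]}$.

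You instead induct lexicographically on run length and then on the $\wedge$-tree structure of $R$, which separates the two concerns. The conjunction case projects the run onto the left conjunct at unchanged length. The atom case peels off one step and recurses on the whole body tree.

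The paper's argument is shorter, but it has two limitations. It only covers goals whose left conjunct is an atom. It also glosses over the degenerate cases ($S''=\true$, $R''=\true$, $S'''=\true$), where $\mathtt{[CONJ0]}$ or no conjunction rule is needed rather than $\mathtt{[CONJ1]}$. Your version covers arbitrary $\wedge$-trees, including the left-nested stacks $((R_1\wedge R_2)\wedge\cdots)$ that $\stackrel{2}{\Rightarrow}$ actually produces. It also handles the degenerate cases uniformly through your final split $S=B$ versus $S=(B,R_1')$. The price is that you should state the measure explicitly (length first, then structure), because the run length does not decrease in the conjunction case.

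The step you flag as delicate is in fact immediate. Write $R_1=(B_0,T)$ and apply Lemma~\ref{expanding-run} to the whole goal $(R_1,R_2)$. It gives the endpoint as $(S',T,R_2)$, which is exactly the form Lemma~\ref{opsem1-lemma} needs, so no separate ``never touches $R_2$'' argument is required.
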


\begin{theorem}\label{opsem2}
For a given set of Horn clauses, $R \stackrel{1}{\Rightarrow}\!\!* ~\true$ 
if and only if 
$R \stackrel{2}{\Rightarrow}\!\!*  ~\true $.
\end{theorem}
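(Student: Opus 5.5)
The plan is to reduce Theorem \ref{opsem2} to Lemma \ref{opsem1}, which says that a single contextual step is exactly an expanding SLD run followed by one step with a unit clause $B\leftarrow\true$. Both directions then go by induction on the length of runs.

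\emph{If direction.} Suppose $R \stackrel{2}{\Rightarrow}\!\!*\,\true$. We induct on the number of $\stackrel{2}{\Rightarrow}$ steps, using $\mathtt{[RUN0]}$ and $\mathtt{[RUN1]}$. With zero steps, $R=\true$ and $\mathtt{[RUN0]}$ gives the SLD run at once. Otherwise the run is $R \stackrel{2}{\Rightarrow} R' \stackrel{2}{\Rightarrow}\!\!*\,\true$. Lemma \ref{opsem1} turns the first step into $R \stackrel{1}{\Rightarrow}_e\!\!*\,(B,R') \stackrel{1}{\Rightarrow} R'$. The induction hypothesis gives $R' \stackrel{1}{\Rightarrow}\!\!*\,\true$. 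Concatenating these runs is a routine induction on $\mathtt{[RUN1]}$ (transitivity of $\stackrel{1}{\Rightarrow}\!\!*$), and it yields $R \stackrel{1}{\Rightarrow}\!\!*\,\true$.

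\emph{Only-if direction.} Suppose $R \stackrel{1}{\Rightarrow}\!\!*\,\true$. We induct on the number of unit-clause steps in the run, or equivalently on its length. If $R=\true$ we are done. Otherwise the run must contain at least one step using a clause $B\leftarrow\true$, because expanding steps never shorten the goal to $\true$: a step with a non-empty body cannot produce the empty conjunction. Split the run at the first such step. It then has the form
\[
R \stackrel{1}{\Rightarrow}_e\!\!*\,(B,R') \stackrel{1}{\Rightarrow} R' \stackrel{1}{\Rightarrow}\!\!*\,\true .
\]
Lemma \ref{opsem1} gives $R \stackrel{2}{\Rightarrow} R'$. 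The remaining run from $R'$ is strictly shorter, so the induction hypothesis gives $R' \stackrel{2}{\Rightarrow}\!\!*\,\true$, and $\mathtt{[RUN1]}$ closes the case.

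The main obstacle is the mismatch noted in Remark \ref{remark1}. SLD goals are comma-lists modulo associativity, whereas $\stackrel{2}{\Rightarrow}$ works on $\wedge$-trees. So "the same $R'$" in the two systems must be read modulo flattening $\wedge$ to commas. I will fix this convention explicitly and check that Lemma \ref{opsem1} is stated in that sense, so that the $R'$ it produces can feed into the induction hypothesis. I also need to check that the intermediate goal in the split really has the form $(B,R')$ with $B$ selected leftmost; this follows because SLD always resolves the leftmost atom.
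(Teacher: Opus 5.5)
Your proposal is correct and follows the paper's proof. The paper likewise splits an SLD refutation into expanding runs, each closed by a unit-clause step, and turns each segment into one $\stackrel{2}{\Rightarrow}$ step via Lemma~\ref{opsem1}, inducting on the number of unit-clause steps; for the converse it concatenates the SLD runs that Lemma~\ref{opsem1} gives for each contextual step. The extra care you add is a refinement, not a different route: you justify why a unit-clause step must occur, and you flag that Lemma~\ref{opsem1} has to be read modulo flattening $\wedge$-trees into comma-lists (Remark~\ref{remark1}), a point the paper leaves implicit.
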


 \subsection{Big-step Horn clause derivations}\label{bigstep}
 
 For Horn clause derivations, a big-step transition is of the form $A  \Downarrow \true$.  
 Note the downward transition
 arrow $\Downarrow$ contrasting with the small-step transition $\Rightarrow$.  There is no need for the notion
 of a ``run" since a big-step transition itself represents a complete derivation.
 The rules for $\Downarrow$ are as follows.
 \[
  \begin{array}{l}
 \mathtt{[BASE]}~~\dfrac{}
         {B \Downarrow  \true}
         ~~~\text{where }B \leftarrow \true \in [P] \\

 \\
\mathtt{[UNF]}~~ \dfrac{
R \Downarrow \true}
 {B \Downarrow \true}
 ~~~\text{where }B \leftarrow R \in [P], R\neq\true\\
  \\

\mathtt{[CONJ]}~~ \dfrac{
R \Downarrow  \true ~~~~~R' \Downarrow  \true}
 {(R \wedge R') \Downarrow \true}
 \end{array}\\
\]
 
The difference from the small-step semantics is the treatment of conjunctions. A successful
derivation for a conjunction is constructed from successful derivations for each of the conjuncts.

\begin{theorem}\label{opsem3}
For a given set of Horn clauses, $A \stackrel{1}{\Rightarrow}\!\!* ~  \true$ 
if and only if 
$A \Downarrow \true$.
\end{theorem}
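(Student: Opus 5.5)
I will prove the statement in its natural generalisation to conjunctions: for every conjunction $R$ (a tree built with $\wedge$), $R \stackrel{1}{\Rightarrow}\!\!*~\true$ if and only if $R \Downarrow \true$. Here, on the SLD side, $R$ is read as the flattened comma-list of its atoms, as permitted by Remark \ref{remark1}. The theorem is then the special case $R=A$. The key auxiliary fact concerns SLD runs of a concatenation. For lists $R_1,R_2$ with $R_1\neq\true$, I claim $(R_1,R_2) \stackrel{1}{\Rightarrow}\!\!*~\true$ iff $R_1 \stackrel{1}{\Rightarrow}\!\!*~\true$ and $R_2 \stackrel{1}{\Rightarrow}\!\!*~\true$, and the run lengths add. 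This is the SLD analogue of $\mathtt{[CONJ]}$.

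To prove the splitting fact I will argue by induction on the length of the run from $(R_1,R_2)$. Because $\mathtt{[RES]}$ always acts on the leftmost atom, and $R_1$ is nonempty, each step transforms $(B,S,R_2)$ into $(S',S,R_2)$ with $B\leftarrow S'\in[P]$. Such a step is exactly a step $(B,S)\stackrel{1}{\Rightarrow}(S',S)$ with $R_2$ appended. So, as long as the $R_1$-part is nonempty, the run is a run of $R_1$ padded with the suffix $R_2$. The first time the $R_1$-part becomes $\true$, we have a run $R_1\stackrel{1}{\Rightarrow}\!\!*~\true$, and the remainder is a run $R_2\stackrel{1}{\Rightarrow}\!\!*~\true$. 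The converse is concatenation: append $R_2$ to every goal of the run of $R_1$, then continue with the run of $R_2$. This is essentially Lemma \ref{opsem1-lemma}, extended from expanding runs to arbitrary runs that do not empty the prefix. I will reprove it directly, since that lemma is restricted to expanding runs.

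For the direction ($\Leftarrow$), I will use induction on the height of the $\Downarrow$-derivation. There are three cases.
\begin{itemize}
\item $\mathtt{[BASE]}$ gives the single step $B\stackrel{1}{\Rightarrow}\true$ via $B\leftarrow\true\in[P]$.
\item $\mathtt{[UNF]}$ gives $B\stackrel{1}{\Rightarrow}R$ followed by the run for $R$ supplied by the induction hypothesis.
\item $\mathtt{[CONJ]}$ uses the concatenation half of the splitting fact on the flattened lists.
\end{itemize}

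For the direction ($\Rightarrow$), I will use strong induction on the length $n$ of the SLD run, with the shape of the tree $R$ as a secondary measure. There are two cases.
\begin{itemize}
\item If $R=R_1\wedge R_2$, the splitting fact yields strictly shorter runs for $R_1$ and $R_2$ (each has positive length since it is nonempty), and $\mathtt{[CONJ]}$ combines the derivations obtained by induction.
\item If $R=B$ is an atom, the first step uses some $B\leftarrow R'\in[P]$. If $R'=\true$, rule $\mathtt{[BASE]}$ applies. Otherwise the remaining run $R'\stackrel{1}{\Rightarrow}\!\!*~\true$ is shorter, and $R'$ is read as the clause body's conjunction tree. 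The induction hypothesis then gives $R'\Downarrow\true$, and $\mathtt{[UNF]}$ finishes.
\end{itemize}

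The main obstacle is the mismatch between list-conjunctions on the SLD side and tree-conjunctions on the big-step side. Two things must be checked: that the splitting fact holds regardless of how the tree $R$ is bracketed, since flattening is associative, and that the induction measure decreases in the $\mathtt{[CONJ]}$ case. Using run length as the primary measure, with both parts of a split having length at least one, settles the second point. I would fix these conventions explicitly before carrying out the inductions. Alternatively, one could route the argument through Theorem \ref{opsem2} and contextual derivations, but the direct route above seems simpler.
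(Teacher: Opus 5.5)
Your proof is correct. For the direction from $A \Downarrow \true$ to an SLD run, it is the paper's argument: the paper's whole proof is the single line ``induction on the height of the $\Downarrow$ derivation tree''. In the $\mathtt{[CONJ]}$ case, you append the second conjunct to every goal of the first run.

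For the converse, the routes differ. Induction on a $\Downarrow$ tree is not available there, since no $\Downarrow$ tree is given yet, so the paper's one-liner does not really cover that direction. Your strong induction on SLD run length, driven by the splitting lemma, is what makes it go through; that lemma generalises Lemma~\ref{opsem1-lemma} from expanding runs to arbitrary runs.

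Your route also makes explicit two things the paper leaves implicit:
\begin{itemize}
\item The claim must be generalised from atoms to conjunction trees before either induction works.
\item The clean split depends on the substitution-free formulation of Theorem~\ref{SLD2}. The suffix $R_2$ is never instantiated while $R_1$ is being refuted; with mgu-based resolution it would be, and the split would not be literal.
\end{itemize}

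One convention should be stated: restrict the generalised claim to trees whose leaves are atoms. For example, $\true \wedge A$ flattens to $A$, which may succeed under SLD, but $\true \wedge A$ has no $\Downarrow$-derivation. With that convention, both parts of a split have run length at least one, so the secondary measure on tree shape is unnecessary, as you already note.
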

\begin{proof}
Straightforward by induction on the height of the $\Downarrow$ derivation tree.
\end{proof}
Theorems \ref{opsem2} and \ref{opsem3} establish that all three operational semantics are equivalent with respect to successful
derivations.

\begin{figure}[t]
\begin{tabular}{c}
{\includegraphics[width=\textwidth]{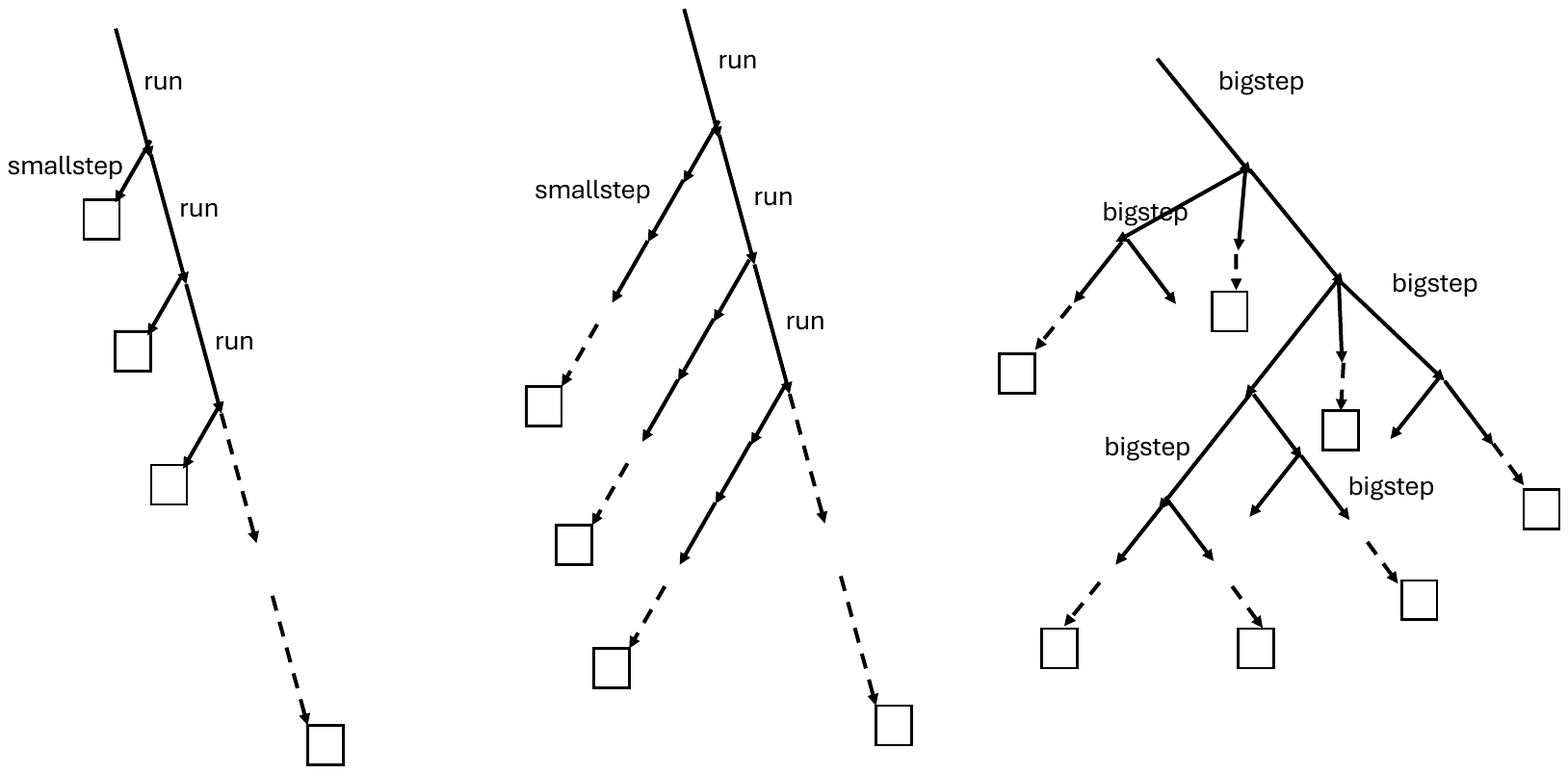}}\\
\end{tabular}
\caption{Derivation tree shape for SLD small-step  (left), contextual small.step (centre) and big-step (right)}\label{treeshape}
\end{figure}

\subsection{Folding the conjunctions in small-step derivations}\label{folding}

 The conjunctions that arise in small-step derivations starting from
 an atomic goal using 
$\stackrel{2}{\Rightarrow}$ have a structure due to the fact that the leftmost atom is always selected
for resolving with a clause head, and after deriving $\true$ from the leftmost atom, the next leftmost atom
is selected. Thus the conjunctions following a transition have the following form.
\[
((\cdots(R_1 \wedge R_2) \wedge \cdots R_{n-1})\wedge R_n)
\]
where each conjunction $R_i$ is an instance of a \emph{suffix} of a clause body.
In the following section, we show that the whole conjunction can be represented by a single atom, 
by \emph{folding} 
it using a set of auxiliary \emph{suffix rules}.

\subsubsection{Suffix rules}

Given a Horn clause of the following general form
\[
A_0 \leftarrow A_1,\cdots, A_n
\]
we assume the body to have a standard structure, so that $A_1, A_2, \ldots, A_n$
is the conjunction $A_1 \wedge (A_2 \wedge (\ldots \wedge A_n)\ldots)$.  (See also Remark \ref{remark1}).
We define the $i^{th}$ suffix  and $i^{th}$ prefix ($1 < i \le n$) of the rule as follows.
\[
\begin{array}{ll}
\suff(i) =& A_i \wedge ( \ldots \wedge A_n)\\
\pref(i) =& \{A_0,A_1,\ldots,A_{i-1}\}\\
\end{array}
\]
For each $i$ ($1 < i \le n$) we construct two \emph{suffix rules}, defining a predicate $\mathtt{stack}$
(since the conjunction can be seen as the stack of calls awaiting selection).  
\[
\begin{array}{ll}
\text{Type 1 suffix rule: }&\mathtt{stack}(f_i(\bar{w})) \leftarrow \suff(i)\\
& ~~~~~\text{where } \bar{w}= \vars(\pref(i)) \cap \vars(\suff(i))
\\
\\
\text{Type 2 suffix rule: }&\mathtt{stack}(g_i(\bar{w} \cup \{z\})) \leftarrow \mathtt{stack}(z) \wedge \suff(i)\\
&~~~~~\text{where } \bar{w}= \vars(\pref(i)) \cap \vars(\suff(i))\\
&~~~~~\text{and } z \text{ is a fresh variable}
\end{array}
\]
$f_i$, $g_i$ are function symbols occurring only in the respective rules.
Note that $\pref(1)$ and $\suff(1)$ are not used
for forming suffix rules.

The purpose of the suffix rules is to allow the conjunctions arising from a 
$\stackrel{2}{\Rightarrow}$  transition to be folded into a single $\mathtt{stack}$ atom.
The following example introduces the idea of folding;  a precise definition of
folding follows the example.

\begin{example}\label{rev-ex} 

Consider the Horn clauses defining ``naive" reverse.
 \begin{lstlisting}[basicstyle=\small\ttfamily]
rev([],[]).
rev([X|Xs],Ys) :-
	rev(Xs,Ws), app(Ws,[X],Ys).

app([],Ys,Ys).
app([X|Xs],Ys,[X|Zs]) :-
	app(Xs,Ys,Zs).
 \end{lstlisting}
 The suffix clauses are generated from the second clause using $\suff(2) = \mathtt{app(Ws,[X],Xs)}$.
  \begin{lstlisting}[basicstyle=\small\ttfamily]
stack(f_2_1(X,Ys,Ws)) :-
	app(Ws,[X],Ys).
stack(g_2_1(X,Ys,Ws,NewX)) :-
	stack(NewX),
	app(Ws,[X],Ys).
 \end{lstlisting}
There is a transition 
\[
\mathtt{rev([a,b,c],X)} \stackrel{2}{\Rightarrow}
(\mathtt{app([],[c],X2)} \wedge \mathtt{app(X2,[b],X1)}) \wedge \mathtt{app(X1,[a],X)}
\]
The atom $\mathtt{app([],[c],X2)}$ is an instance of the body of the
first suffix clause above, and can be folded to $\mathtt{stack(f\_{2\_1}(c,X2,[]))}$.  
This gives the conjunction:
\[
(\mathtt{stack(f\_{2\_1}(c,X2,[]))} \wedge \mathtt{app(X2,[b],X1)}) \wedge \mathtt{app(X1,[a],X)}
\]
The conjunction $\mathtt{stack(f\_{2\_1}(c,X2,[]))} \wedge \mathtt{app(X2,[b],X1)}$
is an instance of the body of the second suffix clause, and can be folded.  The
same clause can then be used for folding again, 
yielding the single atom $\mathtt{stack(g\_{2\_1}(a,X,X1,g\_{2\_1}(b,X1,X2,f\_{2\_1}(c,X2,[])))}$.

   \end{example}
 The argument of the $\mathtt{stack}$ predicate encodes the conjunction of calls to atoms as a nested term
 that carries the information contained in the conjunction.  
 Local variables in the suffixes (that is, variables in $\suff(i) \setminus \pref(i)$) are not in the folded atom.
 The identical conjunction (up to renaming of such local variables)
 can
 be reconstructed by unfolding the atom using the $\mathtt{stack}$ clauses.

\paragraph{Foldable conjunctions} 
Let $P_{\mathrm{suff}}$ be the set of suffix clauses
derived from a set of Horn clauses $P$. 
Each clause in $P_{\mathtt{suff}}$ is assigned the
index of the clause in $P$ from which it is derived. There is no need to provide 
separate indices to different suffixes from the same clause, as their length is sufficient to
distinguish them. We annotate conjunctions and write $R^k$ when there exists a $k$ such that $R$ is the body of a suffix clause with index $k$.

A \emph{foldable conjunction} is of the form
\[
((\cdots(R^{k_1}_1 \wedge R_2)^{k_2} \wedge \cdots R_{n-1})^{k_{n-1}}\wedge R_n)^{k_n}
\]
where $R_2,\ldots,R_n$ are instances of bodies of suffix clauses of type 1, and $R_1$ is
an instance of
the body of a suffix clause of type 1 or type 2.
The position of  annotations determines which suffix clauses are involved in 
a foldable conjunction. For instance $(A^{k_1} \wedge B^{k_2})$ is a different foldable
conjunction than $(A \wedge B)^k$.  In the first, $A$ and $B$ are instances of the bodies of 
suffix clauses $k_1$ and $k_2$ respectively, while in the second, $A \wedge B$ is
an instance of
the body of suffix clause $k$.  

\begin{remark}\label{unannotated}
If an
unannotated conjunction is stated to be ``foldable", then we mean that annotations could be added to make it
a foldable conjunction.  More precisely, if $R$ is ``foldable" then either $R$ is a suffix  
of a clause,
or $R = R_1 \wedge R_2$, where $R_2$ is a suffix of a clause
and the unannotated conjunct $R_1$ is ``foldable".
This slight abuse of notation is used in Lemma \ref{foldable} and also in Theorem \ref{fig3-fig4}.
Note that a ``foldable" unannotated conjunction might be annotated in different ways; consider
for example the unannotated conjunction $A \wedge B$  with the annotations just discussed.
\end{remark}

Revisiting Example \ref{rev-ex}, and adding annotations, the conjunction
$((\mathtt{app([],[c],X2)}^1 \wedge \mathtt{app(X2,[b],X1)})^1 \wedge \mathtt{app(X1,[a],X)})^1$
contains three instances of the body of suffix clause body 1 (of type 1).
After folding the first instance, we obtain
\[
((\mathtt{stack(f\_{2\_1}(c,X2,[]))} \wedge \mathtt{app(X2,[b],X1)})^1\wedge \mathtt{app(X1,[a],X)})^1
\]
in which $(\mathtt{stack(f\_{2\_1}(c,X2,[]))} \wedge \mathtt{app(X2,[b],X1)})^1$
is an instance of a type 2 suffix clause body with index 1.  Whether the index $k$ indicates a suffix clause of
type 1 or 2 is decided by whether a $\mathtt{stack}$ atom occurs as the first atom of the conjunct
component (in which case it is of type 2).

\begin{definition}[Folding]\label{unfold}

Let $P_{\mathtt{suff}}$ be the set of suffix rules derived from $P$.  
The function $\fold$ is applied to foldable conjunctions or $\true$ (which indicates the empty 
foldable conjunction).

\[
\begin{array}{c}
\fold(S) = 
\begin{cases}
A, \text{ if } S=R^k \text { and }A \leftarrow R \in_k [P_{\mathtt{suff}}]\\
\fold(\fold(R^{k_1}_1) \wedge R)^{k}) \text{ if } S =  R^{k_1}_1 \wedge R^{k}\\
\true \text{ if } S = \true
\end{cases}
\end{array}
\]
\end{definition}
It can be seen that $\fold(S)$, where $S$ is a foldable conjunction or $\true$,
is a $\mathtt{stack}$ atom or $\true$.  
The following property relating folding and suffix clauses also follows directly from the definition.
\begin{enumerate}
\item
$B \leftarrow R \in_k P_{\mathrm{suff}}$ is a suffix clause of type 1, if and only if $B=\fold(R^k)$.
\item
$B \leftarrow B' \wedge R \in_k P_{\mathrm{suff}}$ is a suffix clause of type 2 if and only if 
$B' = \fold(R^{k_1}_1)$ for some $R^{k_1}_1$ and $B=\fold(R^{k_1}_1 \wedge R^k)$.
\end{enumerate}

\begin{lemma}\label{folding2}
Let $S_1, S_2$ be foldable conjunctions or $\true$.  Then
\[
S_1 \cong S_2 \text{ if and only if }\fold(S_1) \cong \fold(S_2). 
\]
Note that the equalities are modulo renaming of variables.
\end{lemma}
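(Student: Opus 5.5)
We would prove Lemma \ref{folding2} by structural induction on the annotated foldable conjunction, i.e.\ on the number $n$ of components $R_1,\ldots,R_n$ in
\[
((\cdots(R^{k_1}_1 \wedge R_2)^{k_2} \wedge \cdots R_{n-1})^{k_{n-1}}\wedge R_n)^{k_n}.
\]
The case $S_1=\true$ or $S_2=\true$ is immediate. $\fold$ returns $\true$ exactly on $\true$ and a $\mathtt{stack}$ atom otherwise, so $\true$ can only match $\true$ on either side.

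The key auxiliary fact is the single-step case. For a suffix clause $\mathtt{stack}(t_k)\leftarrow \suff_k$ with index $k$, and instances $R\theta_1$, $R\theta_2$ of its body, we claim $R\theta_1\cong R\theta_2$ iff $\mathtt{stack}(t_k)\theta_1\cong \mathtt{stack}(t_k)\theta_2$.
\begin{itemize}
\item For type 1 clauses the head argument is $f_i(\bar w)$.
\item For type 2 clauses it is $g_i(\bar w\cup\{z\})$, where $z$ is the variable standing for the folded first conjunct.
\end{itemize}
The direction from bodies to heads is clear, because a renaming of the body restricted to the head variables $\bar w$ (which occur in the body) gives a renaming of the head. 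The converse needs care. Local variables of the suffix, i.e.\ those in $\vars(\suff(i))\setminus\bar w$, do not appear in the head. We will therefore read the statement, as the paper indicates after Example \ref{rev-ex}, as holding modulo renaming of such local variables. Concretely, $\fold(S)$ is determined by the head instance, and the folded conjunction is reconstructed up to renaming of the locals, taken as fresh. This is what $\cong$ should mean here, and we will make this convention explicit at the start of the proof.

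The inductive step uses the second clause of Definition \ref{unfold}. Write $S_j = S'_j\wedge R_j^{k}$ with $S'_j$ foldable, so $\fold(S_j)=\fold(\fold(S'_j)\wedge R_j)^k$. First, the outermost function symbol of $\fold(S_j)$ identifies the clause index $k$ and type, so $\fold(S_1)\cong\fold(S_2)$ forces the same index and structure on both sides.
\begin{itemize}
\item For the ($\Leftarrow$) direction, apply the single-step fact to the type 2 clause. This gives $\fold(S'_1)\wedge R_1\cong \fold(S'_2)\wedge R_2$, via a renaming that must also agree on the shared variables linking $R_j$ with $z$. The induction hypothesis then gives $S'_1\cong S'_2$.
\item For the ($\Rightarrow$) direction, a renaming of $S_1$ onto $S_2$ restricts to renamings of the components. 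By induction $\fold(S'_1)\cong\fold(S'_2)$, and the single-step fact finishes.
\end{itemize}

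The main obstacle is making the renamings compose consistently. The renaming obtained for $S'_j$ from the induction hypothesis and the one for the last suffix $R_j$ must agree on their shared variables, namely the connecting variables such as $\mathtt{X1}$, $\mathtt{X2}$ in Example \ref{rev-ex}. We plan to handle this by strengthening the induction hypothesis: we state it for a single renaming $\rho$, asserting $S_1\rho=S_2$ iff $\fold(S_1)\rho'=\fold(S_2)$, where $\rho$ and $\rho'$ coincide on $\vars(\fold(S_1))$. We will then check that every non-local variable of $S_j$ occurs in $\fold(S_j)$. This holds because $\bar w$ collects exactly the variables of the suffix shared with the prefix, and nested $\mathtt{stack}$ terms retain those of earlier components.
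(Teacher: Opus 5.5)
Your plan follows the paper's proof in outline. Both argue by induction on the number $n$ of components. Both use the fact that each symbol $f_i,g_i$ belongs to a single suffix clause, so the folded atom recovers the suffix clause and its body instance up to locals. Both push the inductive step through the type 2 clause. Your single-renaming strengthening also makes precise what the paper only asserts as ``uniquely determines''.

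\textbf{The gap.} The step that fails is your claim that the outermost function symbol of $\fold(S_j)$ forces the same index \emph{and structure} on both sides. The definition of a foldable conjunction allows $R_1$ to be an instance of a type 2 body, and then the shape cannot be recovered from the folded atom. In Example~\ref{rev-ex}, take $S_1=(\mathtt{stack(f\_2\_1(c,X2,[]))}\wedge\mathtt{app(X2,[b],X1)})^1$, which is one component of type 2. Take $S_2=(\mathtt{app([],[c],X2)}^1\wedge\mathtt{app(X2,[b],X1)})^1$, which is two type 1 components. Both fold to $\mathtt{stack(g\_2\_1(b,X1,X2,f\_2\_1(c,X2,[])))}$; this is exactly the partial folding performed in that example. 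Yet $S_1\not\cong S_2$. So the ($\Leftarrow$) direction is false as stated, and no induction on $n$ can establish it.

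The paper's proof has the same blind spot. It opens with ``let $S_1\cong S_2$ be (a renaming of) $\C^n$'' and never shows that $\fold(S_1)\cong\fold(S_2)$ forces a common shape. The repair is to restrict the lemma to foldable conjunctions of $P$-atoms only ($R_1$ of type 1), which is the only case Lemma~\ref{foldable} and Theorem~\ref{fig3-fig4} need. Then, following the $z$ argument of each type 2 symbol in $\fold(S)$, one meets $n-1$ type 2 symbols and then a type 1 symbol. Hence $n$ and all annotations can be read off the folded atom, and your induction goes through.

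\textbf{The locals convention.} ``Locals taken as fresh'' must be an explicit hypothesis on $S_1,S_2$, not just a reading of $\cong$: each local position holds a distinct variable occurring nowhere else in $S_j$. With clause instances drawn from $[P]$, locals may be bound or shared. For example, take the suffix clause $\mathtt{stack}(f_i(X))\leftarrow p(X,Y)\wedge q(Y)$ with $Y$ local. The instances $p(a,b)\wedge q(b)$ and $p(a,Y)\wedge q(Y)$ fold to the same atom, but they are not renamings of each other, even modulo locals. Under the freshness hypothesis, locals of one component cannot occur in another, so every variable shared between components lies in $\vars(\fold(S_1))$. Your strengthened induction, with one $\rho$ agreeing with $\rho'$ on those variables, is then sound.
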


Lemma \ref{folding2} shows that the $\fold$ function gives a unique representation of a
conjunction that arises in a contextual small-step derivation, as a $\mathtt{stack}$ atom.  We will now 
systematically replace the conjunctions in a derivation by their corresponding foldings.
First, we show that foldability is preserved in $\stackrel{2}{\Rightarrow}$ derivations.

\begin{lemma}\label{foldable}
Let $B$ be an atom and $R$ be a
foldable conjunction (note Remark \ref{unannotated}).  Then
 \begin{enumerate}
 \item
If $B\stackrel{2}{\Rightarrow} R_1$ then $R_1$ is a foldable conjunction or $\true$.
 \item
If $R\stackrel{2}{\Rightarrow} R_1$ then $R_1$ is a foldable conjunction or $\true$.
 \end{enumerate}

\end{lemma}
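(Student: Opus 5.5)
We prove both parts simultaneously by induction on the height of the derivation of the $\stackrel{2}{\Rightarrow}$ transition, with a case split on the last rule applied. The statement must be strengthened slightly so that the induction goes through. We show that if $R \stackrel{2}{\Rightarrow} R_1$, where $R$ is a suffix of (an instance of) a clause body, or $R$ is an atom, then $R_1$ is $\true$ or foldable. Recall from Remark \ref{unannotated} the inductive characterisation of an unannotated foldable conjunction. Such a conjunction is either an instance of a suffix $\suff(i)$ of some clause, or of the form $R' \wedge R''$ with $R'$ foldable and $R''$ an instance of a suffix. We need two closure facts about this notion. First, if $R'$ is foldable or $\true$ and $R''$ is a suffix instance, then $R' \wedge R''$ is foldable, or equals $R''$ when $R' = \true$. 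Second, any suffix instance $A_i \wedge (A_{i+1} \wedge \cdots)$ with $i<n$ has as its second component $\suff(i+1)$, which is again a suffix instance.

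For part 1, suppose $B \stackrel{2}{\Rightarrow} R_1$. If the last rule is $\mathtt{[RES0]}$, then $R_1 = \true$. If it is $\mathtt{[RES1]}$, then $B \leftarrow R \in [P]$ and $R \stackrel{2}{\Rightarrow} R_1$ by a shorter derivation. Here $R = A_1 \wedge(\cdots\wedge A_n)$ is the whole body, that is $\suff(1)$. This is not itself a suffix clause body, so we cannot simply appeal to part 2 for $R$. We therefore analyse the derivation of $R \stackrel{2}{\Rightarrow} R_1$ directly:

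\begin{itemize}
\item If $n=1$, then $R = A_1$ is an atom, and the induction hypothesis for part 1 applies.
\item If $n>1$ and the last rule is $\mathtt{[CONJ0]}$, then $R_1 = \suff(2)$, which is a suffix instance and hence foldable.
\item If $n>1$ and the last rule is $\mathtt{[CONJ1]}$, then $A_1 \stackrel{2}{\Rightarrow} R'_1$ with $R'_1 \neq \true$. By the induction hypothesis (part 1, on an atom) $R'_1$ is foldable, so $R'_1 \wedge \suff(2)$ is foldable by the first closure fact.
\end{itemize}

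For part 2, $R$ is foldable. If $R$ is a suffix instance $A_i \wedge \suff(i+1)$, or a single atom $A_n$, the argument is exactly the one just given, with $A_i$ in place of $A_1$. Otherwise $R = R' \wedge R''$ with $R'$ foldable and $R''$ a suffix instance.

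\begin{itemize}
\item Under $\mathtt{[CONJ0]}$, $R' \stackrel{2}{\Rightarrow} \true$ and $R_1 = R''$, which is foldable.
\item Under $\mathtt{[CONJ1]}$, $R' \stackrel{2}{\Rightarrow} R'_1 \neq \true$. By the induction hypothesis (part 2 on $R'$) $R'_1$ is foldable, so $R'_1 \wedge R''$ is foldable.
\end{itemize}

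The main obstacle is the bookkeeping of the tree structure of conjunctions. We must check that the way bodies are parsed as right-nested conjunctions (Remark \ref{remark1}), combined with the left-nesting produced by $\mathtt{[CONJ1]}$, always yields exactly the shape $((\cdots(R_1\wedge R_2)\cdots)\wedge R_n)$. In particular, a suffix-instance first component must be handled through its own right-nested decomposition rather than treated as a generic foldable conjunction. We also need to ensure that the annotations can be chosen consistently, specifically that $R_1$ at the bottom is a type 1 suffix instance in these derivations. To keep that manageable, we will carry out the induction on derivation height for the unannotated notion throughout, and only at the end note that annotations exist by construction.
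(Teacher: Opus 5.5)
Your proof is correct and follows essentially the paper's argument: induction on derivation height with a case split on the last rule, using the closure fact that a foldable (or atomic) left conjunct reduced by one step and followed by a clause suffix stays foldable. Your direct analysis of the whole body in the $\mathtt{[RES1]}$ case is exactly what the paper obtains from the unfolded rules $\mathtt{[RES1}_1]$--$\mathtt{[RES1}_3]$ of Figure~\ref{ss_v2_norm}, and your two-part induction (atoms and foldable conjunctions) plays the role of the paper's combined predicate $\aof$.
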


\begin{figure}
\[
\begin{array}{c}
\begin{array}{l}
\mathtt{[RES0]}~~\dfrac{}
 {B \stackrel{2}{\Rightarrow}\,\true}
 ~~~\text{where }B \leftarrow \true  \in [P]\\
\\
\mathtt{[RES1}_1]~~\dfrac{B'  \stackrel{2}{\Rightarrow} \,R}
 {B \stackrel{2}{\Rightarrow} \,R}
 ~~~\text{where } B \leftarrow B'  \in [P], B'\neq \true\\
 \\
 \mathtt{[RES1}_2]~~\dfrac{B_1   \stackrel{2}{\Rightarrow} \,\true}
 {B \stackrel{2}{\Rightarrow} \,R}
 ~~~\text{where } B \leftarrow B_1 \wedge R  \in [P]\\
 \\
 \mathtt{[RES1}_3]~~\dfrac{B_1   \stackrel{2}{\Rightarrow} \,R_1}
 {B \stackrel{2}{\Rightarrow} \,(R_1 \wedge R)}
 ~~~\text{where } B \leftarrow B_1 \wedge R  \in [P], R_1 \neq \true\\

\end{array}
\\
\\
\begin{array}{ll}
\mathtt{[CONJ0]}~~ \dfrac{R_1 \stackrel{2}{\Rightarrow}  \true}
 {(R_1 \wedge R)\stackrel{2}{\Rightarrow} R}~~~~~~~&
 \mathtt{[CONJ1]}~~\dfrac{R_1 \stackrel{2}{\Rightarrow} R_2}
 {(R_1 \wedge R) \stackrel{2}{\Rightarrow} \,(R_2 \wedge R)}
  ~~~(R_2\neq \true)\\
 \end{array}
 \end{array}
 \]
\caption{Transition rules for contextual small-step derivations, adapted from Figure \ref{fig-ss-contextual}}\label{ss_v2_norm}
\end{figure}

\subsection{Contextual small-step derivations with folding}
Figure \ref{ss_v2_norm} shows the small-step transition rules for $\stackrel{2}{\Rightarrow}$, modified from 
Figure \ref{fig-ss-contextual} by some straightforward equivalence-preserving unfoldings.
Rule $\mathtt{[RES1}]$ is unfolded by splitting into cases depending on whether the interpreted 
clause has one or more premises, and then by unfolding the premise in the second case using
$\mathtt{[CONJ0]}$ and $\mathtt{[CONJ1]}$. 
 The resulting rules are  $\mathtt{[RES1}_1]$,
$\mathtt{[RES1}_2]$ and $\mathtt{[RES1}_3]$ respectively.

\begin{figure}
\[
\begin{array}{c}
\begin{array}{l}
\mathtt{[UNF0]}~~\dfrac{}
 {B \stackrel{3}{\Rightarrow}\,\true}
 ~~~\text{where }B \leftarrow \true  \in [P]\\
\\
\mathtt{[UNF1]}~~\dfrac{B'  \stackrel{3}{\Rightarrow} \,\fold(R^k)}
 {B \stackrel{3}{\Rightarrow} \,\fold(R^k)}
 ~~~\text{where } B \leftarrow B'  \in [P\cup P_{\mathrm{suff}}], B'\neq \true\\
 \\
 \mathtt{[FOLD0]}~~\dfrac{B_1  \stackrel{3}{\Rightarrow} \,\true}
 {B \stackrel{3}{\Rightarrow} \,\fold(R^k)}
 ~~~\text{where } B \leftarrow B_1 \wedge R  \in_k [P \cup P_{\mathrm{suff}}]\\
\\
\mathtt{[FOLD1]}~~\dfrac{B_1  \stackrel{3}{\Rightarrow} \,\fold(R^{k_1}_1)}
 {B \stackrel{3}{\Rightarrow} \,\fold((\fold(R^{k_1}_1) \wedge R)^k)}
 ~~~\text{where } B \leftarrow B_1 \wedge R  \in_k [P \cup P_{\mathrm{suff}}], R_1 \neq \true\\
\\
\end{array}
 \end{array}
 \]
\caption{Contextual small-step rules from Figure \ref{ss_v2_norm}, 
applying $\fold$ to foldable conjunctions.}\label{ss_v2_norm_fold}
\end{figure}

\paragraph{Incorporating folding into the transition rules}
We now incorporate
the $\fold$ function into the semantic rules of Figure \ref{ss_v2_norm}.
To do so, we restrict transitions $R_1 \stackrel{2}\Rightarrow R_2$ such
that $R_1$ is either an atom or a foldable conjunction, and
$R_2$  is either $\true$ or a foldable conjunction.
We will later show (Theorem \ref{fig3-fig4}) that this restriction is justified.
Due to this restriction, the transition arrow is renamed
$\stackrel{3}{\Rightarrow}$.

Firstly, we drop rules $\mathtt{[CONJ0]}$ and $\mathtt{[CONJ1]}$, since 
folded conjunctions are $\mathtt{stack}$ atoms and thus there is no need for
rules handling conjunctions.  In rules 
$\mathtt{[RES1}_1]$, $\mathtt{[RES1}_2]$ and $\mathtt{[RES1}_3]$, 
(which are renamed to $\mathtt{[UNF1]}$, $\mathtt{[FOLD0]}$ and $\mathtt{[FOLD1]}$) the
set of Horn clauses $P$ is replaced by $P \cup P_{\mathrm{suff}}$ so that
the suffix rules can be used to handle $\mathtt{stack}$ atoms. 

Secondly, we add clause indices to the rules. For example, we write
$B \leftarrow B_1 \wedge R  \in_k [P\cup P_{\mathrm{suff}}]$ in the condition of 
rule $\mathtt{[FOLD1]}$, 
meaning that $k$ is the index of the selected
clause.  
These indices are then used to annotate the conjunctions appearing in the rules.
In this case, wherever the conjunct $R$ appears in a rule, it is given the index $k$.

Then each
conjunction of the form $R_1 \wedge R$, where $R$ is a suffix of clause $k$, is replaced 
by $\fold((\fold(R^{k_1}_1) \wedge R)^k)$.
By assumption, $R_1$ and $R$ are foldable conjunctions.
$R$ occurring on the right of a transition is replaced by
$\fold(R^k)$, for some $k$.
(If $R$ evaluates to $\true$ we assume its index is 0).
We assign all occurrences of  a conjunction $R$ in a rule the same index.
The resulting rules are shown in Figure 
\ref{ss_v2_norm_fold}.  

Following this, we eliminate the $\fold$ function from Figure \ref{ss_v2_norm_fold}. 
Wherever an expression of form $\fold(R^k)$ appears more than
once in a rule,
it represents the same atom, say $B'$. This follows from Lemma \ref{folding2}. 
Replace such expressions by atoms throughout the rule.
After doing this, whenever an expression $\fold(B' \wedge R^k)$ or $\fold(R^k)$ appears, 
replace the
expression by an atom $B''$, and add a condition of the form $B'' \leftarrow B' \wedge R \in_k [P_{\mathtt{suff}}]$
or $B'' \leftarrow R \in_k [P_{\mathtt{suff}}]$ respectively. This change is justified by the
definition of $\fold$ and Lemma \ref{folding2}.
The resulting set of four rules is
shown in Figure \ref{ss_v3}.  

\begin{figure}
\[
\begin{array}{ll}
~[\mathtt{UNF0}]~~~&\dfrac{}{ B \stackrel{3}{\Rightarrow}  \true}
~~~~\text{where }B \leftarrow \true \in [P]\\
\\
~[\mathtt{UNF1}]~~~&\dfrac{B_1 \stackrel{3}{\Rightarrow}  B_2}
{B \stackrel{3}{\Rightarrow}  B_2}
~~~\text{where }B \leftarrow B_1  \in [P \cup P_{\mathtt{suff}}] \\
 \\
~[\mathtt{FOLD0}]~~~&\dfrac{B_1 \stackrel{3}{\Rightarrow} \true}
{B\stackrel{3}{\Rightarrow}  B'}
~~~\text{where }B \leftarrow B_1,R_2  \in_k [P \cup P_{\mathtt{suff}}] \text{ and }B' \leftarrow R_2 \in_k [P_{\mathtt{suff}}]   \\
\\
~[\mathtt{FOLD1}]~~~&\dfrac{B_1 \stackrel{3}{\Rightarrow}  B_2}
{B\stackrel{3}{\Rightarrow} B'}
~~\text{where }B \leftarrow B_1,R   \in_k [P \cup P_{\mathtt{suff}}] \text{ and }
B' \leftarrow B_2,R \in_k [P_{\mathtt{suff}}]  \\
\end{array}
 \]
 \caption{Contextual small-step rules from Figure \ref{ss_v2_norm_fold}, 
 after eliminating the $\fold$ function}\label{ss_v3}
 \end{figure}

\begin{example}\label{rev-suff}
Consider the naive reverse Horn clauses again and their suffix rules (Example \ref{rev-ex}).
With the rules of Figure \ref{ss_v3} (and the rules for a run), we obtain the following run 
from $\mathtt{rev([a,b,c],X)}$.
\[
\begin{array}{l}
\mathtt{rev([a,b,c],X)}\\
 \stackrel{3}{\Rightarrow} \mathtt{stack(g\_2\_1(a,X,X1,g\_2\_1(b,X1,X2,f\_2\_1(c,X2,[]))))}\\
\stackrel{3}{\Rightarrow} \mathtt{stack(g\_2\_1(a,X,X1,f\_2\_1(b,X1,[c])))}\\
\stackrel{3}{\Rightarrow} \mathtt{stack(f\_2\_1(a,X,[c,b]))}\\
\stackrel{3}{\Rightarrow} \true\\
\end{array}
\]
The substitutions $\mathtt{X=[c,b,a], X1=[b,a], X2=[a]}$ are applied to the variables in the run.
Within each small step is a sub-derivation.  For instance:
\[
\begin{array}{lll}
&\mathtt{stack(g\_2\_1(a,X,[c,b],f\_2\_1(b,[c,b],[c])))} \stackrel{3}{\Rightarrow} \mathtt{stack(f\_2\_1(a,X,[c,b]))}
&\text{ using }[\mathtt{FOLD0}]\\
\text{since}&\mathtt{stack(f\_2\_1(b,[c,b],[c]))} \stackrel{3}{\Rightarrow} \true&\text{ using }[\mathtt{UNF1}]\\
\text{since}&\mathtt{app([c],[b],[c,b]))} \stackrel{3}{\Rightarrow} \true&\text{ using }[\mathtt{UNF1}]\\
\text{since}&\mathtt{app([],[b],[b])} \stackrel{3}{\Rightarrow} \true&\text{ using }[\mathtt{UNF0}]\\
\end{array}
\]
\end{example}
As already noted, the transition rules for $\stackrel{3}{\Rightarrow}$ represent a restriction
compared to $\stackrel{2}{\Rightarrow}$ since they apply to foldable conjunctions, whereas  
there can be a transition $R \stackrel{2}{\Rightarrow} R'$
in which the conjunctions $R$ and $R'$ are arbitrary conjunctions.  Lemma \ref{foldable}
and the following theorem 
establish that
$\stackrel{2}{\Rightarrow}$ derivations from atoms and foldable conjunctions
are preserved in $\stackrel{3}{\Rightarrow}$ derivations.

 \begin{theorem}\label{fig3-fig4}
Let $P$ be a set of Horn clauses, $B$ be an atom in the signature of $P$ and $R,R_1$ be 
foldable conjunctions (note Remark \ref{unannotated}).
 Then
 \begin{enumerate}
 \item
$B\stackrel{2}{\Rightarrow} R_1$ if and only if $\exists k_1.B\stackrel{3}{\Rightarrow} \fold(R_1^{k_1})$.
 \item
 $R\stackrel{2}{\Rightarrow} R_1$ if and only if $\exists k_1.k.\fold(R^k)\stackrel{3}{\Rightarrow} \fold(R_1^{k_1})$.
 \end{enumerate}

 \end{theorem}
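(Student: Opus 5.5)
We prove both parts simultaneously by induction on the height of the derivation trees, passing through the intermediate rule set of Figure \ref{ss_v2_norm}. First, $\stackrel{2}{\Rightarrow}$ defined by Figure \ref{fig-ss-contextual} and by Figure \ref{ss_v2_norm} coincide, so we may work with Figure \ref{ss_v2_norm}. To see this, split $\mathtt{[RES1]}$ according to whether the body $R$ is a single atom $B'$ or a conjunction $B_1\wedge R$. In the second case the premise $(B_1\wedge R)\stackrel{2}{\Rightarrow}R'$ can only be derived by $\mathtt{[CONJ0]}$ or $\mathtt{[CONJ1]}$. Inlining these gives $\mathtt{[RES1}_2]$ and $\mathtt{[RES1}_3]$, and conversely each of these rules recomposes into $\mathtt{[RES1]}$. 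Since the body is a right-nested tree $A_1\wedge(A_2\wedge\cdots)$, the remainder $R$ is exactly a suffix of the clause, so it is the body of a type 1 suffix clause. This is where indices enter.

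The ``only if'' direction proceeds by case analysis on the last rule of the $\stackrel{2}{\Rightarrow}$ derivation (Figure \ref{ss_v2_norm}).
\begin{itemize}
\item $\mathtt{[RES0]}$ maps directly to $\mathtt{[UNF0]}$.
\item For $\mathtt{[RES1}_1]$, apply the induction hypothesis to $B'\stackrel{2}{\Rightarrow}R$ and then use $\mathtt{[UNF1]}$ with the same clause of $P$.
\item For $\mathtt{[RES1}_2]$, the induction hypothesis gives $B_1\stackrel{3}{\Rightarrow}\true$. Then $\mathtt{[FOLD0]}$ yields $B\stackrel{3}{\Rightarrow}\fold(R^k)$, where $k$ is the index of the clause and $\fold(R^k)\leftarrow R\in_k[P_{\mathrm{suff}}]$ is a type 1 suffix clause (property 1 after Definition \ref{unfold}).
\item For $\mathtt{[RES1}_3]$, the induction hypothesis gives $B_1\stackrel{3}{\Rightarrow}\fold(R_1^{k_1})$. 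Then $\mathtt{[FOLD1]}$ yields $\fold((\fold(R_1^{k_1})\wedge R)^k)$, which by the second clause of Definition \ref{unfold} equals $\fold(R_1^{k_1}\wedge R^k)$. Property 2 supplies the needed type 2 suffix clause.
\end{itemize}
Lemma \ref{foldable} guarantees that every intermediate $R_1$ is foldable, so each $\fold$ application is defined.

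For part 2, a foldable conjunction $R$ is either a single suffix $R^k$ or $R'\wedge R''^k$. In the first case $\fold(R^k)$ is a type 1 $\mathtt{stack}$ atom whose only clause in $P\cup P_{\mathrm{suff}}$ has body $R$. So a step on $R$, which is itself an atom or a right-nested body-like conjunction, corresponds exactly to a step on $\fold(R^k)$ through $\mathtt{[UNF1]}$, $\mathtt{[FOLD0]}$ or $\mathtt{[FOLD1]}$ applied to that suffix clause. In the second case the step uses $\mathtt{[CONJ0]}$ or $\mathtt{[CONJ1]}$ on $R'$. The matching $\stackrel{3}{\Rightarrow}$ step resolves $\fold(R)$ with the type 2 clause $\fold(R)\leftarrow\fold(R')\wedge R''$ via $\mathtt{[FOLD0]}$ or $\mathtt{[FOLD1]}$, with the induction hypothesis applied to $R'\stackrel{2}{\Rightarrow}\cdot$, which has a strictly smaller derivation.

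The ``if'' direction runs the same case analysis backwards on the last $\stackrel{3}{\Rightarrow}$ rule. An atom in the signature of $P$ is never a $\mathtt{stack}$ atom, so only clauses of $P$ apply to it. A $\mathtt{stack}$ atom is resolved only with suffix clauses, and unfolding the clause body reconstructs the conjunction, using Lemma \ref{folding2} to identify conjunctions modulo renaming.

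The main obstacle is the bookkeeping of annotations and uniqueness. The annotated $\fold(R_1^{k_1})$ produced by the induction hypothesis must coincide with the $B_2$ appearing in the $\mathtt{[FOLD1]}$ side condition. Conversely, given $\fold(R^k)\stackrel{3}{\Rightarrow}B'$, we must recover a unique conjunction $R_1$ with $B'=\fold(R_1^{k_1})$. Both steps rest on Lemma \ref{folding2} together with the fact that the function symbols $f_i,g_i$ are private to their clauses. One subtlety needs explicit care: $\fold$ drops local variables, so for the recovered $R_1$ we must show that this loss causes no failure of correspondence, since those variables are renamed apart in fresh clause instances anyway. We would handle this by stating the induction up to $\cong$.
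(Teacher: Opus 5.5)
Your route is the paper's: induction on derivation height, mapping $\mathtt{[RES0]}$, $\mathtt{[RES1}_1]$--$\mathtt{[RES1}_3]$, $\mathtt{[CONJ0]}$, $\mathtt{[CONJ1]}$ to $\mathtt{[UNF0]}$, $\mathtt{[UNF1]}$, $\mathtt{[FOLD0]}$, $\mathtt{[FOLD1]}$, and reading the cases backwards with Lemma~\ref{folding2} for the converse. The ``only if'' directions are fine. Your separate treatment of a single type~1 suffix via $\mathtt{[UNF1]}$ is more careful than the paper's.

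The gap is the subtlety you flag at the end, and working up to $\cong$ does not close it. $\fold$ forgets the \emph{bindings} of local variables, not merely their names. Since $[P]$ and foldable conjunctions range over arbitrary instances, a local variable of $R$ may be bound to any term. A $\stackrel{3}{\Rightarrow}$ step on $\fold(R^k)$ may then resolve with a suffix-clause instance that binds it differently. Take $P=\{q(X)\leftarrow r(X),s(X),p(X,Y);\ r(a)\leftarrow;\ s(a)\leftarrow\}$. Its suffix clauses include $\mathtt{stack}(f_2(X))\leftarrow s(X),p(X,Y)$ and $\mathtt{stack}(f_3(X))\leftarrow p(X,Y)$, with $Y$ local to both. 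Let $R=s(a)\wedge p(a,c)$, which is even reachable ($q(a)\stackrel{2}{\Rightarrow}R$), and let $R_1=p(a,b)$. Using the instance $\mathtt{stack}(f_2(a))\leftarrow s(a),p(a,b)$, rule $\mathtt{[FOLD0]}$ gives $\fold(R^1)=\mathtt{stack}(f_2(a))\stackrel{3}{\Rightarrow}\mathtt{stack}(f_3(a))=\fold(R_1^1)$. Yet every $\stackrel{2}{\Rightarrow}$ step from $R$ yields $p(a,c)$ or a conjunction $R_2\wedge p(a,c)$, so $R\stackrel{2}{\Rightarrow}R_1$ fails. Hence the ``if'' half of part~2 is false as stated. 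It is not repairable up to $\cong$, since $p(a,b)\not\cong p(a,c)$. Lemma~\ref{folding2} fails for the same reason, because $\fold(p(a,b)^1)=\fold(p(a,c)^1)$. The paper's own proof has the same hole, at the step ``hence $R=R_2\wedge R_1$'' in the stack-atom cases of the converse.

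What does hold, and suffices for the run-level theorem, is the following.
\begin{enumerate}
\item The converse of part~1 holds exactly. The variables of $\suff(i)$ outside $\bar{w}$ do not occur in $\pref(i)$. So in the $\mathtt{[RES1}_2]$/$\mathtt{[RES1}_3]$ step you may re-choose the clause instance so that its remaining suffix is literally the given one.
\item The converse of part~2 holds only in a weaker form: if $\fold(R^k)\stackrel{3}{\Rightarrow}\fold(R_1^{k_1})$ then $R'\stackrel{2}{\Rightarrow}R_1$ for some $R'$ with $\fold(R'^k)=\fold(R^k)$. Here $R'$ rebinds the local variables of $R$ as the suffix-clause instance dictates. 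Your case analysis proves this once the existential $R'$ is carried through the induction.
\end{enumerate}
The equivalence of runs then follows by building the $\stackrel{2}{\Rightarrow}$ run backwards from $\true$. Each $\stackrel{3}{\Rightarrow}$ step supplies a predecessor with the right fold for the already chosen successor. The first step from $B$ uses (1), which accepts whatever conjunction the later steps demand. A naive front-to-back translation fails: the binding fixed when a suffix is created must match the suffix-clause instance used only when that suffix is later unfolded.
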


In particular,  the following holds by
induction on the height of the run.

\begin{theorem}
 Let $P$ be a set of Horn clauses and
 let $B$ be an atom in the signature of $P$. Then there is a run $B\stackrel{3}{\Rightarrow}\!\!* \,\true$
 if and only if there is a run $B \stackrel{2}{\Rightarrow}\!\!* \,\true$.
 \end{theorem}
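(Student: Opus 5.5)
The plan is to prove both directions by induction on the length (number of transitions) of the run, using Theorem \ref{fig3-fig4} to translate each individual transition and Lemma \ref{foldable} to keep every intermediate configuration in the form that Theorem \ref{fig3-fig4} requires.

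For the direction from $\stackrel{2}{\Rightarrow}$ to $\stackrel{3}{\Rightarrow}$, I would strengthen the claim to: for every $X$ that is an atom of $P$ or a foldable conjunction, if $X \stackrel{2}{\Rightarrow}\!\!*\,\true$ then $X' \stackrel{3}{\Rightarrow}\!\!*\,\true$. Here $X'=X$ when $X$ is an atom, and $X'=\fold(X^k)$ for some annotation $k$ when $X$ is a foldable conjunction.

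The base case is $X=\true$, which is trivial by $\mathtt{[RUN0]}$. For the step, write the run as $X \stackrel{2}{\Rightarrow} X_1 \stackrel{2}{\Rightarrow}\!\!*\,\true$.
\begin{itemize}
\item By Lemma \ref{foldable}, $X_1$ is either $\true$ or a foldable conjunction.
\item If $X_1=\true$, part 1 or part 2 of Theorem \ref{fig3-fig4} gives a single step $X' \stackrel{3}{\Rightarrow} \true$, taking $\fold(\true)=\true$ with index $0$ as in the paper's convention.
\item Otherwise, Theorem \ref{fig3-fig4} gives $X' \stackrel{3}{\Rightarrow} \fold(X_1^{k_1})$ for some $k_1$, and the induction hypothesis gives a run from $\fold(X_1^{k})$ to $\true$ for some $k$.
\end{itemize}

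The delicate point is that the annotation $k_1$ produced by the step may differ from the $k$ produced by the induction hypothesis. Since different annotations of the same unannotated conjunction can fold to different atoms, the two runs do not obviously compose.

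I would handle this by strengthening the induction hypothesis to hold for every annotation $k$ that makes $X$ foldable, not merely for some $k$. I would then check that the "if" direction of Theorem \ref{fig3-fig4}(2) is available for an arbitrary given annotation. Alternatively, I would argue that the annotation is forced by how $X_1$ arose in the $\stackrel{2}{\Rightarrow}$ step, namely by which clause suffix each conjunct came from. This is the step where I expect the main obstacle. I would first try to read off from the proof of Theorem \ref{fig3-fig4} that the annotation of the result is determined by the clause used, and then use Lemma \ref{folding2} to identify the atoms up to renaming. Renaming is harmless because runs are closed under variable renaming. Instantiating the strengthened claim at $X=B$ then gives the forward direction.

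The converse, from $\stackrel{3}{\Rightarrow}$ to $\stackrel{2}{\Rightarrow}$, is proved by the same induction on the length of the $\stackrel{3}{\Rightarrow}$ run. The configurations of a $\stackrel{3}{\Rightarrow}$ run are atoms or $\true$. I would invariantly maintain that each non-initial atom $D$ in the run has the form $\fold(S^k)$ for a foldable conjunction $S$. This holds because every rule of Figure \ref{ss_v3} yields either $\true$ or an atom introduced as the head of a suffix clause, and such a head unfolds, by the definition of $\fold$, to a foldable conjunction.

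Each step $D \stackrel{3}{\Rightarrow} \fold(S_1^{k_1})$ then yields $S \stackrel{2}{\Rightarrow} S_1$ by the "only if" direction of Theorem \ref{fig3-fig4}. For the first step from the atom $B$, part 1 is used instead. Chaining these steps with $\mathtt{[RUN1]}$ gives $B \stackrel{2}{\Rightarrow}\!\!*\,\true$. Lemma \ref{folding2} guarantees that the conjunction recovered from each intermediate atom is unique up to renaming, so consecutive steps match.
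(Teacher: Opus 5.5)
Your proposal is correct and follows the paper's route: the paper's whole proof is ``by induction on the height of the run'', translating each transition with Theorem~\ref{fig3-fig4} and relying on Lemma~\ref{foldable} to keep intermediate configurations foldable. The annotation-mismatch problem you raise is real and the paper's one-line argument glosses over it; your first remedy settles it, because the induction in the proof of Theorem~\ref{fig3-fig4} can be rerun with the source annotation held fixed (and the induction hypothesis strengthened likewise), each $\mathtt{[CONJ0]}$/$\mathtt{[CONJ1]}$ case needing only the clause index that annotation supplies, whereas your second remedy alone would not suffice since annotations are not unique (Remark~\ref{unannotated}) --- and note that your ``if''/``only if'' labels for the two directions of Theorem~\ref{fig3-fig4} are swapped.
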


The usefulness of the contextual small-step transition system might not be obvious so far, when looking at
cases such as Example \ref{rev-ex}. In the next section we consider a particular class of
Horn clauses, namely sets of rules for big-step operational semantics of programming languages.
In this setting,  contextual small-step Horn clause derivations provide a framework for
transforming big-step semantics to small-step semantics. 

\section{Horn clauses for programming language semantics}\label{opsem}

We are now going to apply small-step derivations to a particular class of 
Horn clauses, namely, programming language
operational semantics.  
The rules of operational semantics, whether big-step or small-step, have premises and a conclusion,  written as follows.
\[
\dfrac{\mathtt{premises}} 
{\mathtt{conclusion}} 
~~~~\text{if } \mathtt{conditions}
\]

With suitable encoding of the arguments of transitions, this is an implication in first-order predicate logic:
\[
\mathtt{premises} \wedge \mathtt{conditions} \rightarrow \mathtt{conclusion}.
\]

The conclusion is an atomic formula
(a big- or small-step transition with predicate symbol such as 
$\Rightarrow$, $\Rightarrow\!\!\!*$ and $\Downarrow$). 
Assuming that the premises and conditions are conjunctions,
it is a Horn clause.

The close connection between semantics rules and Horn clauses, and 
hence to the logic programming language Prolog, 
was observed by Kahn and his co-workers and exploited in
the Typol tool \cite{Despeyroux1984-short,Kahn87}. In short, a set of operational semantics rules, either big-step or
small-step, is a set of Horn clauses, which can be given a procedural reading using one of the 
derivation strategies discussed above, just as any set of Horn clauses.

We will show that we can give a small-step interpretation to a set of big-step rules, or
a big-step interpretation to a set of small-step rules.  This will form the basis of transformations of
big-step semantics to small-step, and vice versa.

\subsection{Big-step rules}\label{bigstep-rules}

A big-step transition for programming language semantics has the form $\langle s, \sigma \rangle \Downarrow v$, 
meaning that the language 
construct $s$ is completely evaluated in environment $\sigma$, resulting in value $v$. 
In some languages, the final value is also an environment;  the environment itself can be a more or less complex object comprising
stores, function definitions, status flags and so on.  

To simplify the presentation we assume that a language definition has only
one big-step transition relation, even though in many languages more than one is used.  
For example, there may be one big-step transition for
evaluating expressions and another for executing statements, and each may depend on the other. 
In our formulation we use only one relation and assume that they are distinguished by their syntax argument.
A big-step semantics for a language consists of a set of rules of the following form.

\[
\dfrac{c_1 ~~~\langle s_1, \sigma_1\rangle \Downarrow v_1 ~~~~\cdots~~~ c_n~~~\langle s_n, \sigma_n\rangle \Downarrow v_n ~~~c_{n+1}}
{\langle s_0, \sigma_0\rangle \Downarrow v_0} 
\]
where $c_1,\ldots,c_{n+1}$ are some side conditions and operations interspersed among the transition premises.

\subsubsection{Suffix rules for big-step clauses}

Naturally, we could form suffix clauses defining a predicate $\mathtt{stack}$ as shown earlier,
since the construction applied to any Horn clauses.
However, we will exploit the particular form of big-step transition rules to
refine the rules in the following ways.
\begin{enumerate}

\item
The arguments of the stack predicate will mirror those of the big-step transition, name $s$, $\sigma$ and $v$.

\item
Going further, we will also reuse the predicate name $\Downarrow$ instead of the  
predicate $\mathtt{stack}$. 
In principle we could use any predicate name, provided that the 
function symbol in the head of a suffix clause is unique to that clause,
to avoid conflict with other clauses when re-unfolding a folded stack.
\item
However, by choosing the predicate name $\Downarrow$, we are also able to merge the
two kinds of suffix rule into a single kind.  We elaborate the justification below.

\end{enumerate}

For ($1 \le i \le n$) we define the $i^{th}$ suffix  and $i^{th}$ prefix of the big-step rule as follows.  
These are essentially
the same as the prefix and suffix defined for Horn clauses previously, but we also include the $c_i$
components of the rule, which are not executed in small steps.
\[
\begin{array}{ll}
\suff(i) =& \langle s'_i, \sigma'_i\rangle \Downarrow v_i ~~~c_{i+1}~~~\cdots~~~ c_n~~~\langle s_n, \sigma_n\rangle \Downarrow v_n ~~~c_{n+1}\\
\pref(i) =& \langle s_0, \sigma_0\rangle \Downarrow v_0~~~c_1 ~~\langle s_1, \sigma_1\rangle\Downarrow v_1~~~\cdots~~~ c_{i-1}~~~\langle s_{i-1}, \sigma_{i-1}\rangle \Downarrow v_i ~~~c_{i}\\
\end{array}
\]
Note:
\begin{itemize}

\item
In $\suff(i)$, $s_i, \sigma_i$ from the rule are replaced by $ s'_i, \sigma'_i$,
where $s'_i,\sigma'_i$ are fresh variables not
occurring elsewhere in the rule.  The variable
$v_i$ is not renamed, since we will show in Section \ref{value-arg} that the value argument in transitions is unchanged
during derivations.
Replacing $\langle s_i, \sigma_i\rangle \Downarrow v_i$
by $\langle s'_i, \sigma'_i\rangle \Downarrow v_i$ corresponds to replacing $A_i$ by $\mathtt{stack}(z)$
in the second suffix rule for Horn clauses, but now we use the predicate $\Downarrow$
instead of $\mathtt{stack}$, and the fresh variables $ s'_i, \sigma'_i$ corresponding to the
fresh variable $z$ of $\mathtt{stack}(z)$.

\item
$\pref(1) = \langle s_0, \sigma_0\rangle \Downarrow v_0~~~c_1$ since $\langle s_i, \sigma_i\rangle\Downarrow v_i$ is not
included in $\pref(i)$.

\end{itemize}

For each $1 \le i \le n$  the \emph{suffix rule} is

\[
\dfrac{\suff(i)}
{\langle f_i(\bar{s}), \bar{\sigma}\rangle \Downarrow v_0} 
\]
where $f_i$ is a function symbol not used in any rule in $P$, and $\bar{s}$, $\bar{\sigma}$ contains the 
variables needed to evaluate the suffix, given the prefix. More precisely: let $\bar{u} = \vars(\pref(i))$
and $\bar{v} = \vars(\suff(i))$. Let $\bar{w}= \bar{u} \cap \bar{v}$.
The set $\bar{w}$ is split (somewhat arbitrarily) into disjoint sets,``syntax" variables and ``environment" 
variables, say $\bar{s}$ and $\bar{\sigma}$ respectively, which appear in the rule conclusion.  This is only for
readability of the rule; the exact split is not critical. 
In short, in the suffix rule, the rule conclusion contains variables that occur both in $\pref(i)$ and $\suff(i)$.  

We now show that these suffix rules subsume suffix rules of the first and second kind
introduced
 for arbitrary Horn clauses in Section \ref{Horn}.
For $1 < i \le n$ the above rule corresponds to the first kind of suffix rule except that the first 
premise is generalised by introducing
fresh variables.  Therefore any suffix foldable without the fresh variables is also foldable
by the rule with fresh variables.  Now let us consider suffix rules of the second form, which would
be as follows, for $1 < i \le n$.
\[
\dfrac{\langle s',\sigma'\rangle \Downarrow v' ~~c_{i+1}~~\suff(i)}
{\langle f_i(\bar{s}), \bar{\sigma}\rangle \Downarrow v_0} 
\]
where $s',\sigma',v'$ are fresh variables.  Then the premise $\langle s',\sigma'\rangle \Downarrow v' ~~c_{i+1}~~\suff(i)$ is an instance of $\suff(i-1)$.  This remain the case when $v'$ is replaced by $v_{i-1}$ (see Section 
\ref{value-arg}).  Therefore, we do not need to generate suffix rules of the second kind, since any 
conjunction foldable by the $i^{th}$ suffix rule of the second kind is foldable by the 
$i-1^{th}$ suffix rule of the first kind.  This also explains why we generate suffix rules for 
$1 \le i \le n$ instead of $1 < i \le n$.

\begin{example}\label{normal-ex} 
The following rule is from the big-step semantics for call-by-value semantics of the $\lambda$-calculus, giving the meaning
of $\mathrm{app}(e_1,e_2)$, i.e. ``apply $e_1$ to $e_2$".
\[
\dfrac{\langle e_1, \rho \rangle \Downarrow \CLO{x}{e}{\rho'} ~~~\langle e_2, \rho \rangle \Downarrow v_2~~~
\rho'[x/v_2] = \rho'' ~~~\langle e, \rho'' \rangle \Downarrow v} 
{\langle \APP{e_1}{e_2}, \rho \rangle \Downarrow v}
\]

This rule has three suffix rules.
\[
\begin{array}{l}

\dfrac{\langle e'_1, \rho''' \rangle \Downarrow \CLO{x}{e}{\rho'} ~~~\langle e_2, \rho \rangle \Downarrow v_2~~~
\rho'[x/v_2] = \rho'' ~~~\langle e, \rho'' \rangle \Downarrow v} 
{\langle \mathrm{app1}({e'_1},{e_2}), (\rho''',\rho) \rangle \Downarrow v}
\\
\\
\dfrac{\langle e'_2, \rho''' \rangle \Downarrow v_2 ~~~\rho'[x/v_2] = \rho'' ~~~\langle e, \rho'' \rangle \Downarrow v}
{\langle \mathrm{app2}({x},{e},{e'_2}), (\rho''',\rho') \rangle \Downarrow v}
\\
\\

\dfrac{\langle e', \rho' \rangle \Downarrow v}
{\langle \mathrm{app3}(e'),{\rho'} \rangle \Downarrow v}
 \\
 \end{array}
 \]
 The syntax constructors $\mathrm{app1}$, $\mathrm{app2}$ and $\mathrm{app3}$ are the new constructor
 symbols corresponding to $f_i$ in the definition of a suffix rule.  Indexed variants of the syntax constructor 
 appearing on the original rule are chosen for readability.

\end{example}
\subsection{A further refinement -- elimination of the value argument}\label{value-arg}

A big-step transition $C \Downarrow v$ transforms a configuration $C$ to a value $v$.  
A small-step transition based on Horn clauses transforms one call $C_1 \Downarrow v_1$
to another call $C_2 \Downarrow v_2$, except for the
final transition in a run, which takes $C \Downarrow v_1$ to $\true$.

By contrast, small-step transitions for programming languages are of the form $C_1 \Rightarrow C_2$
except the final transition in a run, which is $C\Rightarrow v$.  In other words, the final value
is not mentioned in a derivation except in the final transition.

We now show how to modify the Horn clause derivation to
produce small-step transitions of the expected form, namely
$\langle s_1,\rho_1\rangle \stackrel{3}{\Rightarrow} 
\langle s_2,\rho_2\rangle $
by eliminating the value arguments.

 \begin{definition}[Final value condition]\label{finalvalue}
 Let 
 \[
\dfrac{\langle s_1, \sigma_1\rangle \Downarrow v_1, \ldots, \langle s_n, \sigma_n\rangle \Downarrow v_n}
{\langle s_0, \sigma_0\rangle \Downarrow v_0} 
~~~~\mathrm{if }~ c
\]
be a big-step transition rule.    Condition $c$ is included in the premises and can appear anywhere among
the premises. Then the rule satisfies the \emph{final value condition} if $v_0$ does not occur in the premises
except in the final premise, and if the final premise is $\langle s_n, \sigma_n\rangle \Downarrow v_n$,
then $v_0=v_n$.
 \end{definition}
We have not encountered any semantics rules that do not satisfy the final value condition, but if there were
one, we could modify it by replacing any occurrences of $v_0$ in non-final premises by 
a fresh variable, say $v'$, and then add $v_0=v'$ as the final premise.  Suffix rules derived from
clauses satisfying the condition also satisfy it.

\begin{lemma}\label{final-value-lemma}
Let $P$ be a set of big-step rules satisfying the final value condition (Definition \ref{finalvalue}).
Let $(\langle s_1,\rho_1\rangle \Downarrow v_1) \stackrel{3}{\Rightarrow} 
(\langle s_2,\rho_2\rangle \Downarrow v_2)$ be derived 
using the rules in Figure \ref{ss_v3}.
Then $v_1 = v_2$.  
\end{lemma}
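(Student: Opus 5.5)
The plan is to prove the claim by induction on the height of the derivation of the transition $(\langle s_1,\rho_1\rangle \Downarrow v_1) \stackrel{3}{\Rightarrow} (\langle s_2,\rho_2\rangle \Downarrow v_2)$ built from the rules of Figure \ref{ss_v3}. The induction hypothesis is applied to the premise $B_1 \stackrel{3}{\Rightarrow} B_2$. First I will check that every clause in $P \cup P_{\mathtt{suff}}$ satisfies the final value condition. For $P$ this is the hypothesis. For a suffix rule, the conclusion has value $v_0$, and $\suff(i)$ is a tail of the original premises, so $v_0$ occurs only in the final premise, and there with $v_n=v_0$; the paper already notes that this property is inherited. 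Since the conclusion of the lemma is not $\true$, rule $[\mathtt{UNF0}]$ never ends such a derivation, so three cases remain.

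\emph{Case $[\mathtt{UNF1}]$.} The clause is $B \leftarrow B_1$ with a single premise. By the final value condition, the value of $B_1$ equals that of $B$, namely $v_1$. The conclusion $B_2$ is shared with the premise, so the induction hypothesis on $B_1 \stackrel{3}{\Rightarrow} B_2$ gives that $B_2$ has value $v_1$. One point needs care: a clause body may contain side conditions $c$ as well as transitions. I will treat conditions as constraints that are solved when the clause instance is chosen, so a body whose only transition premise is $B_1$ still falls under this case.

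\emph{Case $[\mathtt{FOLD0}]$.} The clause is $B \leftarrow B_1,R_2$ with index $k$, and $B' \leftarrow R_2 \in_k [P_{\mathtt{suff}}]$. Since $R_2$ is a nonempty suffix of clause $k$, its last premise carries the value $v_0$ of the head, i.e. $v_1$. The suffix clause $B' \leftarrow R_2$ also satisfies the final value condition, so its head $B'$ has value equal to the value of the last premise of $R_2$, which is $v_1$. Thus $v_2=v_1$, and no induction is needed in this case.

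\emph{Case $[\mathtt{FOLD1}]$.} Here $B \leftarrow B_1,R \in_k$ and $B' \leftarrow B_2,R \in_k [P_{\mathtt{suff}}]$. If $R$ is nonempty, the same argument applies: the final premise of $R$ has value $v_1$, and $B'$ takes the value of that premise. If $R$ is empty, then $B_1$ is the final premise and has value $v_1$. The suffix clause $B' \leftarrow B_2$ gives $B'$ the value of $B_2$, and the induction hypothesis gives value$(B_2)=$ value$(B_1)=v_1$.

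The main obstacle is being precise about the shared instance. The same occurrence of $R$ (respectively $R_2$) appears in both side conditions, so the value variable of its last premise is literally the same term in both clause instances, and this is what links the two heads. I will make this explicit through the index $k$ and the construction of $\suff(i)$, in which $v_i$ is deliberately not renamed, so that the head value of the suffix clause coincides with the final premise value of the original clause.
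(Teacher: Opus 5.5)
Your proposal is correct and follows essentially the paper's proof. Both argue by induction on derivation height, use the induction hypothesis in the $[\mathtt{UNF1}]$ case, and settle $[\mathtt{FOLD0}]$ and $[\mathtt{FOLD1}]$ without the hypothesis via the shared suffix $R_2$ and the final value condition; your extra treatment of an empty $R$ in $[\mathtt{FOLD1}]$ and of side conditions is only slightly more careful than the paper, which instead makes height $2$ ($[\mathtt{FOLD0}]$ over $[\mathtt{UNF0}]$) an explicit base case.
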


As a result of Lemma \ref{final-value-lemma}, the rules of Figure \ref{ss_v3} can be amended to eliminate the argument
$v$ from  atoms $\langle s,\rho\rangle \Downarrow v$.  Each transition of the form
$B \stackrel{3}{\Rightarrow} B'$, where $B=\langle s,\rho\rangle \Downarrow v$ and 
$B'=\langle s',\rho'\rangle \Downarrow v'$
can be replaced by $\langle s,\rho\rangle \stackrel{3}{\Rightarrow} \langle s',\rho' \rangle$,
defined as 
\[
(\langle s,\rho\rangle \stackrel{3}{\Rightarrow} \langle s',\rho' \rangle \equiv \exists v,v' .(\langle s,\rho\rangle \Downarrow v)  
\stackrel{3}{\Rightarrow} (\langle s',\rho'\rangle \Downarrow v') \wedge v=v'
\]
Furthermore a transition of the form
$B \stackrel{3}{\Rightarrow} \true$, where $B=\langle s,\rho\rangle
\Downarrow v$ can be replaced by
$\langle s,\rho\rangle \stackrel{3}{\Rightarrow} v$.
The modified rules for contextual small-step Horn derivations of big-step semantics rules, 
with conjunctions folded using suffix rules, and the removal of the value argument,
are shown in Figure \ref{ss_v3_a}.

\begin{figure}
\[
\begin{array}{lll}
~[\mathtt{UNF0}]~&\dfrac{}{ \langle s,\rho \rangle \stackrel{3}{\Rightarrow}  v}
&\text{where }(\langle s,\rho \rangle \Downarrow v) \leftarrow \true \in [P]\\
\\
~[\mathtt{UNF1}]~&\dfrac{\langle s_1,\rho_1 \rangle \stackrel{3}{\Rightarrow}  \langle s_2,\rho_2 \rangle}
{\langle s,\rho \rangle \stackrel{3}{\Rightarrow}  \langle s_2,\rho_2 \rangle}
&\text{where }(\langle s,\rho \rangle \Downarrow v) \leftarrow (\langle s_1,\rho_1 \rangle \Downarrow v)  \in [P] \\
 \\
~[\mathtt{FOLD0}]~&\dfrac{\langle s_1,\rho_1 \rangle \stackrel{3}{\Rightarrow} v_1}
{\langle s,\rho \rangle \stackrel{3}{\Rightarrow}  \langle s',\rho' \rangle}
&\text{where }(\langle s,\rho \rangle \Downarrow v) \leftarrow (\langle s_1,\rho_1 \rangle \Downarrow v_1),R_2  \in_k [P \cup P_{\mathtt{suff}}]\\
&&\text{ and }(\langle s',\rho' \rangle \Downarrow v) \leftarrow R_2 \in_k [P_{\mathtt{suff}}]   \\
\\
~[\mathtt{FOLD1}]~&\dfrac{\langle s_1,\rho_1 \rangle \stackrel{3}{\Rightarrow}  \langle s_3,\rho_3 \rangle}
{\langle s,\rho \rangle \stackrel{3}{\Rightarrow}  \langle s',\rho' \rangle}
&\text{where }(\langle s,\rho \rangle \Downarrow v) \leftarrow (\langle s_1,\rho_1 \rangle \Downarrow v),R_2  \in_k [P \cup P_{\mathtt{suff}}]\\
&&\text{ and }
(\langle s',\rho' \rangle \Downarrow v) \leftarrow (\langle s_3,\rho_3 \rangle \Downarrow v),R_2\in_k [P_{\mathtt{suff}}]  \\
\end{array}
 \]
 \caption{Contextual small-step rules for big-step semantics rules: folded conjunctions, suffix rules and value argument removed}\label{ss_v3_a}
 \end{figure}

\subsection{Optional refinements}\label{optional}
In some cases, a suffix rule is not needed.  Consider the big-step rule for executing a statement
sequence $s_1;s_2$ in an imperative language.
\[
\dfrac{\langle s_1,\rho\rangle \Downarrow \rho_1~~~~\langle s_2,\rho_1\rangle \Downarrow \rho_2}
{\langle s_1;s_2,\rho\rangle \Downarrow \rho_2}
\]
Using $\suff(1)$  we obtain a suffix rule.
\[
\dfrac{\langle s'_1,\rho'\rangle \Downarrow \rho_1~~~~\langle s_2,\rho_1\rangle \Downarrow \rho_2}
{\langle f(s'_1,s_2),\rho'\rangle \Downarrow \rho_2}
\]
The premise of the suffix rule is a variant of the premise of the original rule, and so the original rule
could be used for folding rather than the suffix rule.  The suffix rule in this case can just be
omitted, with the advantage that the extra syntax constructor $f$ is avoided.
This optimisation can be performed whenever the renamed terms $s_1$ and $\rho_1$ in $\suff(1)$
do not appear elsewhere in the premises.

Secondly, the suffix rule for $\suff(n)$ can be omitted, whenever  the $c_{n+1}$ premise is empty.
This requires splitting rule $[\mathtt{FOLD0}]$ into two cases depending on whether $R_2$ is
a single atom or not.  This results in the following two rules.
\[
\begin{array}{lll}
~[\mathtt{FOLD0}_1]~&\dfrac{\langle s_1,\rho_1 \rangle \stackrel{3}{\Rightarrow} v_1}
{\langle s,\rho \rangle \stackrel{3}{\Rightarrow}  \langle s_2,\rho_2\rangle}
&\text{where }\\
&&\langle s,\rho \rangle \Downarrow v \leftarrow (\langle s_1,\rho_1 \rangle \Downarrow v_1,
 \langle s_2,\rho_2 \rangle \Downarrow v) \in_k [P \cup P_{\mathtt{suff}}]\\
\\
~[\mathtt{FOLD0}_2]~&\dfrac{\langle s_1,\rho_1 \rangle \stackrel{3}{\Rightarrow} v_1}
{\langle s,\rho \rangle \stackrel{3}{\Rightarrow}  \langle s',\rho' \rangle}
&\text{where }\langle s,\rho \rangle \Downarrow v \leftarrow (\langle s_1,\rho_1 \rangle \Downarrow v_1,R_2)  \in_k [P \cup P_{\mathtt{suff}}]\\
&&\text{ and }\langle s',\rho' \rangle \Downarrow v \leftarrow R_2 \in_k [P_{\mathtt{suff}}], \vert R_2 \vert > 1  \\
\end{array}
 \]
In $[\mathtt{FOLD0}_1]$, the suffix $\langle s_2,\rho_2 \rangle \Downarrow v$ is used itself to
make the next configuration $\langle s_2,\rho_2\rangle$, rather than folding it using the suffix clause.
This can be done only because we reuse the predicate $\Downarrow$ as the
$\mathtt{stack}$ predicate.
\begin{example}\label{no-suffix}
Consider the suffix rules constructed in Example \ref{normal-ex}.  The third suffix, namely
\[
\dfrac{\langle e', \rho' \rangle \Downarrow v}
{\langle \mathrm{app3}(e'),{\rho'} \rangle \Downarrow v}
\]
can be omitted if we use the modified rules above.
\end{example}

This completes the presentation of contextual small-step semantics for Horn clauses,
with refinements for the special case where the Horn clauses are big-step semantics
rules.
We emphasise that the final set of rules presented, namely those in Figure \ref{ss_v3_a},
with the optional modification of rule $[\mathtt{FOLD0}]$ just discussed, 
are the same rules as for arbitrary Horn clauses shown in Figure \ref{ss_v3},
except for syntactic refinements exploiting 
the form of big-step rules.

\section{Horn clause interpreters and their specialisation}\label{specsem}
The operational semantics rules for Horn clause derivations developed in Section \ref{opsem} 
can be used as
interpreters for Horn clauses.
In this section, Horn clauses are themselves
used as the meta-language in which the interpreters are written.  

\subsection{A note on meta-programming techniques}\label{metaprog}

Let us call the interpreter the \emph{meta}-program and the set of Horn clauses being 
interpreted the \emph{object} program. Every syntactic object in the object program has a unique 
representation as a variable-free term in the
meta-program. An object program variable such as $x$ is represented by a term such as $\mathtt{'\$VAR'('x')}$.
The object program is represented in the meta-program 
as a variable-free term, such as
a list of clauses.  

In order to perform unification on object language terms, we adopt a technique described 
by Hill and Gallagher \cite{gallagher:pepm93,Hill-Gallagher-Handbook}.  Rather than compute an $\mathtt{mgu}$ of two terms, we compute the
\emph{most general instance} of the terms.
Let $t_1,t_2$ be (representations of) object language terms that we wish to unify.
Being variable-free terms in the meta-language, they are not directly unifiable in the meta-language, 
unless they happen to be
identical.  The meta-language relation $\mathtt{instance}(u,v)$,  is true if $v$ is an instance of $u$.
When given $u$ as input, it yields the most general instance $v$ rather than enumerating all possible instances. 

For example,  let $t$ be the object language term $\mathtt{f('\$VAR'('x'))}$.
$\mathtt{instance}(t,y)$
gives the answer substitution $y= \mathtt{f(}z\mathtt{)}$, where $z$ is a \emph{meta-language} variable.
Implicitly, $z$ ranges over all object-language terms.
Thus, to unify $t_1$ and $t_2$, the meta-program calls
$$\mathtt{instance}(t_1,x),  \mathtt{instance}(t_2,y), x=y.$$
This effectively asks whether there is a common instance of $t_1$ and $t_2$, and due to the way in which
the $\mathtt{instance}$ predicate is implemented, the answer returned is the most general instance, if
it exists.  The implementation of the $\mathtt{instance}$ predicate is purely logical;  the original Horn clause
code can be found in  \cite{Hill-Gallagher-Handbook}, Section 3.3, and our interpreters in this section 
contain similar code.

The correctness theorem states that $A$ is a logical consequence of $P$ if and only
if there is a derivation $A  \stackrel{1}{\Rightarrow}\!\!* ~ \true$. In
this style of meta-programming, we start with some atom $A'$ which is
 successively instantiated  to $A$ during the derivation;  the final instance at the end of a derivation
run is the instance that is proved. 

In this way operations such as unification, substitutions and renaming of object language terms
are ``reflected" in the meta-language in a purely logical way, when we use one of the available
implementations of Prolog as the meta-language\footnote{Our interpreters use 
only pure Horn clauses, avoiding non-logical features
of Prolog.}.

\begin{figure}
\begin{lstlisting}[basicstyle=\small\ttfamily]
  run(true,_).
  run(G,P) :-
  	G \== true,
	sstep(G,G1,P),
	run(G1,P).
		
  sstep(atom(A),true,P) :-
	rule(K,A,R,P),
	evalCondition(K,R,true,P).
  sstep(atom(A),R2,P) :-
	rule(K,A,R,P),
	evalCondition(K,R,R1,P),
	R1\==true,
	sstep(R1,R2,P).
  sstep(and(K,A,B),B1,P) :-	 
	sstep(A,true,P),
	evalCondition(K,[B],B1,P).
  sstep(and(K,A,B),and(K,A1,B),P) :-
	sstep(A,A1,P),
	A1\==true.
	
  rule(K,A,Bs,P) :-
	member(clause(Cl,K),P),		
	instanceOf(Cl,(A:-Bs)).
\end{lstlisting}
\caption{Contextual small-step interpreter for Horn clause derived from Figure \ref{fig-ss-contextual}.}\label{fig:ss_1_2}

\end{figure}

Figure \ref{fig:ss_1_2} shows the Horn clauses (in Prolog notation) defining runs and transitions for  
contextual small-step Horn clause derivations.  
This can be run as a Prolog program to interpret Horn clauses.

The predicate $\mathtt{rule(}K,A,B,P\mathtt{)}$ occurring in Figure \ref{fig:ss_1_2} means that $A \leftarrow B$
is an instance of the $K^{th}$ clause in the set of Horn clauses $P$.
The rules correspond directly to the respective semantics rules, using the 
meta-programming technique just discussed, apart from the fact that
the rules in the interpreter also allow for the evaluation of the premises
$c_1,\ldots,c_{n+1}$
in the general form of big-step rules presented in Section \ref{opsem}, that is, 
``built-ins" such as arithmetic
operations and other predicates that are evaluated directly rather than via the interpreter.  
These are handled by $\mathtt{evalCondition}$
in the interpreter.
Naturally, these could also be added to the semantics rules in Section \ref{opsem} but
we preferred to keep those rules conceptually simpler.

\subsection{Transformation by interpreter specialisation}

The idea of program specialisation, sometimes called partial evaluation or mixed computation, 
originated in the 1960s and 1970s
\cite{BeckmanHOS76,Ershov77,Futamura,Lombardi67}.  Its relation to compilation
was discovered by Futamura \cite{Futamura}.
We summarise here the approach to program transformation by specialisation
of interpreters \cite{Bruynooghe-DeSchreye-Krekels,Gallagher-86,GiacobazziJM12,Gluck94,GluckJ94,Jones04,NysS18,Turchin85}.

Let $\mathbf{P}$ be a program (a set of Horn clauses in our
applications). The function implemented by $\mathbf{P}$ is denoted $\mathbf{\ll P \rr}$, 
using the notational conventions of Jones \emph{et al.} \cite{Jones-Gomard-Sestoft}. The domain of
 $\mathbf{\ll P \rr}$
 is assumed to consist of pairs $\mathbf{(s,d)}$.  
$\mathbf{\ll P \rr} ~\mathbf{(s,d)}$ denotes the application of $\mathbf{\ll P \rr}$ to $\mathbf{(s,d)}$.

A \emph{specialiser} $\mathbf{Sp}$ is a program 
that is applied to $\mathbf{P}$ and one of its
inputs $\mathbf{s}$ (the ``statically known" argument).  The result is a specialised version of $\mathbf{P}$,
say $\mathbf{P_s}$, which, when applied to the remaining argument $\mathbf{d}$,
returns the same as $\mathbf{\ll P \rr} ~\mathbf{(s,d)}$.
\[
\begin{array}{lll}
\mathbf{\ll Sp \rr~(P,s)}  &=& \mathbf{P_s}\\
\mathbf{\ll P_s \rr~d} &=& \mathbf{\ll P \rr~(s,d)}
\end{array}
\]
Suppose $\mathbf{I}$ is an interpreter for some language.   
It takes as arguments a program $\mathbf{p}$ and its initial environment $\mathbf{e}$.
Then instantiating the equations above we have $\mathbf{\ll Sp \rr~(I,p)} = \mathbf{I_p}$ and 
$\mathbf{\ll I_p \rr~e} = \mathbf{\ll I \rr~(p,e)}$.  The program $\mathbf{I_p}$ inherits the behaviour
of the interpreter $\mathbf{I}$, including its complexity.  As Jones \cite{Jones04} pointed out, 
specialisation of interpreters can give non-linear speed-up, when comparing $\mathbf{I_p}$ with $\mathbf{p}$.
$\mathbf{I_p}$ also inherits the structure of its computations from $\mathbf{I}$.

\subsection{Specialisation of the Horn clause interpreters}\label{spec-int}

We have interpreters for Horn clauses
for the two versions of small-step derivations and big-step derivations, respectively.  By specialising them with
respect to a given set of Horn clauses $\mathbf{p}$ we obtain, different versions of $\mathbf{p}$. 
The three versions are equivalent, by Theorems \ref{opsem1},\ref{opsem2} and \ref{opsem3}.

\begin{example}\label{revspec}
Consider again the clauses and suffix clauses for naive reverse from Examples \ref{rev-ex} and \ref{rev-suff}.
Specialising an interpreter derived from contextual small-step rules of Figure \ref{ss_v3} we obtain the clauses in Figure \ref{fig:spec_rev_contextual}.  There are seven specialised versions of the \texttt{smallStep}
predicate from the interpreter, which represents the transition relation $\stackrel{2}{\Rightarrow}$.  
These correspond to seven small-step transition rules.
When the call \texttt{run\_\_1(rev([a,b,c],X)} is executed, the same sequence of atoms in calls to \texttt{run\_\_1}
is generated as shown in Example \ref{rev-suff}.
\end{example}

The transformed program in Figure \ref{fig:spec_rev_contextual} could not itself be considered useful: 
efficiency is not improved, 
and the specialised clauses in Figure \ref{fig:spec_rev_contextual} are less readable than the original ones. 
Nonetheless the transformation shows that in principle any set of Horn clauses can be
transformed to a set that behaves in a small-step manner.  When applied to big-step rules for a given
programming language
the result corresponds exactly to small-step operational semantics rules for the same language.

\begin{figure}
\begin{lstlisting}[basicstyle=\small\ttfamily]
run__1(true) :-
   true.
run__1(A) :-
   smallStep__2(A,B),
   run__1(B).
smallStep__2(rev([],[]),true) :-
   true.
smallStep__2(app([],A,A),true) :-
   true.
smallStep__2(stack(f_2_2(A,B,C)),D) :-
   smallStep__2(app(A,[B],C),D).
smallStep__2(app([A|B],C,[A|D]),E) :-
   smallStep__2(app(B,C,D),E).
smallStep__2(stack(g_2_2(A,B,C,D)),stack(f_2_2(A,B,C))) :-
   smallStep__2(stack(D),true).
smallStep__2(stack(g_2_2(A,B,C,D)),stack(g_2_2(A,B,C,E))) :-
   smallStep__2(stack(D),stack(E)).
smallStep__2(rev([A|B],C),stack(f_2_2(D,A,C))) :-
   smallStep__2(rev(B,D),true).
smallStep__2(rev([A|B],C),stack(g_2_2(D,A,C,E))) :-
   smallStep__2(rev(B,D),stack(E)).
\end{lstlisting}
\caption{Contextual small-step version of Example \ref{rev-ex}}\label{fig:spec_rev_contextual}

\end{figure}

\section{Transforming big-step to small-step operational semantics}\label{big2small}
By specialising a Horn clause interpreter derived directly from the rules in Figure \ref{ss_v3_a}
with respect to a set of big-step rules, we transform big-step semantics rules to
small-step semantics rules.    The code for the interpreter is shown in Appendix \ref{secA3},

\begin{example}\label{while}
Figures \ref{while-big}, \ref{while-small-Horn} and \ref{while-small-latex}
show the result for a small imperative language.
Note that the rule for executing $s_1;s_2$ (represented $\mathtt{seq(S_1,S_2)}$) is used instead of a 
suffix rule, as explained in Section \ref{optional}.
In this example, the evaluation of expressions is not carried out in small steps, though it could have been.
In a later example (the Janus example) the transformation to small-step rules includes both
the execution of statements and the evaluation of expressions.  Configurations in the
interpreter are named either $\mathtt{val(V)}$ or $\mathtt{conf(C)}$ where these stand for values and 
configurations respectively (see Appendix \ref{secA3}).

\begin{figure}
\[
\begin{array}{l|l}
 \dfrac{ } 
{\langle \ASG{x}{e}, \sigma \rangle \Downarrow \sigma[x/v]}
 ~~~\text{if } V (e, \sigma) = v~
&~~~
\dfrac{\langle s_1, \sigma \rangle \Downarrow \sigma' } 
{\langle \IFNZ{b}{s_1}{s_2}, \sigma \rangle \Downarrow \sigma'}
 ~~~~~~~\text{if } V(b, \sigma )= \true\\
 \\
\dfrac{\langle s_1, \sigma \rangle \Downarrow \sigma' ~~~~\langle s_2, \sigma' \rangle \Downarrow \sigma''} 
{\langle \SEQ{s_1}{s_2}, \sigma \rangle \Downarrow \sigma''}~~~
&~~~
\dfrac{\langle s_2, \sigma \rangle \Downarrow \sigma' } 
{\langle \IFNZ{b}{s_1}{s_2}, \sigma \rangle \Downarrow \sigma'}
 ~~~~~~~\text{if } V(b ,\sigma) = \false\\
 \\
 \dfrac{} 
{\langle \SKIP, \sigma \rangle \Downarrow \sigma}
 ~
&~~~
\dfrac{\langle \IFNZ{b}{s;\WHILE{b}{s}}{\SKIP}, \sigma \rangle \Downarrow \sigma'} 
{\langle \WHILE{b}{s}, \sigma \rangle \Downarrow \sigma'}
\\ 
\end{array}
\]

\caption{Big-step rules for a simple imperative language}\label{while-big}
\end{figure}

\begin{figure}
\begin{tabular}{l}
\begin{lstlisting}[basicstyle=\scriptsize\ttfamily]
smallStep__2((skip,A),val(A)) :-
   true.
smallStep__2((asg(var(A),B),C),val(D)) :-
   eval__3(B,C,E,F),
   eval__4(A,F,E,D).
smallStep__2((ifthenelse(A,B,C),D),E) :-
   eval__3(A,D,F,1),
   smallStep__2((B,F),E).
smallStep__2((ifthenelse(A,B,C),D),E) :-
   eval__3(A,D,F,0),
   smallStep__2((C,F),E).
smallStep__2((while(A,B),C),D) :-
   smallStep__2((ifthenelse(A,seq(B,while(A,B)),skip),C),D).
smallStep__2((seq(A,B),C),conf((B,D))) :-
   smallStep__2((A,C),val(D)).
smallStep__2((seq(A,B),C),conf((seq(D,B),E))) :-
   smallStep__2((A,C),conf((D,E))).
\end{lstlisting}
\end{tabular}

\caption{Small-step clauses from the specialised interpreter.}\label{while-small-Horn}
\end{figure}

\begin{figure}
\[
\begin{array}{l|l}
\dfrac{} 
{\langle \IFNZ{b}{s_1}{s_2}, \sigma \rangle \Rightarrow \langle s_1,\sigma'\rangle}
 ~\text{if } V(b, \sigma )= \true
 &~\dfrac{ } 
{\langle \ASG{x}{e}, \sigma \rangle \Rightarrow \sigma[x/v]}
 ~~~\text{if } V (e, \sigma) = v~~~\\
\\
 \dfrac{} 
{\langle \IFNZ{b}{s_1}{s_2}, \sigma \rangle \Rightarrow \langle s_2,\sigma'\rangle}
 ~\text{if } V(b, \sigma )= \false
&~
\dfrac{} 
{\langle \SKIP, \sigma \rangle \Rightarrow \sigma}
\\
\\
\dfrac{\langle s_1, \sigma \rangle \Rightarrow \sigma' } 
{\langle \SEQ{s_1}{s_2}, \sigma \rangle \Rightarrow \langle s_2 ,\sigma'\rangle}~~~
&~
 \dfrac{\langle s_1, \sigma \rangle \Rightarrow \langle s_1', \sigma'\rangle} 
{\langle \SEQ{s_1}{s_2}, \sigma \rangle \Rightarrow \langle \SEQ{s_1'}{s_2}, \sigma'\rangle}~~~
 ~
\\
&~\\
\dfrac{} 
{\langle \WHILE{b}{s}, \sigma \rangle \Rightarrow \langle \IFNZ{b}{s;\WHILE{b}{s}}{\SKIP}, \sigma \rangle }~~
&

 \end{array}
\]
\caption{Small-step rules from Figure \ref{while-small-Horn} rewritten in standard form.}\label{while-small-latex}
\end{figure}
\end{example}

\begin{example}\label{mini-ml}
The Mini-ML language big-step semantics was presented by Kahn \cite{Kahn87} in the original introduction to
natural semantics (big-step semantics).
A notable feature of the rules is that the definition of the $\mathrm{letrec}$ construct for recursive
functions uses an environment containing circular terms:  the closure of a lambda expression for a 
recursive function includes
an environment which itself contains a link to the same function definition.  This construct can be 
handled using Horn clauses with equality over rational trees. The relation $\stackrel{\mathrm{val-of}}{\vdash}$ 
is for retrieving the value of an identifier from the environment.
In an application $(E_1 E_2)$, $E_1$ may be either a function defined by a $\lambda$-expression, or
a predefined function $\mathrm{Op}$.  In the latter case an evaluation relation $\stackrel{\mathrm{eval}}{\vdash}$
is used.  Figure \ref{miniml-small-Horn} contains the  small-step rules produced by 
specialising the interpreter based on Figure \ref{ss_v3_a}.

\begin{figure}
\[
\begin{array}{lll}
\dfrac{} 
{\rho \vdash \mathrm{number~N} \Downarrow \mathsf{N}}~~~~~~&
\dfrac{} 
{\rho \vdash \mathrm{true} \Downarrow \true}~~~~~~&
\dfrac{} 
{\rho \vdash \mathrm{false} \Downarrow \false}\\
\\
\end{array}
\]
\[
\begin{array}{ll}
\dfrac{} 
{\rho \vdash \lambda P.E \Downarrow \ll \lambda P.E, \rho\rr}~~~&
\dfrac{\rho \stackrel{\mathrm{val-of}}{\vdash} \mathrm{ident} ~I \mapsto \alpha}
{\rho \vdash \mathrm{ident} ~I \Downarrow \alpha}~~~
\\
 \end{array}
\]
\\
\[
\begin{array}{ll}
\dfrac{\rho \vdash E_1  \Downarrow \true ~~~\rho \vdash E_2  \Downarrow \alpha} 
{\rho \vdash \mathrm{if} ~E_1~ \mathrm{then}~E_2~\mathrm{else} ~E_3 \Downarrow \alpha}~~~&
\dfrac{\rho \vdash E_1  \Downarrow \false ~~~\rho \vdash E_3  \Downarrow \alpha} 
{\rho \vdash \mathrm{if} ~E_1~ \mathrm{then}~E_2~\mathrm{else} ~E_3 \Downarrow \alpha}~~~~~~
\\
 \end{array}
\]
\\
\[
\begin{array}{l}
\dfrac{\rho \vdash E_1  \Downarrow \alpha ~~~\rho \vdash E_2  \Downarrow \beta} 
{\rho \vdash (E_1,E_2) \Downarrow (\alpha,\beta)}\\
 \end{array}
\]
\\
\[
\begin{array}{l}
\dfrac{\rho \vdash E_1  \Downarrow  \mathrm{opaque}~\mathrm{Op}~~~\rho \vdash E_2  \Downarrow \alpha
~~~\stackrel{\mathrm{eval}}{\vdash} \mathrm{Op},\alpha \Downarrow \beta}
{\rho \vdash E_1~E_2 \Downarrow \beta}\\
 \end{array}
\]
\\
\[
\begin{array}{l}
\dfrac{\rho \vdash E_1  \Downarrow  \ll \lambda P.E, \rho_1\rr~~~\rho \vdash E_2  \Downarrow \alpha
~~~\rho_1[P \mapsto \alpha] \vdash \beta}
{\rho \vdash E_1~E_2 \Downarrow \beta}\\
 \end{array}
\]
\\
\[
\begin{array}{ll}
\dfrac{\rho \vdash E_2  \Downarrow  \alpha~~~~~\rho[P \mapsto \alpha] \vdash E_1 \Downarrow \beta}
{\rho \vdash \mathrm{let}~P=E_2 ~\mathrm{in}~E_1 \Downarrow \beta}~~~&
\dfrac{\rho[P \mapsto \alpha] \vdash E_2  \Downarrow  \alpha~~~~~
\rho[P \mapsto \alpha] \vdash E_1 \Downarrow \beta}
{\rho \vdash \mathrm{letrec}~P=E_2 ~\mathrm{in}~E_1 \Downarrow \beta}\\
 \end{array}
\]

\caption{Big-step rules for Mini-ML, as presented by Kahn \cite{Kahn87}.}\label{while-small-latex}
\end{figure}

\begin{figure}
\begin{tabular}{l}
\begin{lstlisting}[basicstyle=\footnotesize\ttfamily]
smallStep__2((number(A),B),val(int(A))) :-
   true.
smallStep__2((true,A),val(true)) :-
   true.
smallStep__2((false,A),val(false)) :-
   true.
smallStep__2((lambda(A,B),C),val(closure(A,B,C))) :-
   true.
smallStep__2((id(A),B),val(C)) :-
   eval__3(B,A,C).
smallStep__2((if(A,B,C),D),conf((B,D))) :-
   smallStep__2((A,D),val(true)).
smallStep__2((if(A,B,C),D),conf((if_8_1(B,E),D,F))) :-
   smallStep__2((A,D),conf((E,F))).
smallStep__2((if(A,B,C),D),conf((C,D))) :-
   smallStep__2((A,D),val(false)).
smallStep__2((if(A,B,C),D),conf((if_9_1(C,E),D,F))) :-
   smallStep__2((A,D),conf((E,F))).
smallStep__2((mlpair(A,B),C),conf((mlpair_10_2(D,B),C))) :-
   smallStep__2((A,C),val(D)).
smallStep__2((mlpair(A,B),C),conf((mlpair_10_1(B,D),C,E))) :-
   smallStep__2((A,C),conf((D,E))).
smallStep__2((apply(A,B),C),conf((apply_11_2(D,E,F,B),C))) :-
   smallStep__2((A,C),val(closure(D,E,F))).
smallStep__2((apply(A,B),C),conf((apply_11_1(B,D),C,E))) :-
   smallStep__2((A,C),conf((D,E))).
smallStep__2((apply(A,B),C),conf((apply_12_2(D,B),C))) :-
   smallStep__2((A,C),val(opaque(D))).
smallStep__2((apply(A,B),C),conf((apply_12_1(B,D),C,E))) :-
   smallStep__2((A,C),conf((D,E))).
smallStep__2((let(A,B,C),D),conf((C,[(A,E)|D]))) :-
   smallStep__2((B,D),val(E)).
smallStep__2((let(A,B,C),D),conf((let_13_1(A,C,E),D,F))) :-
   smallStep__2((B,D),conf((E,F))).
smallStep__2((letrec(A,B,C),D),conf((C,[(A,E)|D]))) :-
   smallStep__2((B,[(A,E)|D]),val(E)).
smallStep__2((letrec(A,B,C),D),conf((letrec_14_1(E,C,F),[(A,E)|D],G))) :-
   smallStep__2((B,[(A,E)|D]),conf((F,G))).
smallStep__2((if_8_1(A,B),C,D),conf((A,C))) :-
   smallStep__2((B,D),val(true)).
smallStep__2((if_8_1(A,B),C,D),conf((if_8_1(A,E),C,F))) :-
   smallStep__2((B,D),conf((E,F))).
smallStep__2((if_9_1(A,B),C,D),conf((A,C))) :-
   smallStep__2((B,D),val(false)).
smallStep__2((if_9_1(A,B),C,D),conf((if_9_1(A,E),C,F))) :-
   smallStep__2((B,D),conf((E,F))).
smallStep__2((mlpair_10_1(A,B),C,D),conf((mlpair_10_2(E,A),C))) :-
   smallStep__2((B,D),val(E)).
smallStep__2((mlpair_10_1(A,B),C,D),conf((mlpair_10_1(A,E),C,F))) :-
   smallStep__2((B,D),conf((E,F))).
smallStep__2((mlpair_10_2(A,B),C),val((A,D))) :-
   smallStep__2((B,C),val(D)).
smallStep__2((mlpair_10_2(A,B),C),conf((mlpair_10_2(A,D),E))) :-
   smallStep__2((B,C),conf((D,E))).
smallStep__2((apply_11_1(A,B),C,D),conf((apply_11_2(E,F,G,A),C))) :-
   smallStep__2((B,D),val(closure(E,F,G))).
smallStep__2((apply_11_1(A,B),C,D),conf((apply_11_1(A,E),C,F))) :-
   smallStep__2((B,D),conf((E,F))).
smallStep__2((apply_11_2(A,B,C,D),E),conf((B,[(A,F)|C]))) :-
   smallStep__2((D,E),val(F)).
smallStep__2((apply_11_2(A,B,C,D),E),conf((apply_11_2(A,B,C,F),G))) :-
   smallStep__2((D,E),conf((F,G))).
smallStep__2((apply_12_1(A,B),C,D),conf((apply_12_2(E,A),C))) :-
   smallStep__2((B,D),val(opaque(E))).
smallStep__2((apply_12_1(A,B),C,D),conf((apply_12_1(A,E),C,F))) :-
   smallStep__2((B,D),conf((E,F))).
smallStep__2((apply_12_2(A,B),C),val(D)) :-
   smallStep__2((B,C),val(E)),
   eval__5(A,E,D).
smallStep__2((apply_12_2(A,B),C),conf((apply_12_2(A,D),E))) :-
   smallStep__2((B,C),conf((D,E))).
smallStep__2((let_13_1(A,B,C),D,E),conf((B,[(A,F)|D]))) :-
   smallStep__2((C,E),val(F)).
smallStep__2((let_13_1(A,B,C),D,E),conf((let_13_1(A,B,F),D,G))) :-
   smallStep__2((C,E),conf((F,G))).
smallStep__2((letrec_14_1(A,B,C),D,E),conf((B,D))) :-
   smallStep__2((C,E),val(A)).
smallStep__2((letrec_14_1(A,B,C),D,E),conf((letrec_14_1(A,B,F),D,G))) :-
   smallStep__2((C,E),conf((F,G))).
\end{lstlisting}
\end{tabular}

\caption{Small-step rules for Mini-ML from the specialised interpreter.}\label{miniml-small-Horn}
\end{figure}

\end{example}

\subsection{Case study:  the reversible programming language Janus}\label{janus}

Big-step semantics for the reversible programming language Janus were written in 2007 
by Yokoyama and Gl\"{u}ck \cite{YokoyamaG07}.
Janus is ``an imperative, sequential language for reversibility" of programs.
In 2024, Lami \emph{et al.} \cite{LamiLS24} noted that no small-step semantics for Janus existed, and 
developed a set of small-step semantics rules, proving them equivalent to the earlier big-step semantics.

We rewrote the same big-step semantics as used by Lami \emph{et al.} as Horn clauses. They use 
two big-step transition relations, one for expressions and one for statements, and these
are merged into a single transition relation in our Horn clause representation.  
For statements, the environment
consists of two components, the store and the set of procedure definitions, whereas
for expressions, the second component is not needed. Specialising the same contextual small-step 
interpreter that we used for all our experiments, with respect to the Janus semantic rules,
we obtained essentially the same small-step rules as described by Lami \emph{et al.}
Furthermore, the proof of correctness follows from the correctness of the interpreter (shown in this paper)
and the specialiser.  For the latter we used \textsc{Logen} \cite{logen} in all our experiments, a robust
and well-tested partial evaluator for Prolog based on proven Horn clause transformations.

The full set of Horn clauses for the big-step rules, and the derived small-set rules, is shown in Appendix \ref{secA2}.
Here we discuss some of the key points, relating our result to the manually written rules developed by
Lami \emph{et al.}.

Consider the big-step rules for loops in Janus.  Figure \ref{janus-loop} shows the three rules needed,
whose names we abbreviate as \textsc{L1}, \textsc{L3} and \textsc{L3}.

Let us examine some of the small-step rules generated by specialising the interpreter.
In our Horn clauses, the construct $(e_1,s_1,e_2,s_2)$ appearing in the rules
is named $\mathtt{loop(E_1,S_1,E_2,S_2)}$.
The rules \textsc{L1} for the $\mathtt{from}$ constructor and \textsc{L3} for $\mathtt{loop}$
having multiple premises. give rise to a number of
suffix rules, whose syntax constructor is given names $\mathtt{from\_20\_1}$, $\mathtt{from\_20\_2}$, 
$\mathtt{loop\_21\_1}$, $\mathtt{loop\_22\_1}$, and so on.

Figure \ref{janus-small-Horn} shows the 20 small-step rules dealing with the
constructors $\mathtt{from}$ and $\mathtt{loop}$ and their suffix constructors.  These
correspond to two rules each for the 7 (non-final) suffixes of the rules,
plus 6 rules, two each for the original rules. Let us take one example to
illustrate their effect.  Each pairs of rules correspond to an instance of
the \emph{contextual rule} used by Lami \emph{et al.}  along with a rule for
moving from one evaluated component to the next.

 The following rules are for $\mathtt{loop\_22\_2}$,
which concern the second suffix of rule \textsc{L3}, namely $\rho\vdash s_2  \Downarrow  \rho'$.

\begin{lstlisting}[basicstyle=\footnotesize\ttfamily]
smallStep__2((loop_22_2(A,B,C,D,E),F,G),conf((loop_22_3(D,A,B,C,B),F,H,H))) :-
   smallStep__2((E,G),val(H)).
smallStep__2((loop_22_2(A,B,C,D,E),F,G),conf((loop_22_2(A,B,C,D,H),F,I))) :-
   smallStep__2((E,G),conf((H,I))).
\end{lstlisting}
The translation to standard notation is
straightforward, once the identities of the variables \texttt{A,B,C,...} in the 
specialised clauses have been established\footnote{
Unfortunately the partial evaluator \textsc{Logen} does not retain variable names from the source program.
We are working on a version that does so, which will make the small-step rules much easier to 
interpret.}.
\[
\dfrac{\langle  s'_2 ,\rho'\rangle \Rightarrow \rho'}
{ \langle \mathrm{loop\_22\_2}(e_2,e_1,s_1.s_2,s'_2),(\Gamma,\rho)\rangle \Rightarrow 
\langle \mathrm{loop\_22\_3}(s_2,e_2,e_1,s_1,s_1),(\Gamma,\rho',\rho') \rangle}\\
\]
\[
\dfrac{\langle s'_2,\rho\rangle \Rightarrow \langle s''_2,\rho''\rangle}
{\langle\mathrm{loop\_22\_2}(e_2,e_1,s_1,s_2,s'_2),(\Gamma,\rho)\rangle \Rightarrow 
\langle\mathrm{loop\_22\_2}(e_2,e_1,s_1,s_2,s''_2),(\Gamma,\rho'')\rangle}
\]
\vskip 0.3cm
The environment component $\Gamma$ stands for the set of procedure definitions, which is
carried through the small-step rules for statements.
The premise of the first rule, namely \texttt{smallStep\_\_2((E,G),val(H))}, represents the
termination of the execution of $\rho \vdash s_2  \Downarrow  \rho'$, and the
small step results in the next configuration \texttt{loop\_22\_3(D,A,B,C,B),F,H,H))},
which is the following suffix of the rule.
The premise of the second rule executes one small step of $\rho \vdash s_2  \Downarrow  \rho'$
and then continues with the same suffix, \texttt{(loop\_22\_2(A,B,C,D,H),F,I)}
These rules can be roughly related to the ones designed by Lami \emph{et al.}.  In their work 
they do not show all rules explicitly, but provide a schema based on \emph{evaluation contexts}
and \emph{contextual rules}.  The evaluation contexts for, say, the loop
construct contains a hole ($\bullet$) in each position of the
statement containing an expression or statement. 
The contextual rules successively perform small-step replacements in these
positions until that position is reduced to a value (the empty statement $\mathrm{skip}$
is regarded as a fully reduced statement) and then move to the next context ``hole".
Roughly speaking the contexts correspond to suffixes of the rule premises.
Rather than invent new syntax constructors as in our method, Lami \emph{et al.}
``overload" the existing syntax constructors with extra arguments, duplicating 
constructs that are needed in later premises.  Different overloadings arise as context holes are successively
reduced.

We did not attempt yet to produce the error rules developed by Lami \emph{et al.}.
In future work we intend to augment the Horn clause interpreter to handle failure of
a derivation explicitly.  We conjecture that this will allow us to generate small-step error rules
using the same specialisation techniques,

\begin{figure}
\[
\begin{array}{c}
\textsc{L1}~~~\dfrac{\rho \vdash e_1  \Downarrow  v_1~~~\mathrm{is\_true?}(v_1)~~~\rho \vdash s_1  \Downarrow  \rho'~~~\rho' \vdash (e_1,s_1,e_2,s_2) \Downarrow \rho''}
{\rho \vdash \mathrm{from} ~e_1~ \mathrm{do} ~s_1~ \mathrm{loop} ~s_2~ \mathrm{until} ~e_2 ~\Downarrow \rho''}\\
\\
\textsc{L2}~~~~~~~~~\dfrac{\rho \vdash e_2  \Downarrow  v_2~~~\mathrm{is\_true?}(v_2)}
{\rho \vdash (e_1,s_1,e_2,s_2) \Downarrow \rho}\\
\\
\rho \vdash e_2  \Downarrow  v_2~~~\mathrm{is\_false?}(v_2)~~~\rho \vdash s_2  \Downarrow  \rho'\\
\textsc{L3}~~~\dfrac{\rho' \vdash e_1  \Downarrow  v_1~~~\mathrm{is\_false?}(v_1)
~~~\rho' \vdash s_1  \Downarrow  \rho''
~~~\rho'' \vdash (e_1,s_1,e_2,s_2) \Downarrow \rho'''}
{\rho \vdash (e_1,s_1,e_2,s_2) \Downarrow \rho'''}\\
\\

 \end{array}
\]
\caption{Big-step rules for the loop construct in Janus}\label{janus-loop}
\end{figure}

\begin{figure}
\begin{tabular}{l}
\begin{lstlisting}[basicstyle=\footnotesize\ttfamily]
smallStep__2((from(A,B,C,D),E,F),conf((from_20_2(A,B,C,D,B),E,E,F))) :-
   smallStep__2((A,F),val(1)).
smallStep__2((from(A,B,C,D),E,F),conf((from_20_1(A,B,C,D,G),E,F,H))) :-
   smallStep__2((A,F),conf((G,H))).
smallStep__2((loop(A,B,C,D),E,F),val(F)) :-
   smallStep__2((C,F),val(1)).
smallStep__2((loop(A,B,C,D),E,F),conf((loop_21_1(G),F,H))) :-
   smallStep__2((C,F),conf((G,H))).
smallStep__2((loop(A,B,C,D),E,F),conf((loop_22_2(C,A,B,D,D),E,E,F))) :-
   smallStep__2((C,F),val(0)).
smallStep__2((loop(A,B,C,D),E,F),conf((loop_22_1(A,B,C,D,G),E,F,H))) :-
   smallStep__2((C,F),conf((G,H))).
smallStep__2((from_20_1(A,B,C,D,E),F,G,H),conf((from_20_2(A,B,C,D,B),F,F,G))) :-
   smallStep__2((E,H),val(1)).
smallStep__2((from_20_1(A,B,C,D,E),F,G,H),conf((from_20_1(A,B,C,D,I),F,G,J))) :-
   smallStep__2((E,H),conf((I,J))).
smallStep__2((from_20_2(A,B,C,D,E),F,G),conf((loop(A,B,D,C),F,H))) :-
   smallStep__2((E,G),val(H)).
smallStep__2((from_20_2(A,B,C,D,E),F,G),conf((from_20_2(A,B,C,D,H),F,I))) :-
   smallStep__2((E,G),conf((H,I))).
smallStep__2((loop_21_1(A),B,C),val(B)) :-
   smallStep__2((A,C),val(1)).
smallStep__2((loop_21_1(A),B,C),conf((loop_21_1(D),B,E))) :-
   smallStep__2((A,C),conf((D,E))).
smallStep__2((loop_22_1(A,B,C,D,E),F,G,H),conf((loop_22_2(C,A,B,D,D),F,F,G))) :-
   smallStep__2((E,H),val(0)).
smallStep__2((loop_22_1(A,B,C,D,E),F,G,H),conf((loop_22_1(A,B,C,D,I),F,G,J))) :-
   smallStep__2((E,H),conf((I,J))).
smallStep__2((loop_22_2(A,B,C,D,E),F,G),conf((loop_22_3(D,A,B,C,B),F,H,H))) :-
   smallStep__2((E,G),val(H)).
smallStep__2((loop_22_2(A,B,C,D,E),F,G),conf((loop_22_2(A,B,C,D,H),F,I))) :-
   smallStep__2((E,G),conf((H,I))).
smallStep__2((loop_22_3(A,B,C,D,E),F,G,H),conf((loop_22_4(C,A,B,D,D),F,F,G))) :-
   smallStep__2((E,H),val(0)).
smallStep__2((loop_22_3(A,B,C,D,E),F,G,H),conf((loop_22_3(A,B,C,D,I),F,G,J))) :-
   smallStep__2((E,H),conf((I,J))).
smallStep__2((loop_22_4(A,B,C,D,E),F,G),conf((loop(A,D,C,B),F,H))) :-
   smallStep__2((E,G),val(H)).
smallStep__2((loop_22_4(A,B,C,D,E),F,G),conf((loop_22_4(A,B,C,D,H),F,I))) :-
   smallStep__2((E,G),conf((H,I))).
\end{lstlisting}
\end{tabular}

\caption{Small-step rules for Janus loops.}\label{janus-small-Horn}
\end{figure}

\section{Discussion}\label{concl}

\paragraph{Big-step and small-step Horn clause derivations} We presented a formalisation of Horn clause derivations, in the style of operational
semantics.  Three different kinds of derivation were formalised: two small-step derivations
and one big-step.  
Two of them --  SLD small-step derivations and  big-step derivations -- were already
well known in various forms, for example as SLD proofs/refutations or AND-tree derivations.
The third kind, contextual small-step derivations, is a novel formulation of Horn clause
derivations, as far as we are aware.   All three forms were proved to be equivalent with
respect to successful derivations.

Another novelty regarding Horn clause derivations was the observation that the ``stack" or
conjunction of atoms awaiting selection during a small-step derivation has a particular form
that lends itself to ``folding" into a single atom, called $\mathtt{stack(z)}$, employing
``suffix rules" defining the predicate $\mathtt{stack(z)}$ formed from suffixes of
the given set of Horn clauses.

\paragraph{Application to big-step operational rules}
At this point these novelties might appear to be curiosities without practical value.
Their significance arises in derivations using a particular class of Horn clauses, namely
big-step semantic rules.  The structure of contextual small-step derivations turns
out to match small-step derivations in operational semantics. Refinements
to the generic Horn clause contextual small-step derivations, using features of big-step 
semantic rules, allows the connection
to be even closer.  In particular, the ``stack" predicate can be modelled by the big-step predicate,
and its value argument omitted from small steps.

\paragraph{Automatic translation of big-step to small-step semantics}
We then exploited the well-known method of interpreter specialisation to transform
big-step operational semantics to small-step semantics.  A contextual small-step derivation
interpreter is specialised with respect to a given set of big-step rules.  In effect
the contextual small-step derivation structure is ``compiled in" to the big-step rules,
resulting in clauses that are isomorphic to the desired small-step semantic rules.
This technique was demonstrated on a small imperative language, the functional language Mini-ML
with a $\mathrm{letrec}$ construct,
and the reversible language Janus.

\paragraph{Horn clauses for operational semantics}
Underlying the whole procedure is the use of Horn clauses to implement semantic rules.  Indeed, our
perspective is that operational semantics rules \emph{are} Horn clauses, however they are presented
on the page.  The declarative and procedural readings of Horn clauses are inherited by
semantic rules, rendering unnecessary the construction of ``definitional interpreters" to give
a procedural reading to semantic rules.  In our experiments, we were able to test the  equivalence
of the various semantics just by ``running" the respective Horn clauses on example programs
using a logic programming
system.  For instance, the Janus big-step rules and derived small-step rules could both be run on the
Fibonacci example of the Janus paper, with identical results of course;  similarly all of  the Mini-ML examples
presented by Kahn  \cite{Kahn87} could be tested in both semantics.  This is more than just convenience -- it underlines
the close logical connection between semantic rules and Horn clauses.
We have no doubt that there are many other ways of exploiting 
the very close connection between semantic rules and Horn clauses.(see Future Work).

\subsection{Related work}
There have been previous works demonstrating  automatic translation from big-step semantics 
to small-step semantics (or to abstract machines, a related problem)
\cite{VeselyF19,AmbalLSN22,HuizingKK10-short,Ager04-short}.
These approaches are based either on specifically designed transformations applied
to the big-step rules, or on transformations applied to a ``definitional interpreter" for the rules.
The latter approach is in the same spirit as ours, in that the transformed semantic rules
are ``inherited" from the transformed interpreter.  In particular, the ``continuation-passing"
transformation of a definitional interpreter \cite{VeselyF19,AmbalLSN22}, has clear connections to
our small steps with a stack or ``continuation". Some of the transformations 
applied in other work also have the flavour of partial evaluation, and the partial
evaluator that we used performs some of the transformations (such as argument filtering) seen
in other methods.
In summary, our approach, though not in principle more general, seems more
direct and transparent, in that the small-step derivation structure appears already in the
interpreter, which is then specialised with respect to a set of big-step rules.  
Implementation of the transformation using an existing program specialiser 
is also an advantage. The proof of correctness is
more direct;  the proofs of equivalence of the various Horn clause derivations
are straightforward inductions on derivation tree height.
We have succeeded in transforming some medium-sized examples, such as the Janus semantics,
that in their complexity go beyond examples we have seen in other work.
The correctness of our approach follows from the correctness of the generic Horn clause
interpreters and the correctness of the specialisation tools.

Our own previous work \cite{GallagherHML23} was also based on interpreter specialisation, but the interpreter
was an ad hoc procedure directed at big-step rules, unlike our current work, in which
the interpreters are for Horn clauses generally, and syntactic refinement are made for the
case of interpreting semantic rules.  In our previous work, correctness of the interpreter was not established; now
the correctness proofs rest on the correctness of the Horn clause derivation strategies.

The ``suffix rules" presented as part of our method suggest connections to other work on
manipulating big-step semantic rules.  The ``pretty-big-step" semantics of Chargu\'{e}raud 
\cite{Chargueraud13} proposes a decomposition of big-step rules into a
number of smaller rules dealing with one premise at a time.  
We postulate that pretty-big-step semantic rules could be
constructed by folding the big-step rules premises using our suffix rules;  but confirming this
guess remains future work.  A similar remark applies to the ``computation tree semantics"
recently proposed by Harbo and H\"{u}ttel \cite{HarboH26}.

\subsection{Future work}

Translation from small-step rules to big-step rules seems feasible using the same
approach.  A big-step Horn clause interpreter could be specialised with respect to 
a set of small step rules.  The main practical issue seems to be control of the
partial evaluation, which would be based on distinguishing between the ``run"
and ``small-step" predicates of the rules.

The Janus semantics discussed in Section \ref{specsem} prompts the question of
extending the translation to include rules
handling exceptions, abnormal termination and such like, as is done by Lami \emph{et al.}
Indeed, one of the reasons for using small-step semantics is the ability to observe more
than can be observed in a big-step derivation.  
We intend to derive such error rules from a Horn clause interpreter modified to
make failure observable.  Inspiration can be drawn from meta-interpreters for Prolog
capturing 
extensions of standard program execution, including failure states.

In program verification based on Constrained Horn Clauses (CHC) \cite{AngelisFGHPP22, BjornerGMR15}, it is required to
translate a program from some source language into CHCs.  This can be done reliably
using specialisation of semantics for the source language \cite{AngelisFGHPP22}.  The use of
big-step or small-step semantics can make an important difference to the CHCs obtained,
especially in the handling of loops in the source programs.  An automatic 
translator from big-step to small-step semantics and vice versa would
allow flexible choice of semantics; we see it as an important
component in a CHC-based verification toolchain.

\paragraph{Acknowledgements.} 
We gratefully acknowledge discussions with Robert Gl\"{u}ck (who suggested the Janus case study), Kristian Harbo, Hans H\"{u}ttel, Bishoksan Kafle, Morten Rhiger and Mads Rosendahl.

\bibliographystyle{splncs04}

\newpage
\begin{appendices}

\section{Proofs of Theorems}\label{secA1}

A run $R\stackrel{1}{\Rightarrow}\!\!*  ~R'$ is 
\emph{expanding} if no clause of the form
$B \leftarrow \true$ is used in the derivation. We write
$R\stackrel{1}{\Rightarrow}_e\!\!*  ~R'$ for an expanding run.

\noindent
\textbf{Lemma \ref{expanding-run}.}
For all expanding runs $R\stackrel{1}{\Rightarrow}_e\!\!*~R'$ where $R\neq \true$,
there exist $B,S,S'$ such that $R=(B,S)$ and $R'=(S',S)$.  
\begin{proof}
By induction on the height of run derivations.
\begin{itemize}
\item $k=1$:  
The run is $R\stackrel{1}{\Rightarrow}_e\!\!*  ~R$, where $R \neq \true$, hence
we write $R=(B,S)$ and  $(B,S) \stackrel{1}{\Rightarrow}_e\!\!*  ~(B,S)$, so the proposition
holds with $S'=B$.
\item $k\ge 1$:   
An expanding run of height 
$k+1$ exists only if there is a clause $B \leftarrow (B',S') \in [P]$
and a run of height $k$ $(B',S',S) \stackrel{1}{\Rightarrow}_e\!\!* ~R'$.
In this case the run of height $k+1$ is $(B,S)\stackrel{1}{\Rightarrow}_e\!\!* ~R'$.
By the inductive hypothesis, there exists $S''$ such that $R'=(S'',S',S)$, hence we have
$(B,S)\stackrel{1}{\Rightarrow}_e\!\!* ~(S'',S',S)$ which establishes the proposition.

\end{itemize}
\end{proof}

\noindent
\textbf{Lemma \ref{opsem1-lemma}. } 
For all conjunctions $R, R',$ and  $R''$, where $R\neq\true$,
$R\stackrel{1}{\Rightarrow}_e\!\!*  ~R'$  if and only if
$(R,R'') \stackrel{1}{\Rightarrow}_e \!\! * ~(R',R'')$, and both runs have the same height.

\begin{proof}
By induction on the height of run derivations using $\mathtt{[RUN0]}$ and $\mathtt{[RUN1]}$.
\begin{itemize}
\item $k=1$:  

Using $\mathtt{[RUN0]}$, $R\stackrel{1}{\Rightarrow}\!\!*  ~R$ and  $(R,R'')\stackrel{1}{\Rightarrow}\!\!*  ~(R,R'')$
both hold, for all conjunctions $R,R''$.  Both are expanding runs since no clause $B \leftarrow \true$
is used. Hence the property holds for $k=1$.
\item $k\ge 1$:  
Let $R\stackrel{1}{\Rightarrow}_e\!\!*  ~R'$ be a run of height $k$ 
(where this run is the second premise of $\mathtt{[RUN1]}$).  
An expanding run of height $k+1$ exists using this premise if and only if
there is a clause $B \leftarrow S$, $S\neq \true$ and $S$ is a prefix of $R$ (say $R=(S,S')$),
that is, there is a run $(B,S')\stackrel{1}{\Rightarrow}_e \!\!* ~R'$.
This holds if and only if there is 
an expanding run of height $k+1$, 
$(B,S',R'')\stackrel{1}{\Rightarrow}_e\!\!*  ~(R',R'')$ constructed using the same clause
and the premise  
$(R,R'') \stackrel{1}{\Rightarrow}_e\!\!* ~(R',R'')$ which is also a run of height $k$ (for any $R''$)
by the inductive hypothesis. 
\end{itemize}
\end{proof}

\noindent
\textbf{Lemma \ref{opsem1}. }
There is a contextual small-step transition $R\stackrel{2}{\Rightarrow} R'$ if and only if  there is a run $R  \stackrel{1}{\Rightarrow}_e\!\!*  ~(B,R' )\stackrel{1}{\Rightarrow} R'$, that is, an expanding run followed by a step using a clause $B\leftarrow \true$.
\begin{proof}
Separate proofs for the if and only if directions. 

\noindent
$\rightarrow$:
By induction on the height $k \ge 1$ of the derivation tree for $R  \stackrel{2}{\Rightarrow} R'$.
\begin{itemize}
\item $k=1$:  
\[
\begin{array}{llll}
&B  \stackrel{2}{\Rightarrow} \true~~
&\text{using }\mathtt{[RES0]} \text{ with } B \leftarrow \true \in [P]\\
\rightarrow~~& (B,\true) \stackrel{1}{\Rightarrow} (\true,\true)~~~&\text{using }\mathtt{[RES]}\\
\rightarrow~~& B \stackrel{1}{\Rightarrow}_e\!\! *~ B \stackrel{1}{\Rightarrow} \true~~~&
\text{using }\mathtt{[RUN1]},\mathtt{[RUN0]}.\\
\end{array}
\]

\item $k \ge 1$:
Assume that  the proposition holds for derivations of height  $k$.  
The $k+1^{th}$
step of a derivation of $R  \stackrel{2}{\Rightarrow} R'$ uses 
$\mathtt{[RES1}]$, $\mathtt{[CONJ0]}$ or $\mathtt{[CONJ1]}$.  

\[
\begin{array}{llll}

\mathtt{[RES1}]& &B \stackrel{2}{\Rightarrow} R_1 & B \leftarrow R  \in_k [P], R\neq\true\\
&&R \stackrel{1}{\Rightarrow}_e\!\!*  ~(B',R_1) \stackrel{1}{\Rightarrow} R_1&\text{by ind. hyp. and } B'\leftarrow \true \in [P]\\
&&B \stackrel{1}{\Rightarrow}  ~R&\text{using }\mathtt{[RES}]\\
& \rightarrow &B \stackrel{1}{\Rightarrow}_e\!\!*  ~(B',R_1)&\text{using } \mathtt{[RUN1}]\\
& \rightarrow &B \stackrel{1}{\Rightarrow}_e\!\!*  ~(B',R_1) \stackrel{1}{\Rightarrow} R_1&\\
\mathtt{[CONJ0]}& &R_1 \wedge R_2\stackrel{2}{\Rightarrow}  R_2&\\
&&R_1 \stackrel{1}{\Rightarrow}_e\!\!*  ~ B' \stackrel{1}{\Rightarrow}\true&\text{by ind. hyp. and }B'\leftarrow \true \in [P]\\
& \rightarrow &(R_1,R_2) \stackrel{1}{\Rightarrow}_e\!\!*  ~ (B' ,R_2)\stackrel{1}{\Rightarrow}R_2&\text{by Lemma \ref{opsem1-lemma}}\\
\mathtt{[CONJ1]}& &R_1 \wedge R_2 \stackrel{2}{\Rightarrow}  R'_1 \wedge R_2&\\
&&R_1 \stackrel{1}{\Rightarrow}_e\!\!*  ~(B',R'_1) \stackrel{1}{\Rightarrow} R'_1&\text{by ind. hyp. and }B'\leftarrow \true \in [P]\\
& \rightarrow &(R_1,R_2) \stackrel{1}{\Rightarrow}\!\!*  ~(B',R'_1,R_2)\stackrel{1}{\Rightarrow} (R'_1,R_2)&\text{by Lemma \ref{opsem1-lemma}}\\
\end{array}
\]

\end{itemize}

\noindent
$\leftarrow$: By induction on the height $k$ of the $\stackrel{1}{\Rightarrow}\!\!*$ run.
\begin{itemize}
\item
$k=1$: $R=(B,R')$ and the run has the form $(B,R') \stackrel{1}{\Rightarrow}_e\!\!*  ~(B,R') 
\stackrel{1}{\Rightarrow} R'$
using $\mathtt{[RUN0}]$ and $\mathtt{[RUN1}]$, where there is a clause
$B \leftarrow \true \in [P]$. 

If $R' = \true$ there is a derivation
of $(B \wedge R') \stackrel{2}{\Rightarrow} R'$ using $\mathtt{[RES0}]$.  If $R'\neq \true$ there is a derivation of
$(B \wedge R') \stackrel{2}{\Rightarrow} R'$ using $\mathtt{[CONJ0}]$, since $B \stackrel{2}{\Rightarrow} \true$ using 
$\mathtt{[RES0}]$.  

\item $k \ge 1$:  Assume that the property holds for runs of height $k$. 

\noindent
Let
$R\stackrel{1}{\Rightarrow}_e\!\!* ~(B'',R') \stackrel{1}{\Rightarrow} R'$ be a run of height $k$,
where $B''\leftarrow \true \in [P]$.
Using Lemma \ref{expanding-run}, this can be written as
$(B',S')\stackrel{1}{\Rightarrow}_e\!\!* ~(B'',S'',S') \stackrel{1}{\Rightarrow} (S'',S') $, where 
$R=(B',S')$ and $R'=(S'',S')$.
Let there be a clause $B \leftarrow B',R''\in [P]$, where $S'=(R'',S''')$.
Then 
$(B,S''') \stackrel{1}{\Rightarrow}_e\!\!* ~(B'',S'',R'',S''') \stackrel{1}{\Rightarrow} (S'',R'',S''')$ 
is a run of height $k+1$,
constructed using 
$\mathtt{[RUN1}]$. We now show that 
$(B \wedge S''')  \stackrel{2}{\Rightarrow}(S'' \wedge R'' )\wedge S''' $.
\[
\begin{array}{lll}
\\
&(B',S')\stackrel{1}{\Rightarrow}_e\!\!* ~(B'',S'',S') &\text{is an expanding run}\\
&B'\stackrel{1}{\Rightarrow}_e\!\!* ~(B'',S'') &\text{is an expanding run, by Lemma \ref{opsem1-lemma} }\\
&B'\stackrel{2}{\Rightarrow} S''&B'' \leftarrow \true\text{ and using  ind. hyp. }\\
\rightarrow&(B' \wedge R'')\stackrel{2}{\Rightarrow} (S'' \wedge R'')&\text{using rule }\mathtt{[CONJ1}]\\
\rightarrow&B\stackrel{2}{\Rightarrow} (S'' \wedge R")&\text{using rule }\mathtt{[RES1}]\\
&&\text{and clause }B \leftarrow B',R''\in [P]\\
\rightarrow&(B \wedge S''')  \stackrel{2}{\Rightarrow}(S'' \wedge R'' ) \wedge S''' &\text{using rule }\mathtt{[CONJ1}]\\
\\

\end{array}
\]

\end{itemize}

\end{proof}
\noindent
\textbf{Theorem \ref{opsem2}.}

For a given set of Horn clauses, $R \stackrel{1}{\Rightarrow}\!\!* ~\true$ 
if and only if 
$R \stackrel{2}{\Rightarrow}\!\!*  ~\true $.
\begin{proof} Separate cases for the \emph{if} and \emph{only if }implications.

\begin{itemize}

\item 
$R \stackrel{1}{\Rightarrow}\!\!* ~\true$ implies $R \stackrel{2}{\Rightarrow}\!\!*  ~\true$.
Let $k$ be the number of steps in the run $R \stackrel{1}{\Rightarrow}\!\!* ~\true$ that 
use a clause of form 
$B \leftarrow \true$. Clearly the final step in the run uses such a clause.
Thus the run consists of $k$ (possibly empty) expanding runs each followed by the use of
a clause of form 
$B \leftarrow \true$.
By Lemma \ref{opsem1}, each such section corresponds
to a $\stackrel{2}{\Rightarrow}$ step. By induction on $k$ we can construct a run 
$R \stackrel{2}{\Rightarrow}\!\!*  ~\true$ of length $k$.

\item
$R \stackrel{2}{\Rightarrow}\!\!*  ~\true$ implies $R \stackrel{1}{\Rightarrow}\!\!* ~\true$. 
Each
step $R \stackrel{2}{\Rightarrow}  R'$  corresponds to a run $R \stackrel{1}{\Rightarrow}\!\!*  ~R'$
using Lemma \ref{opsem1}.  By induction on the height of the $\stackrel{2}{\Rightarrow}\!\!*$ run
we can concatenate the runs into a single run $R \stackrel{1}{\Rightarrow}\!\!* ~\true$. 
\end{itemize}

\end{proof}

\noindent
\textbf{Lemma \ref{folding2}.}
Let $S_1, S_2$ be foldable conjunctions or $\true$.  Then
\[
S_1 \cong S_2 \text{ if and only if }\fold(S_1) \cong \fold(S_2). 
\]
\begin{proof}
If $S_1=S_2=\true$ the result is trivial.  Otherwise, 
let $S_1\cong S_2$ be (a renaming of)
\[
\C^n=((\cdots(R^{k_1}_1 \wedge R_2)^{k_2} \wedge \cdots R_{n-1})^{k_{n-1}}\wedge R_n)^{k_n}.
\]
The proof is by induction on $n\ge 1$.
\begin{itemize}
\item $n=1$:  $\C^1=R^{k_1}_1$. 
$\fold(R^{k_1}_1)= \mathtt{stack(t)}$, where $\mathtt{stack(t)}$ is an instance of the 
head of suffix clause $k_1$
uniquely determined by $R_1$, $k_1$, the length of $R_1$ and whether or not the first atom in
$R_1$ has predicate $\mathtt{stack}$ (determining type 1 or 2).
Furthermore, $\mathtt{stack(t)}$ uniquely determines
the instance of its body, $R_1$ (up to renaming of local variables in $R_1$, which are not in $\mathtt{t}$),
since the function symbol in $\mathtt{t}$ is unique to suffix clause $k_1$ (of type 1 or 2).
\item $n\ge 1$: 
$S_1=S_2=(\C^{n-1} \wedge R_n)^{k_n}$.  By the inductive hypothesis. $\fold(\C^{n-1})$
uniquely determines its result $\mathtt{stack(t)}$.  Then evaluating $\fold((\mathtt{stack(t)} \wedge R_n)^{k_n})$
gives a unique result $\mathtt{stack(t')}$, determined by $k_n$ and the length of $R_n$
and it is of type 2 since the first atom in the suffix body is $\mathtt{stack(t)}$.
Furthermore, $\mathtt{stack(t')}$ uniquely determines the instance $(\mathtt{stack(t)} \wedge R_n)$
(up to renaming of local variables which are not in $\mathtt{t'}$), since the function symbol in 
$\mathtt{t'}$ is unique to suffix clause $k_n$ of type 2. 

\end{itemize}

\end{proof}

For the proof of Lemma \ref{foldable} and Theorem \ref{fig3-fig4} we introduce
the concept $\aof(R)$, meaning that $R$ is either an atom in the signature
of a given set of Horn clauses $P$ (a $P$-atom) or else foldable using the clauses $P_{\mathrm{suff}}$.
This notation allows more concise proofs.

The following property is used in the proofs.  Given a set of Horn clauses $P$:
\[
\begin{array}{ll}
\aof(R) \text{ implies } &R \text{ is a $P$-atom}\\
&\text{or  } \\
&R=R_1 \wedge R_2, R \text{ is foldable,}\\
&\aof(R_1) \text{ and }\\
&R_2 \text{ is a suffix of a clause in }P
\end{array}
\]

\noindent
\textbf{Lemma \ref{foldable}.}
Let $P$ be a set of Horn clauses, $B$ be an atom in the signature of $P$ and $R$ be 
foldable.  Then
 \begin{enumerate}
 \item
If $B\stackrel{2}{\Rightarrow} R_1$ then $R_1$ is a foldable conjunction or $\true$.
 \item
If $R\stackrel{2}{\Rightarrow} R_1$ then $R_1$ is a foldable conjunction or $\true$.
 \end{enumerate}
With the notation just introduced, we can combine the two cases of the Lemma.
Let $\aof(R)$. Then
\begin{itemize}
 \item
If $R\stackrel{2}{\Rightarrow} R_1$ then $R_1$ is a foldable conjunction or $\true$.
 \end{itemize}

\begin{proof}
Let $P_{\mathtt{suff}}$ be the set of suffix clauses for $P$. The proof is by
induction on the height $m\ge 1$ of the derivation.
\begin{itemize}
\item 
$m=1$: The derivation is $B\stackrel{2}{\Rightarrow} \true$ using rule $\mathtt{[RES0]}$,
so the property is established with $R_1=\true$.
\item
$m\ge 1$: Assume the property holds for derivations of height $m$.  Consider a derivation of
height $m+1$, whose final step is $B\stackrel{2}{\Rightarrow} \,R_1$ or
$R\stackrel{2}{\Rightarrow} R_1$ where $\aof(R)$.
The final
step of the derivation uses $\mathtt{[RES1}_1]$, $\mathtt{[RES1}_2]$, $\mathtt{[RES1}_3]$
or $\mathtt{[CONJ0]}$ or $\mathtt{[CONJ1]}$.
The following cases establish that $R_1$ is foldable or $\true$..
\[
\begin{array}{llll}
\mathtt{[RES1}_1]&& B\stackrel{2}{\Rightarrow} \,R_1&B \leftarrow B_1  \in [P]\\
& \rightarrow &B_1\stackrel{2}{\Rightarrow} \,R_1&B_1 \text{ is an atom}\\
& \rightarrow &R_1 \text{ is foldable or }\true&\text{by ind. hyp.}\\
\mathtt{[RES1}_2]& &B\stackrel{2}{\Rightarrow} \,R_1& B \leftarrow B_1 \wedge R_1  \in [P]\\
& \rightarrow &R_1  \text{ is foldable }&\text{since }R_1\text{  is a clause suffix}\\
\mathtt{[RES1}_3]& &B\stackrel{2}{\Rightarrow} \,R_1& R_1=(R_2 \wedge R), B \leftarrow B_1 \wedge R  \in [P]\\
&\rightarrow&B_1\stackrel{2}{\Rightarrow} R_2 &B_1 \text{ is an atom, }R_2 \neq \true\\
& \rightarrow &R_2  \text{ is foldable }&\text{by ind. hyp.}\\
& \rightarrow &(R_2 \wedge R)  \text{ is foldable }&\text{since }R \text{ is a clause suffix}\\
\mathtt{[CONJ0]}& &R\stackrel{2}{\Rightarrow} R_1&\text{where }R=(R_2 \wedge R_1)\\
& \rightarrow &R_1 \text{ is foldable }&R_1 \text{ is a suffix, since }\aof(R)\\
\mathtt{[CONJ1]}& &R\stackrel{2}{\Rightarrow} R_1&\text{where }R=(R_2 \wedge R'), R_1=(R_3 \wedge R')\\
&\rightarrow&\aof(R_2)&\\
&&\text{and }R' \text{ is a suffix}& \text{since }\aof(R)\\
&\rightarrow&R_2 \stackrel{2}{\Rightarrow} R_3&\text{premise of }\mathtt{[CONJ1]}, R_3 \neq \true\\
&\rightarrow&R_3 \text{ is foldable}&\text{by ind. hyp}\\
&\rightarrow&(R_3 \wedge R') \text{ is foldable}&\\
\end{array}
\]

\end{itemize}
Hence, the property holds for derivations of height $m$, for all $m\ge 1$.
\end{proof}

\noindent
\textbf{Theorem \ref{fig3-fig4}.
}Let $P$ be a set of Horn clauses, $B$ be an atom in the signature of $P$ and $R,R_1$ be 
foldable conjunctions.  Then
 \begin{enumerate}
 \item
$B\stackrel{2}{\Rightarrow} R_1$ if and only if $\exists k_1.B\stackrel{3}{\Rightarrow} \fold(R_1^{k_1})$.
 \item
 $R\stackrel{2}{\Rightarrow} R_1$ if and only if $\exists k_1,k.\fold(R^k)\stackrel{3}{\Rightarrow} \fold(R_1^{k_1})$.
 \end{enumerate}

\begin{proof}
We first prove the forward direction of the equivalences.
The proof is by
induction on the height $m\ge 1$ of the $\stackrel{2}{\Rightarrow}$ derivations.
\begin{itemize}
\item 
$m=1$: $B \stackrel{2}{\Rightarrow} \,\true$ using $\mathtt{[RES0]}$, where
there is a clause $B \leftarrow \true \in [P]$.  This holds if and only if there is
a derivation $B \stackrel{3}{\Rightarrow} \,\true$, that is $B \stackrel{3}{\Rightarrow} \,\fold(\true)$,
using $\mathtt{[UNF0]}$.
\item
$m\ge 1$: Assume the property holds for derivations of height $m$.  Consider a derivation of
height $m+1$, whose final step is $B \stackrel{2}{\Rightarrow} \,R_1$ or
$R \stackrel{2}{\Rightarrow} \,R_1$.  
The final
step of the derivation uses $\mathtt{[RES1}_1]$, $\mathtt{[RES1}_2]$, $\mathtt{[RES1}_3]$
(for $B \stackrel{2}{\Rightarrow} R_1$), or $\mathtt{[CONJ0]}$ or $\mathtt{[CONJ1]}$
(for $R \stackrel{2}{\Rightarrow} R_1$).
The following cases establish if this holds then
$\exists k_1 .B \stackrel{3}{\Rightarrow} \,\fold(R^{k_1}_1)$ or
$\exists k,k_1 .\fold(R^k) \stackrel{3}{\Rightarrow} \,\fold(R^{k_1}_1)$ respectively.
\[
\begin{array}{llll}
\mathtt{[RES1}_1]&& B\stackrel{2}\Rightarrow R&\\
& \rightarrow &B'\stackrel{2}\Rightarrow R&\text{where }B \leftarrow B'  \in [P]\\
& \rightarrow &\exists k.B'\stackrel{3}\Rightarrow \fold(R^k)&\text{by ind. hyp. }\\
& \rightarrow &\exists k. B\stackrel{3}\Rightarrow \fold(R^k)&\text{using }\mathtt{[UNF1]} \\
&&&  \text{and }B \leftarrow B'  \in [P \cup P_{\mathrm{suff}}]\\
\mathtt{[RES1}_2]&& B\stackrel{2}\Rightarrow R&\\
& \rightarrow &B'\stackrel{2}\Rightarrow \true&\text{where }B \leftarrow B' \wedge R  \in_k  [P]\\
& \rightarrow &B'\stackrel{3}\Rightarrow \true&\text{by ind. hyp., and }\fold(\true)=\true\\
& \rightarrow &\exists k.B\stackrel{3}\Rightarrow \fold(R^k)&\text{using }\mathtt{[FOLD0]}\\
 &&&\text{and }B \leftarrow B' \wedge R  \in_k   [P \cup P_{\mathrm{suff}}]\\
\mathtt{[RES1}_3]&& B\stackrel{2}\Rightarrow (R_1 \wedge R)&\\
& \rightarrow &B'\stackrel{2}\Rightarrow R_1&\text{where }B \leftarrow B' \wedge R  \in_k  [P],\\
&&& R_1\neq \true\\
& \rightarrow &\exists k_1.B'\stackrel{3}\Rightarrow \fold(R_1^{k_1})&\text{by ind. hyp. }\\
& \rightarrow &\exists k,k_1.B\stackrel{3}\Rightarrow \fold(\fold(R_1^{k_1}) \wedge R)^k)&
   \text{using }\mathtt{[FOLD1]} \\
 &&&\text{and }B \leftarrow B' \wedge R  \in_k  [P \cup P_{\mathrm{suff}}]\\
 & \rightarrow &\exists k,k_1.B\stackrel{3}\Rightarrow \fold(R_1^{k_1}\wedge R^k)&\\
 \end{array}
\]
\[
\begin{array}{llll}
\mathtt{[CONJ0]}&& (R_1 \wedge R)\stackrel{2}\Rightarrow R&(R_1 \wedge R) \text{ foldable, }\aof(R_1)\\
& \rightarrow &R_1 \stackrel{2}\Rightarrow\true&\text{premise of }\mathtt{[CONJ0]} \\
& \rightarrow &B'\stackrel{3}\Rightarrow \fold(\true)&\text{by ind. hyp. where } B'=\fold(R^{k_1}_1)\\
& &&\text{or } B' =R_1\text{ is an atom}\\
&&\exists k.B \leftarrow B'\wedge R \in_k [P_{\mathrm{suff}}]&
  \text{where }B= \fold(\fold(R_1^{k_1}) \wedge R)^k)\\
&&&\text{or }B= \fold(B'\wedge R)^k)\\

& \rightarrow &B\stackrel{3}\Rightarrow \fold(R^k)&\text{using }\mathtt{[FOLD0]}\\
& \rightarrow &\exists k,k_1.\fold(R_1^{k_1} \wedge R^k)\stackrel{3}\Rightarrow \fold(R^k)&\text{if }B= \fold(\fold(R_1^{k_1}) \wedge R)^k)\\
& &\text{or }\exists k.\fold(R_1 \wedge R)^k)\stackrel{3}\Rightarrow \fold(R^k)&\text{if }B= \fold(B'\wedge R)^k)\\
\mathtt{[CONJ1]}&& (R_1 \wedge R)\stackrel{2}\Rightarrow (R_2 \wedge R)&(R_1 \wedge R) \text{ foldable, }\aof(R_1)\\
& \rightarrow &R_1\stackrel{2}\Rightarrow R_2&\text{premise of }\mathtt{[CONJ1]} \\
& \rightarrow &\exists k_2.B'\stackrel{3}\Rightarrow \fold(R^{k_2}_2)&\text{by ind. hyp., where }
B'=\fold(R^{k_1}_1)\\
& &&\text{or }B' \text{ is an atom}\\
&&\exists k.B \leftarrow B'\wedge R \in_k [P_{\mathrm{suff}}]&
\text{where }B= \fold(\fold(R_1^{k_1}) \wedge R)^k)\\
&&&\text{or }B= \fold(B'\wedge R)^k)\\
& \rightarrow &\exists k_2.B \stackrel{3}\Rightarrow \fold(\fold(R_2^{k_2}) \wedge R)^k)&
\text{using }\mathtt{[FOLD1]} \\
& \rightarrow &\exists k,k_1,k_2. \fold(R_1^{k_1} \wedge R^k)&\\
&&~~~~~~~~~~~~ \stackrel{3}\Rightarrow \fold(R_2^{k_2} \wedge R^k)&\\
\end{array}
\]

\end{itemize}

Next we prove the reverse direction, by induction on the height $m \ge 1$ of the 
$\stackrel{3}\Rightarrow$ derivation.

\begin{itemize}
\item 
$m=1$: $B \stackrel{3}{\Rightarrow} \,\fold(\true)$ (where $\fold(\true)=\true$) using 
$\mathtt{[UNF0]}$, where
there is a clause $B \leftarrow \true \in [P]$.  This implies there is
a derivation $B \stackrel{2}{\Rightarrow} \true$
using $\mathtt{[RES0]}$.
\item
$m\ge 1$: Assume the property holds for derivations of height $m$.  Consider a derivation of
height $m+1$, whose final step is $B \stackrel{3}{\Rightarrow} \fold(R^{k_1}_{k_1})$, or
$\fold(R^k) \stackrel{3}{\Rightarrow} \fold(R^{k_1}_{k_1})$
and the final step uses $\mathtt{[UNF1]}$, $\mathtt{[FOLD0]}$ or $\mathtt{[FOLD1]}$.
The following cases establish that if this holds then 
$B \stackrel{2}{\Rightarrow} R_1$ or $R \stackrel{2}{\Rightarrow} R_1$
respectively.
First, consider the case $B \stackrel{3}{\Rightarrow} \fold(R^{k_1}_1)$.
\[
\begin{array}{llll}
\mathtt{[UNF1]}&& B\stackrel{3}\Rightarrow \fold(R_1^{k_1})&
\text{where }B \leftarrow B'  \in [P \cup P_{\mathrm{suff}}]\\
& \rightarrow &B'\stackrel{3}\Rightarrow \fold(R_1^{k_1})&\text{premise of }\mathtt{[UNF1]}\\
& \rightarrow &B'\stackrel{2}\Rightarrow R_1&\text{by ind. hyp. }\\
& \rightarrow &B\stackrel{2}\Rightarrow R_1&\text{using }\mathtt{[RES1}_1] \\
&&&  \text{and }B \leftarrow B'  \in [P]\\
\mathtt{[FOLD0]}&& B\stackrel{3}\Rightarrow \fold(R_1^{k_1})&
\text{where }B \leftarrow B' \wedge R_1  \in_k  [P \cup P_{\mathrm{suff}}]\\
& \rightarrow &B'\stackrel{3}\Rightarrow \true&\text{premise of }\mathtt{[FOLD0]}\\
	
& \rightarrow &B'\stackrel{2}\Rightarrow \true&\text{by ind. hyp. }\\
& \rightarrow &B\stackrel{2}\Rightarrow R_1&\text{using }\mathtt{[RES1}_2]\\
 &&&\text{and }B \leftarrow B' \wedge R_1  \in_k  \in [P]\\
\mathtt{[FOLD1]}&&B\stackrel{3}\Rightarrow \fold(\fold(R_1^{k_1}) \wedge R)^{k})&
\text{where }B \leftarrow B' \wedge R  \in_k  [P \cup P_{\mathrm{suff}}]\\
& \rightarrow &B'\stackrel{3}\Rightarrow \fold(R^{k_1}_1)&\\
& \rightarrow &B'\stackrel{2}\Rightarrow R_1&\text{by ind. hyp. }\\
& \rightarrow &B\stackrel{2}\Rightarrow R_1 \wedge R&
   \text{using }\mathtt{[RES1}_3] \\
 &&&\text{and }B \leftarrow B' \wedge R  \in_k  \in [P]\\
 \end{array}
\]
Now consider a derivation of height $m+1$ where the final step is 
$\fold(R^k)\stackrel{3}{\Rightarrow} \fold(R_1^{k_1})$.  The final step
can use 
$\mathtt{[FOLD0]}$ or $\mathtt{[FOLD1]}$.
The following property relating folding and suffix clauses is used.
\begin{itemize}
\item
$B \leftarrow R \in_k P_{\mathrm{suff}}$ is a suffix clause of type 1 if and only if $B=\fold(R^k)$.
\item
$B \leftarrow B' \wedge R \in_k P_{\mathrm{suff}}$ is a suffix clause of type 2 if and only if 
$B' = \fold(R^{k_1}_1)$ for some $R_1^{k_1}$ and $B=\fold(R^{k_1}_1 \wedge R^k)$.
\end{itemize}
\[
\begin{array}{llll}
\mathtt{[FOLD0]}&& \fold(R^k)\stackrel{3}\Rightarrow \fold(R_1^{k_1})&\text{let } \fold(R^k) =B,~
B \leftarrow B' \wedge R_1  \in_k  [P \cup P_{\mathrm{suff}}]\\
&&&\text{hence }R=R_2 \wedge R_1\\
&&&\text{where } \aof(R_2)\\
& \rightarrow &B'\stackrel{3}\Rightarrow \true&\text{premise of }\mathtt{[FOLD0]}\\
& \rightarrow &R_2\stackrel{2}\Rightarrow \true&\text{by ind. hyp. }\\
&  &&\text{either }B'=\fold(R_2) \text { or }R_2=B'\\

& \rightarrow &R_2 \wedge R_1\stackrel{2}\Rightarrow R_1&\text{using }\mathtt{[CONJ0]}\\
& \rightarrow & R \stackrel{2}\Rightarrow R_1 & \text{since }R=R_2 \wedge R_1\\
\mathtt{[FOLD1]}&& \fold(R^k)\stackrel{3}\Rightarrow \fold(R_1^{k})&\text{let } \fold(R^k) =B,~
B \leftarrow B' \wedge R_3  \in_k  [P \cup P_{\mathrm{suff}}]\\
&&&\text{hence }R=R_4 \wedge R_3\\
&&&\text{where } \aof(R_4)\\
&&&\text{and let } \fold(R_1^k)=\fold(\fold(R_2^{k_2}) \wedge R_3)^k)\\

& \rightarrow & B_1 \stackrel{3}\Rightarrow \fold(R^{k_2}_2 )&\text{premise of }\mathtt{[FOLD1]}\\
& \rightarrow &R_4\stackrel{2}\Rightarrow R_2&\text{by ind. hyp. }\\
&  &&\text{either }B'=\fold(R_4) \text { or }R_4=B'\\
& \rightarrow &R_4 \wedge R_3\stackrel{2}\Rightarrow R_2 \wedge R_3&\text{using }\mathtt{[CONJ1]}\\
& \rightarrow & R \stackrel{2}\Rightarrow R_1 & \text{since }R=R_4 \wedge R_3, R_1=R_2 \wedge R_3\\
 \end{array}
\]

\end{itemize}
\end{proof}

\noindent
\textbf{Lemma \ref{final-value-lemma}.}
Let $P$ be a set of big-step rules satisfying the final value condition (Definition \ref{finalvalue}).
Let $(\langle s_1,\rho_1\rangle \Downarrow v_1) \stackrel{3}{\Rightarrow} 
(\langle s_2,\rho_2\rangle \Downarrow v_2)$ be derived 
using the rules in Figure \ref{ss_v3}.
Then $v_1 = v_2$.

\begin{proof}
The shortest proof of $(\langle s_1,\rho_1\rangle \Downarrow v_1) \stackrel{3}{\Rightarrow} 
(\langle s_2,\rho_2\rangle \Downarrow v_2)$ using the rules
in Figure \ref{ss_v3} has height 2, using rules $\mathtt{[FOLD0]}$ and $\mathtt{[UNF0]}$.  The proof is by induction on 
derivations of height $k \ge 2$.
\begin{itemize}
\item $k=2$: 
Using $\mathtt{[FOLD0]}$, there are clauses;
\[
\begin{array}{l}
(\langle s,\rho\rangle \Downarrow v_1) \leftarrow (\langle s_2,\rho_2\rangle \Downarrow v_2),R_2.\\
(\langle s',\rho'\rangle \Downarrow v_2) \leftarrow R_2.\\
\end{array}
\]  
If the premise of $\mathtt{[FOLD0]}$, $(\langle s_2,\rho_2\rangle \Downarrow v_2)\stackrel{3}{\Rightarrow} \true$ 
succeeds using $\mathtt{[UNF0]}$,  the conclusion of $\mathtt{[UNF1]}$,
$(\langle s,\rho\rangle \Downarrow v_1)  \stackrel{3}{\Rightarrow} (\langle s',\rho'\rangle \Downarrow v_2)$ 
is derived.
Since both clause satisfy the final value condition and have the same suffix $R_2$, $v_1=v_2$. 

\item $k>2$:
Assume the property holds for derivations of height $k$.  In a derivation of height $k+1$, the final step uses 
$\mathtt{[UNF1]}$, $\mathtt{[FOLD0]}$ or $\mathtt{[FOLD1]}$.
\[
\begin{array}{llll}
\mathtt{[UNF1]}&&\text{Premise }(\langle s_1,\rho_1\rangle \Downarrow v_1)  \stackrel{3}{\Rightarrow} (\langle s_2,\rho_2\rangle \Downarrow v_2) ~
&B_1 \stackrel{3}{\Rightarrow}  B_2\\\
&&\text{Conclusion }(\langle s,\rho\rangle \Downarrow v)  
\stackrel{3}{\Rightarrow} (\langle s_2,\rho_2\rangle \Downarrow v_2)
&B \stackrel{3}{\Rightarrow}  B_2\\
&& \text{Clause }(\langle s,\rho\rangle \Downarrow v) \leftarrow (\langle s_2,\rho_2\rangle \Downarrow v) &B \leftarrow B_1\\
&&\text{ satisfies final value condition }&\\
&\rightarrow&v_1=v_2, \text{ by ind. hyp. and }v= v_1 \text{ hence }v_1=v_2&\\
\mathtt{[FOLD0]}&&\text{Conclusion }(\langle s,\rho\rangle \Downarrow v)  
\stackrel{3}{\Rightarrow} (\langle s',\rho'\rangle \Downarrow v')
&B \stackrel{3}{\Rightarrow}  B'\\
&& \text{Clauses }&\\
&&(\langle s,\rho\rangle \Downarrow v) \leftarrow (\langle s_1,\rho_1\rangle \Downarrow v_1),R_2&
B \leftarrow B_1,R_2\\
&&(\langle s',\rho'\rangle \Downarrow v') \leftarrow R_2&
B' \leftarrow R_2\\
&&\text{satisfy final value condition, have common suffix }R_2&\\
&\rightarrow&v=v'&\\
\mathtt{[FOLD1]}&&\text{Conclusion }(\langle s,\rho\rangle \Downarrow v)  
\stackrel{3}{\Rightarrow} (\langle s',\rho'\rangle \Downarrow v')
&B \stackrel{3}{\Rightarrow}  B'\\
&& \text{Clauses }&\\
&&(\langle s,\rho\rangle \Downarrow v) \leftarrow (\langle s_1,\rho_1\rangle \Downarrow v_1),R_2&
B \leftarrow B_1,R_2\\
&&(\langle s',\rho'\rangle \Downarrow v') \leftarrow (\langle s_3,\rho_3\rangle \Downarrow v_3),R_2&
B' \leftarrow B_3,R_2\\
&&\text{satisfy final value condition, have common suffix }R_2&\\
&\rightarrow&v=v'&\\
\end{array}
\]

\end{itemize}
\end{proof}

\section{Semantics of Janus}\label{secA2}

\subsection{Janus big-step semantics}

The big-step semantics for Janus is taken from Lam \emph{et al.} \cite{LamiLS24}.
The rules from Figs. 2, 3 and 4 of that paper are rewritten directly as Horn
clauses, shown below.  The code below can be run as a Prolog program and
used to interpret Janus programs.
\begin{lstlisting}[basicstyle=\footnotesize\ttfamily]
% Janus big-step semantics

:- module(janus,_).

% Lami, Lanese and Stefani. A Small-step semantics for Janus, 2024

%%%%%%%%%%%%%%%%%%%%%%
% Expressions. Fig. 2
%%%%%%%%%%%%%%%%%%%%%%


eval(var(X),St,V) :-
	find(St,X,V).
eval(const(N),_St,N).
%eval(array(X,E),St0,VV) :-
%	eval(E,St0,V),
%	arrayFind(St0,X,V,XV).
eval(add(E1,E2),St0,V) :-
	eval(E1,St0,V1),
	eval(E2,St0,V2),
	V is V1+V2.
eval(sub(E1,E2),St0,V) :-
	eval(E1,St0,V1),
	eval(E2,St0,V2),
	V is V1-V2.
eval(mul(E1,E2),St0,V) :-
	eval(E1,St0,V1),
	eval(E2,St0,V2),
	V is V1*V2.
eval(div(E1,E2),St0,V) :-
	eval(E1,St0,V1),
	eval(E2,St0,V2),
	V is V1/V2.
eval(E1>E2,St0,V) :-
	eval(E1,St0,V1),
	eval(E2,St0,V2),
	gt(V1,V2,V).
eval(E1<E2,St0,V) :-
	eval(E1,St0,V1),
	eval(E2,St0,V2),
	lt(V1,V2,V).
eval(E1>=E2,St0,V) :-
	eval(E1,St0,V1),
	eval(E2,St0,V2),
	gte(V1,V2,V).
eval(E1=<E2,St0,V) :-
	eval(E1,St0,V1),
	eval(E2,St0,V2),
	lte(V1,V2,V).
eval(E1==E2,St0,V) :-
	eval(E1,St0,V1),
	eval(E2,St0,V2),
	eq(V1,V2,V).
eval(logicaland(E1,E2),St0,V) :-
	eval(E1,St0,V1),
	eval(E2,St0,V2),
	V is V1/\V2.
eval(not(E),St0,V) :-
	eval(E,St0,V1),
	negate(V1,V).
	

	
%%%%%%%%%%%%%%%%%%%%%%%%%%%%%%	
%----- statements ----- Fig. 3
%%%%%%%%%%%%%%%%%%%%%%%%%%%%%%

eval(asg(var(X),Op,E),(_Env,St0),St1) :-			% ASVAR
	find(St0,X,XV),
	eval(E,St0,EV),
	evalOp(Op,XV,EV,St0,V),
	save(X,V,St0,St1).
%eval(asgarray(array(X,E1),Op,E),(_Env,St0),St1) :-	% ASSARRAY not used
%	eval(E1,St0,V1),
%	eval(E,St0,V),
%	arrayfind(St0,X,V1,XV1),
%	evalOp(Op,XV1,V,St0,V2),
%	arraysave(X,V1,V2,St0,St1).
eval(call(F),(Env,St),St1) :-						% CALL
	def(F,Env,S),
	eval(S,(Env,St),St1).
eval(uncall(F),(Env,St),St1) :-						% UNCALL
	def(F,Env,S),
	invert(S,S1),
	eval(S1,(Env,St),St1).
eval(seq(S1,S2),(Env,St),St2) :-					% SEQ
	eval(S1,(Env,St),St1),
	eval(S2,(Env,St1),St2).
eval(ifthenelse(E1,S1,_S2,E2),(Env,St),St2) :-		% IFTRUE
	eval(E1,St,1),
	eval(S1,(Env,St),St1),
	eval(E2,St,1),
	St2=St1.		% final value condition
eval(ifthenelse(E1,_S1,S2,E2),(Env,St),St2) :-		% IFFALSE
	eval(E1,St,0),
	eval(S2,(Env,St),St1),
	eval(E2,St,0),
	St2=St1.		% final value condition
eval(from(E1,S1,S2,E2),(Env,St),St2) :-			% LOOPMAIN	
	eval(E1,St,1),
	eval(S1,(Env,St),St1),
	eval(loop(E1,S1,E2,S2),(Env,St1),St2).
eval(loop(_E1,_S1,E2,_S2),(_Env,St),St1) :-			% LOOPBASE
	eval(E2,St,1),
	St1=St.			% final value condition
eval(loop(E1,S1,E2,S2),(Env,St),St3) :-			% LOOPREC
	eval(E2,St,0),
	eval(S2,(Env,St),St1),
	eval(E1,St1,0),
	eval(S1,(Env,St1),St2),
	eval(loop(E1,S1,E2,S2),(Env,St2),St3).
eval(skip,(_Env,St),St).								% SKIP
	
evalOp(add,X,Y,_St,V) :-
	V is X+Y.
evalOp(sub,X,Y,_St,V) :-
	V is X-Y.
evalOp(xor,X,Y,_St,V) :-
	V is X#Y.
	
find([(X,N)|_],X,N).
find([(Y,_)|St],X,N) :-
	X \== Y,
	find(St,X,N).

	
save(X,V,[(X,_)|St],[(X,W)|St]) :-
	W is V.
save(X,V,[(Y,M)|St],[(Y,M)|St1]) :-
	X \== Y,
	save(X,V,St,St1).
save(X,V,[],[(X,W)]) :-
	W is V.
	

gt(X,Y,1) :-
	X > Y.
gt(X,Y,0) :-
	X =< Y.
	
lt(X,Y,1) :-
	X < Y.
lt(X,Y,0) :-
	X >= Y.
	
gte(X,Y,1) :-
	X >= Y.
gte(X,Y,0) :-
	X < Y.
	
lte(X,Y,1) :-
	X =< Y.
lte(X,Y,0) :-
	X > Y.
	
eq(X,Y,1) :-
	X == Y.
eq(X,Y,0) :-
	X \== Y.
	
negate(true,false).
negate(false,true).

invertOp(add,sub).
invertOp(sub,add).

invert(asg(var(X),Op,E),asg(var(X),OpI,E)) :-
	invertOp(Op,OpI).
invert(asgarray(array(X,E1),Op,E2),asgarrary(array(X,E1),OpI,E2)) :-
	invertOp(Op,OpI).
invert(if(E1,S1,S2,E2),if(E2,S3,S4,E1)) :-
	invert(S1,S3),
	invert(S2,S4).
invert(from(E1,S1,S2,E2),from(E2,S3,S4,E1)) :-
	invert(S1,S3),
	invert(S2,S4).
invert(call(F),uncall(F)).
invert(uncall(F),call(F)).
invert(skip,skip).
invert(seq(S1,S2),seq(S3,S4)) :-
	invert(S1,S3),
	invert(S2,S4).

def(F,Env,S) :-
	member(proc(F,S),Env).
	
initStore([],[]).
initStore([X|Xs],[(X,0)|St]) :-
	initStore(Xs,St).
	
member(X,[X|_Xs]).
member(X,[Y|Xs]) :-
	member(X,Xs).

bigStepPred(eval(_,_,_)).

otherPred(evalOp(_,_,_,_,_)).
otherPred(find(_,_,_)).
otherPred(save(_,_,_,_)).
otherPred(gt(_,_,_)).
otherPred(lt(_,_,_)).
otherPred(gte(_,_,_)).
otherPred(lte(_,_,_)).
otherPred(eq(_,_,_)).
otherPred(negate(_,_)).
otherPred(def(_,_,_)).
otherPred(member(_,_)).
otherPred(invert(_,_)).


	
% Tests

% Example 1

ex1([proc(main,
	seq(asg(var(n),add,const(4)),
		seq(asg(var(x1),add,const(1)),
			seq(asg(var(x2),add,const(1)),
			call(fib))
		)
	)
	),
	proc(fib,
	from(
		var(x1)==var(x2),
		seq(asg(var(x1),add,var(x2)),
			seq(asg(var(x1),xor,var(x2)),
				seq(asg(var(x2),xor,var(x1)),
					asg(var(x1),xor,var(x2))
				)
			)
		),
		asg(var(n),sub,const(1)),
		var(n)==const(0)))
	]
	).

% Example 2

ex2a([proc(main,
	from(
		var(x1)==const(10),
		asg(var(x2),add,var(x1)),
		asg(var(x1),sub,const(1)),
		var(x1)==const(0)))
	]).
	
ex2b([proc(main,
	from(
		var(x1)==const(0),
		asg(var(x1),add,const(1)),
		asg(var(x2),add,var(x1)),
		var(x1)==const(0)))
	]).
	
	
% test

test(St) :-
	ex1(Env),
	initStore([x1,x2,n],St0),
	eval(call(main),(Env,St0),St).
\end{lstlisting}

\subsection{Derived small-step semantics for Janus }

The following Horn clauses are produced by specialising the contextual
small-step interpreter for big-step rules.

\begin{lstlisting}[basicstyle=\footnotesize\ttfamily]

% Clauses for a run
run__1(val(A),val(A)) :-
   true.
run__1(conf(A),B) :-
   smallStep__2(A,C),
   run__1(C,B).
   
 % Small step rules
smallStep__2((var(A),B),val(C)) :-
   eval__3(B,A,C).
smallStep__2((const(A),B),val(A)) :-
   true.
smallStep__2((skip,A,B),val(B)) :-
   true.
smallStep__2((call(A),B,C),D) :-
   eval__17(proc(A,E),B),
   smallStep__2((E,B,C),D).
smallStep__2((uncall(A),B,C),D) :-
   eval__17(proc(A,E),B),
   eval__5(E,F),
   smallStep__2((F,B,C),D).
smallStep__2((add(A,B),C),conf((add_3_2(D,B),C))) :-
   smallStep__2((A,C),val(D)).
smallStep__2((add(A,B),C),conf((add_3_1(B,D),C,E))) :-
   smallStep__2((A,C),conf((D,E))).
smallStep__2((sub(A,B),C),conf((sub_4_2(D,B),C))) :-
   smallStep__2((A,C),val(D)).
smallStep__2((sub(A,B),C),conf((sub_4_1(B,D),C,E))) :-
   smallStep__2((A,C),conf((D,E))).
smallStep__2((mul(A,B),C),conf((mul_5_2(D,B),C))) :-
   smallStep__2((A,C),val(D)).
smallStep__2((mul(A,B),C),conf((mul_5_1(B,D),C,E))) :-
   smallStep__2((A,C),conf((D,E))).
smallStep__2((div(A,B),C),conf((div_6_2(D,B),C))) :-
   smallStep__2((A,C),val(D)).
smallStep__2((div(A,B),C),conf((div_6_1(B,D),C,E))) :-
   smallStep__2((A,C),conf((D,E))).
smallStep__2((A>B,C),conf(('>_7_2'(D,B),C))) :-
   smallStep__2((A,C),val(D)).
smallStep__2((A>B,C),conf(('>_7_1'(B,D),C,E))) :-
   smallStep__2((A,C),conf((D,E))).
smallStep__2((A<B,C),conf(('<_8_2'(D,B),C))) :-
   smallStep__2((A,C),val(D)).
smallStep__2((A<B,C),conf(('<_8_1'(B,D),C,E))) :-
   smallStep__2((A,C),conf((D,E))).
smallStep__2((A>=B,C),conf(('>=_9_2'(D,B),C))) :-
   smallStep__2((A,C),val(D)).
smallStep__2((A>=B,C),conf(('>=_9_1'(B,D),C,E))) :-
   smallStep__2((A,C),conf((D,E))).
smallStep__2((A=<B,C),conf(('=<_10_2'(D,B),C))) :-
   smallStep__2((A,C),val(D)).
smallStep__2((A=<B,C),conf(('=<_10_1'(B,D),C,E))) :-
   smallStep__2((A,C),conf((D,E))).
smallStep__2((A==B,C),conf(('==_11_2'(D,B),C))) :-
   smallStep__2((A,C),val(D)).
smallStep__2((A==B,C),conf(('==_11_1'(B,D),C,E))) :-
   smallStep__2((A,C),conf((D,E))).
smallStep__2((logicaland(A,B),C),conf((logicaland_12_2(D,B),C))) :-
   smallStep__2((A,C),val(D)).
smallStep__2((logicaland(A,B),C),conf((logicaland_12_1(B,D),C,E))) :-
   smallStep__2((A,C),conf((D,E))).
smallStep__2((not(A),B),val(C)) :-
   smallStep__2((A,B),val(D)),
   eval__6(D,C).
smallStep__2((not(A),B),conf((not(C),D))) :-
   smallStep__2((A,B),conf((C,D))).
smallStep__2((asg(var(A),B,C),D,E),val(F)) :-
   eval__3(E,A,G),
   smallStep__2((C,E),val(H)),
   eval__7(B,G,H,E,I),
   eval__8(A,I,E,F).
smallStep__2((asg(var(A),B,C),D,E),conf((asg_14_1(A,F,B,G),E,H))) :-
   eval__3(E,A,F),
   smallStep__2((C,E),conf((G,H))).
smallStep__2((seq(A,B),C,D),conf((B,C,E))) :-
   smallStep__2((A,C,D),val(E)).
smallStep__2((seq(A,B),C,D),conf((seq_17_1(B,E),C,F))) :-
   smallStep__2((A,C,D),conf((E,F))).
smallStep__2((ifthenelse(A,B,C,D),E,F),conf((ifthenelse_18_2(D,B),F,E,F))) :-
   smallStep__2((A,F),val(1)).
smallStep__2((ifthenelse(A,B,C,D),E,F),conf((ifthenelse_18_1(B,D,G),E,F,H))) :-
   smallStep__2((A,F),conf((G,H))).
smallStep__2((ifthenelse(A,B,C,D),E,F),conf((ifthenelse_19_2(D,C),F,E,F))) :-
   smallStep__2((A,F),val(0)).
smallStep__2((ifthenelse(A,B,C,D),E,F),conf((ifthenelse_19_1(C,D,G),E,F,H))) :-
   smallStep__2((A,F),conf((G,H))).
smallStep__2((from(A,B,C,D),E,F),conf((from_20_2(A,B,C,D,B),E,E,F))) :-
   smallStep__2((A,F),val(1)).
smallStep__2((from(A,B,C,D),E,F),conf((from_20_1(A,B,C,D,G),E,F,H))) :-
   smallStep__2((A,F),conf((G,H))).
smallStep__2((loop(A,B,C,D),E,F),val(F)) :-
   smallStep__2((C,F),val(1)).
smallStep__2((loop(A,B,C,D),E,F),conf((loop_21_1(G),F,H))) :-
   smallStep__2((C,F),conf((G,H))).
smallStep__2((loop(A,B,C,D),E,F),conf((loop_22_2(C,A,B,D,D),E,E,F))) :-
   smallStep__2((C,F),val(0)).
smallStep__2((loop(A,B,C,D),E,F),conf((loop_22_1(A,B,C,D,G),E,F,H))) :-
   smallStep__2((C,F),conf((G,H))).
smallStep__2((add_3_1(A,B),C,D),conf((add_3_2(E,A),C))) :-
   smallStep__2((B,D),val(E)).
smallStep__2((add_3_1(A,B),C,D),conf((add_3_1(A,E),C,F))) :-
   smallStep__2((B,D),conf((E,F))).
smallStep__2((add_3_2(A,B),C),val(D)) :-
   smallStep__2((B,C),val(E)),
   D is A+E.
smallStep__2((add_3_2(A,B),C),conf((add_3_2(A,D),E))) :-
   smallStep__2((B,C),conf((D,E))).
smallStep__2((sub_4_1(A,B),C,D),conf((sub_4_2(E,A),C))) :-
   smallStep__2((B,D),val(E)).
smallStep__2((sub_4_1(A,B),C,D),conf((sub_4_1(A,E),C,F))) :-
   smallStep__2((B,D),conf((E,F))).
smallStep__2((sub_4_2(A,B),C),val(D)) :-
   smallStep__2((B,C),val(E)),
   D is A-E.
smallStep__2((sub_4_2(A,B),C),conf((sub_4_2(A,D),E))) :-
   smallStep__2((B,C),conf((D,E))).
smallStep__2((mul_5_1(A,B),C,D),conf((mul_5_2(E,A),C))) :-
   smallStep__2((B,D),val(E)).
smallStep__2((mul_5_1(A,B),C,D),conf((mul_5_1(A,E),C,F))) :-
   smallStep__2((B,D),conf((E,F))).
smallStep__2((mul_5_2(A,B),C),val(D)) :-
   smallStep__2((B,C),val(E)),
   D is A*E.
smallStep__2((mul_5_2(A,B),C),conf((mul_5_2(A,D),E))) :-
   smallStep__2((B,C),conf((D,E))).
smallStep__2((div_6_1(A,B),C,D),conf((div_6_2(E,A),C))) :-
   smallStep__2((B,D),val(E)).
smallStep__2((div_6_1(A,B),C,D),conf((div_6_1(A,E),C,F))) :-
   smallStep__2((B,D),conf((E,F))).
smallStep__2((div_6_2(A,B),C),val(D)) :-
   smallStep__2((B,C),val(E)),
   D is A/E.
smallStep__2((div_6_2(A,B),C),conf((div_6_2(A,D),E))) :-
   smallStep__2((B,C),conf((D,E))).
smallStep__2(('>_7_1'(A,B),C,D),conf(('>_7_2'(E,A),C))) :-
   smallStep__2((B,D),val(E)).
smallStep__2(('>_7_1'(A,B),C,D),conf(('>_7_1'(A,E),C,F))) :-
   smallStep__2((B,D),conf((E,F))).
smallStep__2(('>_7_2'(A,B),C),val(D)) :-
   smallStep__2((B,C),val(E)),
   eval__11(A,E,D).
smallStep__2(('>_7_2'(A,B),C),conf(('>_7_2'(A,D),E))) :-
   smallStep__2((B,C),conf((D,E))).
smallStep__2(('<_8_1'(A,B),C,D),conf(('<_8_2'(E,A),C))) :-
   smallStep__2((B,D),val(E)).
smallStep__2(('<_8_1'(A,B),C,D),conf(('<_8_1'(A,E),C,F))) :-
   smallStep__2((B,D),conf((E,F))).
smallStep__2(('<_8_2'(A,B),C),val(D)) :-
   smallStep__2((B,C),val(E)),
   eval__12(A,E,D).
smallStep__2(('<_8_2'(A,B),C),conf(('<_8_2'(A,D),E))) :-
   smallStep__2((B,C),conf((D,E))).
smallStep__2(('>=_9_1'(A,B),C,D),conf(('>=_9_2'(E,A),C))) :-
   smallStep__2((B,D),val(E)).
smallStep__2(('>=_9_1'(A,B),C,D),conf(('>=_9_1'(A,E),C,F))) :-
   smallStep__2((B,D),conf((E,F))).
smallStep__2(('>=_9_2'(A,B),C),val(D)) :-
   smallStep__2((B,C),val(E)),
   eval__13(A,E,D).
smallStep__2(('>=_9_2'(A,B),C),conf(('>=_9_2'(A,D),E))) :-
   smallStep__2((B,C),conf((D,E))).
smallStep__2(('=<_10_1'(A,B),C,D),conf(('=<_10_2'(E,A),C))) :-
   smallStep__2((B,D),val(E)).
smallStep__2(('=<_10_1'(A,B),C,D),conf(('=<_10_1'(A,E),C,F))) :-
   smallStep__2((B,D),conf((E,F))).
smallStep__2(('=<_10_2'(A,B),C),val(D)) :-
   smallStep__2((B,C),val(E)),
   eval__14(A,E,D).
smallStep__2(('=<_10_2'(A,B),C),conf(('=<_10_2'(A,D),E))) :-
   smallStep__2((B,C),conf((D,E))).
smallStep__2(('==_11_1'(A,B),C,D),conf(('==_11_2'(E,A),C))) :-
   smallStep__2((B,D),val(E)).
smallStep__2(('==_11_1'(A,B),C,D),conf(('==_11_1'(A,E),C,F))) :-
   smallStep__2((B,D),conf((E,F))).
smallStep__2(('==_11_2'(A,B),C),val(D)) :-
   smallStep__2((B,C),val(E)),
   eval__15(A,E,D).
smallStep__2(('==_11_2'(A,B),C),conf(('==_11_2'(A,D),E))) :-
   smallStep__2((B,C),conf((D,E))).
smallStep__2((logicaland_12_1(A,B),C,D),conf((logicaland_12_2(E,A),C))) :-
   smallStep__2((B,D),val(E)).
smallStep__2((logicaland_12_1(A,B),C,D),conf((logicaland_12_1(A,E),C,F))) :-
   smallStep__2((B,D),conf((E,F))).
smallStep__2((logicaland_12_2(A,B),C),val(D)) :-
   smallStep__2((B,C),val(E)),
   D is A/\E.
smallStep__2((logicaland_12_2(A,B),C),conf((logicaland_12_2(A,D),E))) :-
   smallStep__2((B,C),conf((D,E))).
smallStep__2((asg_14_1(A,B,C,D),E,F),val(G)) :-
   smallStep__2((D,F),val(H)),
   eval__7(C,B,H,E,I),
   eval__8(A,I,E,G).
smallStep__2((asg_14_1(A,B,C,D),E,F),conf((asg_14_1(A,B,C,G),E,H))) :-
   smallStep__2((D,F),conf((G,H))).
smallStep__2((seq_17_1(A,B),C,D),conf((A,C,E))) :-
   smallStep__2((B,D),val(E)).
smallStep__2((seq_17_1(A,B),C,D),conf((seq_17_1(A,E),C,F))) :-
   smallStep__2((B,D),conf((E,F))).
smallStep__2((ifthenelse_18_1(A,B,C),D,E,F),conf((ifthenelse_18_2(B,A),E,D,E))) :-
   smallStep__2((C,F),val(1)).
smallStep__2((ifthenelse_18_1(A,B,C),D,E,F),conf((ifthenelse_18_1(A,B,G),D,E,H))) :-
   smallStep__2((C,F),conf((G,H))).
smallStep__2((ifthenelse_18_2(A,B),C,D),conf((ifthenelse_18_3(E,A),C))) :-
   smallStep__2((B,D),val(E)).
smallStep__2((ifthenelse_18_2(A,B),C,D),conf((ifthenelse_18_2(A,E),C,F))) :-
   smallStep__2((B,D),conf((E,F))).
smallStep__2((ifthenelse_18_3(A,B),C),val(A)) :-
   smallStep__2((B,C),val(1)).
smallStep__2((ifthenelse_18_3(A,B),C),conf((ifthenelse_18_3(A,D),E))) :-
   smallStep__2((B,C),conf((D,E))).
smallStep__2((ifthenelse_19_1(A,B,C),D,E,F),conf((ifthenelse_19_2(B,A),E,D,E))) :-
   smallStep__2((C,F),val(0)).
smallStep__2((ifthenelse_19_1(A,B,C),D,E,F),conf((ifthenelse_19_1(A,B,G),D,E,H))) :-
   smallStep__2((C,F),conf((G,H))).
smallStep__2((ifthenelse_19_2(A,B),C,D),conf((ifthenelse_19_3(E,A),C))) :-
   smallStep__2((B,D),val(E)).
smallStep__2((ifthenelse_19_2(A,B),C,D),conf((ifthenelse_19_2(A,E),C,F))) :-
   smallStep__2((B,D),conf((E,F))).
smallStep__2((ifthenelse_19_3(A,B),C),val(A)) :-
   smallStep__2((B,C),val(0)).
smallStep__2((ifthenelse_19_3(A,B),C),conf((ifthenelse_19_3(A,D),E))) :-
   smallStep__2((B,C),conf((D,E))).
smallStep__2((from_20_1(A,B,C,D,E),F,G,H),conf((from_20_2(A,B,C,D,B),F,F,G))) :-
   smallStep__2((E,H),val(1)).
smallStep__2((from_20_1(A,B,C,D,E),F,G,H),conf((from_20_1(A,B,C,D,I),F,G,J))) :-
   smallStep__2((E,H),conf((I,J))).
smallStep__2((from_20_2(A,B,C,D,E),F,G),conf((loop(A,B,D,C),F,H))) :-
   smallStep__2((E,G),val(H)).
smallStep__2((from_20_2(A,B,C,D,E),F,G),conf((from_20_2(A,B,C,D,H),F,I))) :-
   smallStep__2((E,G),conf((H,I))).
smallStep__2((loop_21_1(A),B,C),val(B)) :-
   smallStep__2((A,C),val(1)).
smallStep__2((loop_21_1(A),B,C),conf((loop_21_1(D),B,E))) :-
   smallStep__2((A,C),conf((D,E))).
smallStep__2((loop_22_1(A,B,C,D,E),F,G,H),conf((loop_22_2(C,A,B,D,D),F,F,G))) :-
   smallStep__2((E,H),val(0)).
smallStep__2((loop_22_1(A,B,C,D,E),F,G,H),conf((loop_22_1(A,B,C,D,I),F,G,J))) :-
   smallStep__2((E,H),conf((I,J))).
smallStep__2((loop_22_2(A,B,C,D,E),F,G),conf((loop_22_3(D,A,B,C,B),F,H,H))) :-
   smallStep__2((E,G),val(H)).
smallStep__2((loop_22_2(A,B,C,D,E),F,G),conf((loop_22_2(A,B,C,D,H),F,I))) :-
   smallStep__2((E,G),conf((H,I))).
smallStep__2((loop_22_3(A,B,C,D,E),F,G,H),conf((loop_22_4(C,A,B,D,D),F,F,G))) :-
   smallStep__2((E,H),val(0)).
smallStep__2((loop_22_3(A,B,C,D,E),F,G,H),conf((loop_22_3(A,B,C,D,I),F,G,J))) :-
   smallStep__2((E,H),conf((I,J))).
smallStep__2((loop_22_4(A,B,C,D,E),F,G),conf((loop(A,D,C,B),F,H))) :-
   smallStep__2((E,G),val(H)).
smallStep__2((loop_22_4(A,B,C,D,E),F,G),conf((loop_22_4(A,B,C,D,H),F,I))) :-
   smallStep__2((E,G),conf((H,I))).
   
% eval__3(St,X,V) = find(St,X,V)

eval__3([(A,B)|C],A,B) :-
   true.
eval__3([(A,B)|C],D,E) :-
   D\==A,
   eval__3(C,D,E).
   
% eval__5(X,Y) = invert(X,Y)

eval__5(asg(var(A),B,C),asg(var(A),D,C)) :-
   eval__18(B,D).
eval__5(asgarray(array(A,B),C,D),asgarrary(array(A,B),E,D)) :-
   eval__18(C,E).
eval__5(if(A,B,C,D),if(D,E,F,A)) :-
   eval__5(B,E),
   eval__5(C,F).
eval__5(from(A,B,C,D),from(D,E,F,A)) :-
   eval__5(B,E),
   eval__5(C,F).
eval__5(call(A),uncall(A)) :-
   true.
eval__5(uncall(A),call(A)) :-
   true.
eval__5(skip,skip) :-
   true.
eval__5(seq(A,B),seq(C,D)) :-
   eval__5(A,C),
   eval__5(B,D).
   
% eval__6(X,Y) = negate(X,Y)

eval__6(true,false) :-
   true.
eval__6(false,true) :-
   true.
   
% arithmetic operations
eval__7(add,A,B,C,D) :-
   D is A+B.
eval__7(sub,A,B,C,D) :-
   D is A-B.
eval__7(xor,A,B,C,D) :-
   D is A#B.
eval__8(A,B,[(A,C)|D],[(A,E)|D]) :-
   E is B.
eval__8(A,B,[(C,D)|E],[(C,D)|F]) :-
   A\==C,
   eval__8(A,B,E,F).
eval__8(A,B,[],[(A,C)]) :-
   C is B.
eval__11(A,B,1) :-
   A>B.
eval__11(A,B,0) :-
   A=<B.
eval__12(A,B,1) :-
   A<B.
eval__12(A,B,0) :-
   A>=B.
eval__13(A,B,1) :-
   A>=B.
eval__13(A,B,0) :-
   A<B.
eval__14(A,B,1) :-
   A=<B.
eval__14(A,B,0) :-
   A>B.
eval__15(A,B,1) :-
   A==B.
eval__15(A,B,0) :-
   A\==B.
eval__17(A,[A|B]) :-
   true.
eval__17(A,[B|C]) :-
   eval__17(A,C).
eval__18(add,sub) :-
   true.
eval__18(sub,add) :-
   true.

\end{lstlisting}

\section{The Horn clause interpreter for big-step semantic rules}\label{secA3}

The interpreter used in the specialisation experiments is listed below.  
It applies contextual small-step semantics to big-step semantic rules, incorporating
folding of conjunctions, and the optional refinements described in Section \ref{optional}.

The suffix rules are generated before interpretation, by the call \texttt{auxRules}.

\begin{lstlisting}[basicstyle=\footnotesize\ttfamily]
:- module(ss_final,_).

:- use_module(chclibs(readprog)).
:- use_module(chclibs(common)).
:- use_module(instance).
:- use_module(bigStepSuffixRules).

go(F,A) :-
	readprog(F,P),
	auxRules(P,PAux),
	arrowArgs(A,S,V,P),
	run(conf(S),val(V),PAux).
	

run(val(V),val(V),_).
run(conf(S),C,P) :-
	smallStep(conf(S),C1,P),
	run(C1,C,P).
	
smallStep(conf(S),val(V),P) :-		% unit clause Rule [UNF0]
	arrowRule(_K,S,V,R,P),
	evalCondition(R,V,[],P).
smallStep(conf(S),C1,P) :-			% non-unit clause, singleton body Rule [UNF1]
	arrowRule(_K,S,V,R,P),
	evalCondition(R,V,[B],P),
	arrowArgs(B,S1,V,P),
	smallStep(conf(S1),C1,P).
smallStep(conf(S),C1,P) :-			% non-unit clause, doubleton body
	arrowRule(K,S,V,R,P),
	evalCondition(R,V,[B1|Bs],P),
	Bs\==[],
	arrowArgs(B1,S1,V1,P),
	smallStep(conf(S1),C3,P),
	nextStep(K,C3,V1,V,Bs,C1,P).

nextStep(_K,val(V1),V1,V,Bs,val(V),P) :-	% Rule [FOLD0_1]
	evalCondition(Bs,V,[],P).
nextStep(K,val(V1),V1,V,Bs,conf(S),P) :-	% Rule [FOLD0_2]
	evalCondition(Bs,V,Bs1,P),
	Bs1\==[],
	fold(K,A1,Bs1,P),
	arrowArgs(A1,S,V,P).
nextStep(K,conf(S1),V1,V,Bs,conf(S),P) :-	% Rule [FOLD1]
	arrowArgs(B3,S1,V1,P),
	fold(K,A1,[B3|Bs],P),
	arrowArgs(A1,S,V,P).
	
fold(_K,true,[],_).
fold(_K,B,[B],_).
fold(K-_,A,Bs,P) :-
	Bs=[_,_|_],
	member(auxcls(Cls,K),P),
	member(clause(Cl,K-_),Cls),	
	instanceOf(Cl,(A:-Bs)).
fold(K,A,Bs,P) :-
	Bs=[_,_|_],
	number(K),
	member(auxcls(Cls,K),P),
	member(clause(Cl,K-_),Cls),	
	instanceOf(Cl,(A:-Bs)).

%======================== 

arrowRule(K,S,V,R,P) :-
	arrowArgs(A,S,V,P),
	rule(K,A,R,P).
	

rule(K,A,Bs,P) :-
	member(clause(Cl,K),P),		
	instanceOf(Cl,(A:-Bs)).
rule(K,A,Bs,P) :-
	number(K),
	member(auxcls(Cls,K),P),
	member(clause(Cl,K-_),Cls),
	instanceOf(Cl,(A:-Bs)).
rule(K-J,A,Bs,P) :-
	member(auxcls(Cls,K),P),
	member(clause(Cl,K-J),Cls),
	J>0,		
	instanceOf(Cl,(A:-Bs)).
	
%========================
		
evalCondition([],_,[],_).
evalCondition([B1|R],_,[B1|R],P) :-
	bigStepPred(B1,P).
evalCondition([B|Bs],V,Bs1,P) :-
	otherPred(B,P),
	eval(B,P),
	evalCondition(Bs,V,Bs1,P).

eval(B,_) :-
	constraint(B),
	call(B).
eval(B,P) :-
	rule(_,B,B1,P),
	callPreds(B1,P).
	
callPreds([],_).
callPreds([B|Bs],P) :-
	eval(B,P),
	callPreds(Bs,P).
	
arrowArgs(A,Conf,V,P) :-
	bigStepPred(A,P),
	A=..[_|Args],
	append(S,[V],Args),
	list2Conj(S,Conf).	 % Final arg is value
	
bigStepPred(A,P) :- 
	rule(_,bigStepPred(A),_,P).

otherPred(A,P) :- 
	rule(_,otherPred(A),_,P).
otherPred(B,_) :-
	constraint(B).

constraint(_ = _).
constraint(_ < _).
constraint(_ > _).
constraint(_ =< _).
constraint(_ >= _).
constraint(_ is _).
constraint(_ =:= _).
constraint(_\==_).
constraint(_==_).
constraint(_\=_).


\end{lstlisting}

\end{appendices}

\end{document}